\def\A {\ensuremath{\mathbb{A}}}
\def\B {\ensuremath{\mathbb{B}}}
\def\C {\ensuremath{\mathbb{C}}}
\def\K {\ensuremath{\mathbf{k}}}
\def\N {\ensuremath{\mathbb{N}}}
\def\Q {\ensuremath{\mathbb{Q}}}
\def\R {\ensuremath{\mathbb{R}}}
\def\Z {\ensuremath{\mathbb{Z}}}
\newtheorem{Notation}{Notation}
\newtheorem{Theorem}{Theorem}
\newtheorem{Proposition}{Proposition}
\newtheorem{Definition}{Definition}
\newtheorem{Lemma}{Lemma}
\newtheorem{Remark}{Remark}
\newcommand{\lc}[1]{\mbox{{\rm lc}$(#1)$}}
\newcommand{\ideal}[1]{\langle#1\rangle}
\newcommand{\init}[1]{\mbox{{\rm init}$(#1)$}}
\newcommand{\mdeg}[1]{\mbox{{\rm mdeg}$(#1)$}}
\newcommand{\mvar}[1]{\mbox{{\rm mvar}$(#1)$}}
\newcommand{\res}[1]{\mbox{{\rm res}$(#1)$}}
\newcommand{\sat}[1]{\mbox{{\rm sat}$(#1)$}}
\newcommand{\coeff}[1]{\mbox{{\rm coeff}$(#1)$}}
\renewcommand{\min}[1]{\mbox{{\rm min}$(#1)$}}
\renewcommand{\max}[1]{\mbox{{\rm max}$(#1)$}}
\newcommand{\Maple}{{\sc  Maple}}
\newcommand{\RegularChains}{{\tt  Regu\-lar\-Chains}}
\renewcommand{\gcd}[1]{\mbox{{\rm gcd}$(#1)$}}
\newcommand{\discrim}[1]{\mbox{{\rm discrim}$(#1)$}}
\newcommand{\limit}[1]{\mbox{${\lim}(#1)$}}
\newcommand{\carr}[1]{\mbox{{\rm carr}$(#1)$}}
\newcommand{\ord}[1]{\mbox{{\rm ord}$(#1)$}}
\newcommand{\primpart}[1]{\mbox{{\rm primpart}$(#1)$}}
\newcommand{\NewtonPuiseux}[1]{\mbox{{\sf NewtonPuiseux}$(#1)$}}
\newcommand{\eval}[1]{\mbox{{\rm eval}$(#1)$}}
\newcommand{\LimitPointsAtZero}[1]{\mbox{{\sf LimitPointsAtZero}$(#1)$}}
\newcommand{\NewtonPolygon}[1]{\mbox{{\sf NewtonPolygon}$(#1)$}}
\newcommand{\NewPolynomial}[1]{\mbox{{\sf NewPolynomial}$(#1)$}}
\newcommand{\SegmentPoly}[1]{\mbox{{\sf SegmentPoly}$(#1)$}}
\newcommand{\NonzeroTerm}[1]{\mbox{{\sf NonzeroTerm}$(#1)$}}
\newcommand{\ConstructParametrization}[1]{\mbox{{\sf ConstructParametrization}$(#1)$}}
\providecommand{\abs}[1]{\lvert#1\rvert}
\newcommand{\citep}[1]{\mbox{ \cite{#1} }}
\newif\ifcomment
\newif\iflongversion
\newif\ifready
\begin{document}

\begin{center}
  {\Large\bf 
    An Algorithm for Computing the Limit Points of the Quasi-component of a Regular Chain
  }
\mbox{}\\[11pt]
{\large
  Parisa Alvandi, Changbo Chen, Marc Moreno Maza
}
\mbox{}\\[5pt] 
ORCCA, University of Western Ontario (UWO) \\
London, Ontario, Canada \\
{\tt palvandi@uwo.ca, changbo.chen@gmail.com, moreno@csd.uwo.ca}\\[5pt]
\end{center}

\begin{abstract}
For a regular chain $R$, we propose an algorithm
which computes the (non-trivial) limit points of the quasi-component
of $R$, that is, the set $\overline{W(R)} \setminus W(R)$.
Our procedure relies on Puiseux series expansions
and does not require to compute a system
of generators of the saturated ideal of $R$.
We focus on the case where this saturated ideal has dimension one
and we discuss extensions of this work in higher dimensions.
We provide experimental results illustrating the benefits 
of our algorithms.
\end{abstract}

\ifcomment
Recent advances on the theory of regular chains
In this talk, two recent advances on the theory of regular chains
will be discussed. We shall start with a theoretical question,
summarized below and which is one of the facets of the so-called
Ritt problem. A software presentation of the new support for Maple's 
solve command with semi-algebraic systems will conclude the talk.

Given a regular chain R, we propose an algorithm which computes the 
non-trivial limit points of the quasi-component Q of R, that is,
the points that belong to the Zariski closure of Q but not to Q.
Our procedure relies on Puiseux series expansions and avoids the
computation of a system of generators of the saturated ideal of R.
We focus on the case where this saturated ideal has dimension one
and we discuss extensions of this work in higher dimensions.
This is a joint work with Parisa Alvandi and Marc Moreno Maza.
\fi

\section{Introduction}

The theory of regular chains, since its introduction
by J.F. Ritt~\cite{Ritt32}, 
has been applied successfully in many areas including
\iflongversion
parametric algebraic systems~\cite{ChGa91a},
\fi
differential systems~\cite{ChGa93,BLOP95,Hubert00},
difference systems~\cite{GaoHoevenLuoYuan2009},
intersection multiplicity~\cite{DBLP:conf/casc/MarcusMV12},
unmixed decompositions~\cite{Kalk98} and 
primary decomposition~\cite{ShiYo96}
of polynomial ideals,
cylindrical algebraic decomposition~\cite{CMXY09},
parametric~\cite{yhx01}
and non-parametric~\cite{DBLP:journals/jsc/ChenDMMXX13} semi-algebraic systems.
Today, regular chains are at the core of 
algorithms for triangular decomposition of polynomial systems,
which are available in several software   
packages~\cite{LeMoXi04,DingkangWang,EpsilonWang}.
Moreover, these algorithms provide back-engines for 
computer algebra system front-end solvers,
such as {\sc Maple}'s {\tt solve} command.

One of the algorithmic strengths of the theory of regular chains
is its {\em regularity test} procedure. Given a 
polynomial $p$ and a regular chain $R$, both 
in a multivariate polynomial ring ${\K}[X_1, \ldots, X_n]$
over a field {\K}, this procedure computes regular chains 
$R_1, \ldots, R_e$ such that $R_1, \ldots, R_e$
is a decomposition of $R$ in some technical
sense~\footnote{The radical of the saturated
ideal of $R$ is equal to the intersection of the radicals of the saturated
ideals of $R_1, \ldots, R_e$.} and for each $1 \leq i \leq e$
the polynomial $p$ is either null or regular modulo
the saturated ideal of $R_i$.
Thanks to the D5 Principle~\cite{D5}, this regularity test
avoids factorization into irreducible polynomials
and involves only polynomial GCD and resultant computations.

One of the technical difficulties of this theory, however, 
is the fact that regular chains do not fit well in the 
``usual algebraic-geometric dictionary'' (Chapter 4, \cite{DCJLDOS97}).
Indeed, the ``good'' zero set encoded by a regular chain $R$
is a constructible set $W(R)$, called the {\em quasi-component}
of $R$, which does not correspond exactly to 
the ``good'' ideal  encoded by $R$, namely \sat{R}, the {\em saturated
ideal} of $R$. In fact, the affine variety defined by \sat{R}
equals $\overline{W(R)}$, that is, the Zariski closure of $W(R)$.

For this reason, a decomposition algorithm, such as the one
of M.~Kalkbrener~\cite{Kalk98} (which, for an input polynomial
ideal ${\cal I}$ computes regular chains 
$R_1, \ldots, R_e$ such that $\sqrt{{\cal I}}$ equals
the intersection of the radicals of the saturated
ideals of $R_1, \ldots, R_e$) can not be seen as a decomposition
algorithm for the variety $V({\cal I})$.
Indeed, the output of Kalkbrener's algorithm yields 
$V({\cal I}) = \overline{W(R_1)} \, \cup \, \cdots \, \cup \,  \overline{W(R_e)}$
while a decomposition of the form 
$V({\cal I}) = {W(R_1)} \, \cup \, \cdots \, \cup \,  {W(R_f)}$
would be more explicit.

Kalkbrener's decompositions, and in fact all decompositions
of differential ideals~\cite{ChGa93,BLOP95,Hubert00} 
raise another notorious issue: the {\em Ritt problem},
stated as follows.
Given two regular chains (algebraic or differential) $R$ and $S$,
check whether the inclusion of saturated ideals $\sat{R} \subseteq \sat{S}$
holds or not.
In the algebraic case, this inclusion can be tested by computing
a set of generators of $\sat{R}$ , using Gr\"obner bases.
In practice, this solution is too expensive
for the purpose of removing redundant components in Kalkbrener's decompositions
and only some criteria are applied~\cite{DBLP:journals/jsc/LemaireMPX11}.
In the differential case , there has not even an algorithmic solution.

In the algebraic case, both issues would be resolved if one
would have a practically efficient procedure with the
following specification: for the regular chain $R$
compute regular chains $R_1, \ldots, R_e$ such that
we have $\overline{W(R)} = W(R_1) \, \cup \, \cdots \, \cup \, W(R_e)$.
If in addition, such procedure does not require a system of generators of $\sat{R}$,
this might suggest a solution in the differential
version of the Ritt problem.

In this paper, we propose a solution to this
algorithmic quest, in the algebraic case. 
To be precise, our procedure computes the 
{\em non-trivial limit points} of the quasi-component $W(R)$, that is,
the set $\limit{W(R)} := \overline{W(R)} \setminus W(R)$.
\iflongversion
This turns out to be
 $\overline{W(R)} \, \cap \, V(h_R)$, where $V(h_R)$
is the hypersurface defined by the product of the initials
of $R$.
\fi
We focus on the case where the saturated ideal of $R$  has dimension one.
In Section~\ref{sec:discussion}, we sketch a solution in higher dimension.

When the regular chain $R$ consists of a single polynomial $r$,
primitive w.r.t. its main variable,
one can easily check that $\limit{W(R)} = V(r, h_R)$ holds.
Unfortunately, there is no generalization of this result
when $R$ consists of several polynomials, unless $R$ enjoys
remarkable properties, such as being a 
{\em primitive regular chain}~\cite{DBLP:journals/jsc/LemaireMPX11}.
To overcome this difficulty, it becomes necessary to view $R$
as a ``parametric representation'' of the quasi-component $W(R)$.
In this setting, the points of $\limit{W(R)}$  can be computed
as limits (in the usual sense of the Euclidean 
topology~\footnote{This identification of the closures of $W(R)$ in Zariski
topology and the Euclidean topology holds when {\K} is {\C}.})
of sequences of points along  ``branches'' (in the sense
of the theory of algebraic curves) of $W(R)$ .
It turns out that these limits can be obtained as constant terms 
of convergent Puiseux series defining the ``branches''  of $W(R)$
in the neighborhood of the points of interest.

Here comes the main technical difficulty of this approach.
When computing a particular point of $\limit{W(R)}$,
one needs to follow one branch per defining equation of $R$.
Following a branch means computing a truncated 
Puiseux expansion about a point.
Since the equation of $R$ defining a given variable, say $X_j$,
depends on the equations of $R$ defining the variables $X_{j-1}, X_{j-2}, \ldots$, 
the truncated Puiseux expansion for  $X_j$
is defined by an equation whose coefficients
involve the truncated Puiseux expansions for $X_{j-1}, X_{j-2}, \ldots$.

From Sections~\ref{sec:theory}  to~\ref{sec:principle}, we show
that this principle indeed computes the desired limit points.
In particular, we introduce the notion of a 
{\em system of Puiseux parametrizations of a regular chain}, see Section~\ref{sec:theory}.
This allows to state in Theorem~\ref{thrm:PuiseuxParamRC} 
a concise formula for $\limit{W(R)}$ in terms of this latter notion.
Then, we estimate to which accuracy 
one needs to effectively compute such a system of Puiseux parametrizations
in order to deduce $\limit{W(R)}$, see Theorem~\ref{Theorem:bound-general}
in Section~\ref{sec:bounds}.

In Section~\ref{section}, we report on a preliminary implementation
of the algorithms presented in this paper.
We evaluate our code by applying it to the question of
removing redundant components in Kalkbrener's decompositions
and observe the benefits of this strategy.

In order to facilitate the presentation of those technical
materials, we dedicate Section~\ref{sec:limitpoints} to the case of
regular chains in $3$ variables.
Section~\ref{sec:preliminaries} briefly reviews
notions from the theories of regular chains and
algebraic curves.
We conclude this introduction with a detailed example.

Consider the regular chain 
 $R=\{r_1,r_2\}\subset \K[X_1,X_2,X_3]$ 
with  $r_1=X_1X_2^2+X_2+1,r_2=(X_1+2)X_1X_3^2+(X_2+1)(X_3+1)$. 
Then, we have $h_R=X_1(X_1+2)$. 
To determine \limit{R}, we need to compute Puiseux series
expansions of $r_1$ about $X_1=0$ and $X_1=-2$.
We start with $X_1=0$. The two Puiseux expansions of $r_1$ about $X_1=0$ 
are:
$$\begin{array}{c}[X_1=T,X_2={\frac {-{T}^{2}-T}{T}}+O(T^2)],$$\\ 
$$ [X_1=T,X_2={\frac{-1+{T}^{2}+T}{T}}+O(T^2)].\end{array}$$
The second expansion does not result in a new limit point. 
After, substituting the first expansion into $r_2$, we have:
$$\begin{array}{ll}\hspace*{-0.2cm}r_2'&=  r_2(X_1=T,X_2={\frac {-{T}^{2}-T}{T}}+O(T^2),X_3)\\&= T\left( \left( T+2 \right) {X_3}^{2}+ \left( O(T^3) {-5\,{T}^{2}-2\,{T}-1}+1 \right)  \left( X_3+1 \right) \right). \end{array}$$ 
Now, we compute Puiseux series
expansions of $r_2'$ which are
$$\begin{array}{c}[T=T,X_3=1+T+O(T^2)],$$\\$$[T=T,  X_3=-1/2-1/4\,T+O(T^2)].\end{array}$$
So the regular chains $\{X_1,X_2+1,X_3-1\}$ and $\{X_1,X_2+1,X_3+1/2\}$
give the limit points of $W(R)$ about $X_1=0$.

Next, we consider $X_1=-2$.
We compute Puiseux series expansions of $r_1$ about the point $X_1=-2$. We have:
$$\begin{array}{c}[X_1=T-2,X_2=1+1/3\,T+O(T^2)],$$\\$$  [X_1=T-2,X_2=-1/2-1/12\,T+O(T^2)].\end{array}$$ 
After substitution into $r_2$, we obtain:
\begin{equation*}
\begin{array}{c}
\begin{array}{ll}r_{12}' &= r_2(X_1=T-2,X_2=1+1/3\,T+O(T^2),X_3) \\&= \left( T-2 \right) T{X_3}^{2}+ \left( 2+1/3\,T+O(T^2) \right)  \left( X_3+1 \right)\end{array} \\
\begin{array}{ll}r_{22}'&= r_2([X_1=T-2,X_2=-1/2-1/ 12\,T+O(T^2)) \\ &= \left( T-2 \right) T{X_3}^{2}+ \left( 1/2-1/12\,T+O(T^2) \right)  \left( X_3+1 \right).\end{array}
\end{array}
\end{equation*}
So those Puiseux expansions of $r_{12}'$ and $r_{22}'$ about $T=0$ which result in a limit point are as follows:
\begin{itemize}
\item[i)]  for $r_{12}'$: $[T=T,X_3=  \frac{{T}^{2}-T}{{T}}+O(T^2)]$
\item[ii)]for $r_{22}'$:  $ [T=T,X_3= \frac{4
\,{T}^{2}-T }{T}+O(T^2)]$
\end{itemize}
Thus, the limit points of $R$ about the point $X_1=-2$ can be 
represented by the regular chains $\{ X_1+2,X_2-1,X_3+1\}$ and $\{X_1+2,X_2+1/2,X_3+1\}$.

One can check that a triangular decomposition of the 
system $R \,  \cup \, \{X_1\}$
is $\{X_2+1,X_1\}$ and, thus, does not yield $\limit{W(R)} \, \cap \, V(X_1)$,
but in fact a superset of it.

\section{Preliminaries}
\label{sec:preliminaries}

This section is a brief review of various notions
from the theories of regular chains,  algebraic curves
and topology.
For these latter subjects, our references are
the textbooks of R.J. Walker~\cite{RW78},
G. Fischer~\cite{GF01} and J. R. Munkres~\cite{Mun00}.
The notations and hypotheses introduced in this section
are used throughout the sequel of the paper.

\smallskip\noindent{\small \bf Multivariate polynomials.}
Let ${\K}$ be a field which is algebraically closed.
Let $X_1 < \cdots < X_s$ be $s \geq 1$ ordered variables.
We denote by ${\K}[X_1, \ldots, X_s]$ the ring of 
polynomials in the  variables $X_1, \ldots, X_s$ 
and with coefficients in ${\K}$.
For a non-constant polynomial $p\in {\K}[X_1, \ldots, X_s]$,
the grea\-test variable in $p$ 
is called {\em main variable} of $p$, denoted by $\mvar{p}$,
and the leading coefficient of $p$ w.r.t. $\mvar{p}$
is called {\em initial} of $p$, denoted by $\init{p}$.

\smallskip\noindent{\small \bf Zariski topology.}
We denote by ${\A}^s$ the {\em affine $s$-space} over ${\K}$.
An {\em affine variety} of  ${\A}^s$ 
is the set of common zeroes of a collection 
$F \subseteq {\K}[X_1, \ldots, X_s]$ of polynomials.
The {\em Zariski topology} on ${\A}^s$  is the topology 
whose closed sets are the affine varieties of ${\A}^s$.
The {\em Zariski closure} of a subset $W \subseteq {\A}^s$
is the intersection of all affine varieties containing $W$.
This is also the set of common zeroes of the polynomials
in ${\K}[X_1, \ldots, X_s]$ vanishing at any point of $W$.

\smallskip\noindent{\small \bf Relation between Zariski topology
and the Euclidean topology.}
When ${\K} = {\C}$, the affine space ${\A}^s$ is endowed
with both Zariski topology and the Euclidean topology.
The basic open sets of the Euclidean topology are the 
balls while the basic open sets of Zariski topology
are the complements of hypersurfaces.
\iflongversion
A Zariski closed (resp. open) set is closed (resp. open) 
in the Euclidean topology on ${\A}^s$.
The following properties emphasize the fact that
Zariski topology is coarser than the Euclidean topology:
every nonempty Euclidean open set is Zariski dense and
every nonempty Zariski open set is dense in the 
Euclidean topology on ${\A}^s$.
However, the closures of a constructible set in Zariski topology
and the Euclidean topology are equal.
More formally, we have the following (Corollary 1 in I.10 of~\cite{MUM99}) key result.
\else
While Zariski topology is coarser than the Euclidean topology,
 we have the following (Corollary 1 in I.10 of~\cite{MUM99}) key result.
\fi
Let $V \subseteq {\A}^s$ be an irreducible affine variety
and $U \subseteq V$ be open in the Zariski topology
induced on $V$.
Then the closure of $U$ in Zariski topology
and the closure of $U$ in the Euclidean topology
are both equal to $V$.

\smallskip\noindent{\small \bf Limit points.}
Let $(X, {\tau})$ be a topological space.
A point $p \in X$ is a {\em limit} of a 
sequence $(x_n, n\in {\N})$ of points of $X$ if, 
for every neighborhood $U$ of $p$, 
there exists an $N$ such that, for every $n \geq N$, 
we have $x_n \in U$; when this holds we write 
$\lim_{n \rightarrow  \infty} \, x_n = p$.
If $X$ is a Hausdorff space 
then limits of sequences are unique, when they exist.
Let $S \subseteq X$ be a subset.
A point $p \in X$ is a {\em limit point} of $S$ if every neighborhood 
of $p$ contains at least one point of $S$ different from $p$ itself.
\iflongversion
Equivalently, $p$ is a limit point of $S$ if it 
is in the closure of $S \setminus \{p\}$.
In addition, the closure of $S$ is equal to the union
of $S$ and the set of its limit points.
\fi
If the space $X$ is sequential, 
and in particular if $X$ is a metric space,
the point $p$ is a  limit point of $S$ if and only 
if there exists a  sequence $(x_n, n\in {\N})$ of points of 
$S \setminus \{ p \}$ with $p$ as limit.
In practice, the ``interesting'' limit points of $S$
are those which do not belong to $S$.
For this reason, we call
such limit points {\em non-trivial} and we denote
by $\limit{S}$ the set of non-trivial limit points of $S$.

\smallskip\noindent{\small \bf Regular chain.}
A set $R$ of non-constant polynomials in ${\K}[X_1, \ldots, X_s]$ is called 
a {\em triangular set},
if for all $p,q \in R$ with $p \neq q$ we have $\mvar{p} \neq \mvar{q}.$
For a nonempty triangular set $R$, we define the 
{\em saturated ideal} \sat{R} of $R$
to be the ideal $\ideal{R}:h_R^{\infty}$, 
where $h_R$ is the product of the initials of the polynomials in $R$. 
The empty set is also regarded as a triangular set, whose saturated ideal
is the trivial ideal $\langle 0 \rangle$.
From now on, $R$ denotes a triangular set of ${\K}[X_1, \ldots, X_s]$.
The ideal \sat{R} has several properties, in particular
it is  unmixed~\cite{BLM01c}.
We denote its height by $e$, thus  \sat{R} has dimension $s - e$. 
Without loss of generality, we assume  that 
${\K}[X_1, \ldots, X_{s-e}] \cap \sat{R}$ is the trivial
ideal $\langle 0 \rangle$.
For all  $1 \leq i \leq e$, 
we denote by $r_i$ the polynomial of $R$ whose
main variable is $X_{i+s-e}$ and by $h_i$ the initial of $r_i$.
Thus $h_R$ is the product  $h_1 \cdots h_{e}$.
We say that $R$ is a {\em regular chain}
whenever $R$ is empty or $\{ r_1, \ldots, r_{e-1} \}$
is a regular chain and $h_e$ is regular modulo
the saturated ideal \sat{\{ r_1, \ldots, r_{e-1} \}}.
The regular chain $R$ is said {\em strongly normalized}
whenever  $h_R \in {\K}[X_1, \ldots, X_{s-e}]$ holds.
If $R$ is not strongly normalized, one can compute
a regular chain $N$ which is strongly normalized
and such that $\sat{R} = \sat{N}$ and
$V(h_N) = V(\widehat{h_R})$ both hold, 
where $\widehat{h_R}$ is the iterated resultant
of $h_R$ w.r.t $R$. See~\cite{DBLP:journals/jsc/ChenM12}.

\smallskip\noindent{\small \bf Limit points of the 
quasi-component of a regular chain.}
We denote by $W(R) := V(R) \setminus V(h_R)$ 
the {\em quasi-component} of $R$,
that is, the common zeros of $R$ that do not
cancel $h_R$.
The above discussion implies that
the closure of $W(R)$ in Zariski topology
and the closure of $W(R)$ in the Euclidean topology
are both equal to $V(\sat{R})$, that is, the
affine variety of $\sat{R}$.
We denote by $\overline{W(R)}$ this common closure.
We call {\em limit points} of $W(R)$
the elements of $\limit{W(R)}$.

\ifready
\smallskip\noindent{\small \bf Weak primitivity, primitive regular chain.}
Let {\B} be a commutative ring with unity.
Let $p=a_0+\cdots+a_dx^d\in {\B}[X]$ 
be a univariate polynomial over ${\B}$ with degree $d\geq 1$. 
We say that $p$ is {\em  weakly primitive} if
for any $\beta\in {\B}$ such that $a_d$ divides $\beta a_i$,
for all $0\leq i\leq d-1$, the element  $a_d$ divides $\beta$ 
as well.
The notion of weak primitivity, introduced in~\cite{DBLP:journals/jsc/LemaireMPX11} 
is a generalization of the ordinary notion
of primitivity over a unique factorization domain (UFD).
The regular chain $R = \{ r_1, \ldots, r_e \}$ is 
said {\em primitive} if for all
$1\leq j \leq e$, the polynomial $r_j$ is weakly primitive 
in ${\B}_j[X_j]$ where ${\B}_j$ is the residue class ring defined by
${\B}_j={\K}[X_1,\ldots,X_{j-1}]/ \langle r_1, \ldots, r_{j-1} \rangle$.
Note that this definition implies that $r_1$ is primitive (in the usual sense)
as a polynomial over ${\K}[X_1,\ldots,X_{s-e}]$.
The notion of primitive regular chains is motivated by the
following result.
The regular chain $R$ is primitive if and only if $R$ 
generates its saturated ideal, that is, $\langle R \rangle = \sat{R}$ holds.
\fi

\smallskip\noindent{\small \bf Rings of formal power series.}
Recall that ${\K}$ is an algebraically closed field.
From now on, we further assume that ${\K}$ is  topologically complete. 
\iflongversion
Hence ${\K}$ may be the field ${\C}$ of complex
numbers but not the algebraic closure of the field ${\Q}$ of rational numbers.
\fi
We denote by ${\K}[[X_1, \ldots, X_s]]$ and
${\K} \langle X_1, \ldots, X_s \rangle$ the rings of formal
and convergent power series in $X_1, \ldots, X_s$ with
coefficients in ${\K}$.
\iflongversion
Note that the ring ${\K} \langle X_1, \ldots, X_s \rangle$
is a subring of ${\K}[[X_1, \ldots, X_s]]$.
\fi
When $s=1$, we write $T$ instead of $X_1$.
\iflongversion
Thus ${\K}[[T]]$ and ${\K} \langle T \rangle$
are the rings of formal
and convergent univariate power series in $T$ and 
coefficients in ${\K}$.
\fi
For $f \in {\K}[[X_1, \ldots, X_s]]$, its {\em order} is defined by
\iflongversion
\begin{equation*}
{\rm ord}(f) = \left\{ \begin{array}{lr}
                        {\rm min} \{  d \mid f_{(d)} \neq 0 \} & {\rm if} \ f \neq 0, \\
                                      \infty                   & {\rm if} \ f = 0.
                        \end{array}
                 \right.
\end{equation*}
where $f_{(d)}$ is the {\em homogeneous part} of $f$ in degree $d$.
\else
${\rm min} \{  d \mid f_{(d)} \neq 0 \}$
if $f \neq 0$ and by $ \infty  $ otherwise,
where $f_{(d)}$ is the {\em homogeneous part} of $f$ in degree $d$.
\fi 
\iflongversion
Recall that ${\K}[[X_1, \ldots, X_s]]$ is topologically
complete for Krull Topology and 
that ${\K} \langle X_1, \ldots, X_s \rangle$ is a Banach Algebra
for the norm defined by
${ \parallel f \parallel }_{\rho} = {\Sigma}_e \, |a_e| {\rho}^e$
where $f = {\Sigma}_e \, a_e X^e \in {\K}[[ X_1, \ldots, X_s ]]$
and ${\rho} = ({\rho}_1, \ldots, {\rho}_s)  \in {\R}_{> 0}^s$.
\else
Recall that ${\K}[[X_1, \ldots, X_s]]$ is topologically
complete for Krull Topology and 
that ${\K} \langle X_1, \ldots, X_s \rangle$ is a Banach Algebra.
\fi
We denote by ${\cal M}_s $
the only maximal ideal of ${\K}[[X_1, \ldots, X_s]]$,
that is, 
\iflongversion
\begin{equation*}
{\cal M}_s= \{ f \in {\K}[[X_1, \ldots, X_s]] \mid {\rm ord}(f)  \geq 1 \}.
\end{equation*}
\else
${\cal M}_s= \{ f \in {\K}[[X_1, \ldots, X_s]] \mid {\rm ord}(f)  \geq 1 \}.$
\fi
Let $f \in {\K}[[ X_1, \ldots, X_s ]]$ with $f \neq 0$.
Let $k \in {\N}$.
We say that $f$ is  
(1) {\em general} in $X_s$ if $f \neq 0 \mod{ {\cal M}_{s-1}}$,
(2) {\em general} in $X_s$ of order $k$ if we have 
      ${\rm ord}(  f  \mod{ {\cal M}_{s-1} }) = k$.

\ifready
\smallskip\noindent{\small \bf Weierstrass polynomial.}
Assume furthermore that $f = \sum_{j=0}^{k} \, f_j X_s^j$ holds
with 
$f_j \in {\K} [[ X_1, \ldots, X_{s-1} ]]$ for $j = 0 \cdots k$
and $f_k \neq 0$.
Then, we say that $f$ is a {\em Weierstrass polynomial} if 
$f_0, \ldots, f_{k-1} \in {\cal M}_{s-1}$
and $f_k =1$ both hold.

\smallskip\noindent{\small \bf Weierstrass preparation theorem.}
Let $f \in {\K}[[ X_1, \ldots, X_s ]]$ with 
$f \not\equiv 0 \mod{ {\cal M}_{s-1}}$.
Then, there exists a unit ${\alpha} \in {\K}[[ X_1, \ldots, X_s ]]$,
an integer $d \geq 0$
and a Weierstrass polynomial $p$ of degree $d$ such that we have
$f = {\alpha} p$.
Further, this expression for $f$ is unique.
Moreover, if $f \in {\K}\langle X_1, \ldots, X_s \rangle$ 
holds then 
${\alpha} \in {\K}\langle X_1, \ldots, X_s \rangle$ and
$p \in {\K}\langle X_1, \ldots, X_{s-1} \rangle [ X_s]$ 
both hold.

\smallskip\noindent{\small \bf Hensel Lemma.}
Let $f = X_s^k + a_1 X_s^{k-1} + \cdots + a_k$
with $a_k, \ldots, a_1 \in {\K} \langle X_1, \ldots, X_{s-1} \rangle$.
Let $k_1, \ldots, k_r$ be 
      positive integers 
      and $c_{1}, \ldots, c_{r}$  be 
      pairwise distinct elements of ${\K}$
such that we have
$
f \equiv (X_s - c_1)^{k_1}  \cdots (X_s - c_r)^{k_r}
\ \mod{ {\cal M}_{s-1} }.
$
Then, there exist 
$f_1, \ldots, f_r \in {\K} \langle X_1, \ldots, X_{s-1} \rangle [X_s]$ 
all monic in $X_s$ s.t. we have
(1) $f = f_1 \cdots f_r$,
(2) ${\deg}(f_j, X_s) = k_j$, for all $j = 1, \ldots, r$,
(3) $f_j  \equiv (X_s - c_j)^{k_j} \ \mod{ {\cal M}_{s-1} }$, 
     for all $j = 1, \ldots, r$.

\fi

\smallskip\noindent{\small \bf Formal Puiseux series.}
We denote by
${\K} [[ T^{*} ]] \ = \  
                 \bigcup_{n=1}^{\infty} \, {\K} [[ T^{\frac{1}{n}} ]]$
the ring of {\em formal Puiseux series}.
For a fixed ${\varphi} \in {\K} [[ T^{*} ]]$, 
there is an $n \in {\N}_{>0}$
such that ${\varphi} \in {\K} [[ T^{\frac{1}{n}} ]]$. Hence
$
{\varphi}  = \sum_{m=0}^{\infty} \, a_m T^{\frac{m}{n}}$, 
{\rm where} $\ a_m \in {\K}$.
We call {\em order of} ${\varphi}$ 
the rational number defined by
$
{\rm ord}( {\varphi} ) \ = \ {\rm min} \{ \frac{m}{n} \ \mid \  a_m \neq  0 \}
\geq 0.
$
We denote by ${\K} (( T^{*} ))$ 
the quotient field of ${\K} [[ T^{*} ]]$.

\smallskip\noindent{\small \bf Convergent Puiseux series.}
Let ${\varphi} \in {\C} [[ T^{*} ]]$ and $n \in {\N}$ 
such that ${\varphi}=f(T^{\frac{1}{n}})$ with $f\in{\C}[[T]]$ holds.
We say that the Puiseux series ${\varphi}$ is {\em convergent}
if we have ${f} \in {\C} \langle T \rangle$.
Convergent  Puiseux series form an integral
domain denoted by ${\C} \langle T^{*} \rangle$;
its quotient field is denoted 
by ${\C} ( \langle T^{*} \rangle )$.
For every ${\varphi} \in {\C}(( T^{*} ))$, there exist
$n \in {\Z}$, $r \in {\N}_{>0}$ and 
a sequence of complex numbers $a_n, a_{n+1}, a_{n+2}, \ldots$ such that
we have 
\iflongversion
\begin{equation*}
{\varphi} \ = \ \sum_{m=n}^{\infty} \, a_m T^{\frac{m}{r}} \ \ {\rm and}  \ \
a_n \neq 0.
\end{equation*}
\else
${\varphi} \ = \ \sum_{m=n}^{\infty} \, a_m T^{\frac{m}{r}}$ and $a_n \neq 0.$
\fi 
Then, we define ${\rm ord}({\varphi}) = \frac{n}{r}.$

\smallskip\noindent{\small \bf Puiseux Theorem.}
If {\K} has characteristic zero, 
the field ${\K} (( T^{*} ))$ is the algebraic closure of the field of 
formal Laurent series over  ${\K}$.
Moreover, if ${\K} = {\C}$, 
the field ${\C} ( \langle T^{*} \rangle )$ 
is algebraically closed as well.
From now on, we assume ${\K} = {\C}$.

\smallskip\noindent{\small \bf Puiseux expansion.}
Let $\B=\C((X^{*}))$ or ${\C} ( \langle X^{*} \rangle )$.
Let $f\in\B[Y]$, where $d := \deg(f, Y) >0$.
Let $h := \lc{f, Y}$.
According to Puiseux Theorem, there exists ${\varphi_i}\in \B$, 
$i=1,\ldots,d$,
such that $\frac{f}{h}=(Y-\varphi_1)\cdots(Y-\varphi_d)$.
We call $\varphi_1, \ldots, \varphi_d$
the {\em Puiseux expansions} of $f$ at the origin.

\smallskip\noindent{\small \bf Puiseux parametrization.}
Let $f\in\C\langle X \rangle[Y]$.
A parametrization of $f$ is a 
pair $(\psi(T), \varphi(T))$
of elements of $\C\langle T \rangle$ for some new variable $T$, 
such that (1) $f(\psi(T), \varphi(T))=0$ holds in $\C\langle T \rangle$,
(2) we have $0<\ord{\psi(T)}$, and
(3) $\psi(T)$ and 
$\varphi(T)$ are not both in $\C$.
The parametrization $(\psi(T), \varphi(T))$
is {\em irreducible} if there is no integer $k>1$ such that
both $\psi(T)$ and $\varphi(T)$ are in $\C\langle T^k\rangle$.
We call an irreducible parametrization 
$(\psi(T), \varphi(T))$ of $f$ a {\em Puiseux parametrization}
of $f$, if there exists a positive integer $\varsigma$
such that $\psi(T)=T^{\varsigma}$. 
The index $\varsigma$ is called the {\em ramification index} of 
the parametrization $(T^{\varsigma}, \varphi(T))$.
It is intrinsic to $f$ and $\varsigma\leq \deg(f, Y)$.
Let $z_1,\ldots,z_{\varsigma}$ denote the primitive
roots of unity of order $\varsigma$ in $\C$.
Then $\varphi(z_iX^{1/\varsigma})$, for $i=1,\ldots,\varsigma$, are $\varsigma$
Puiseux expansions of $f$.


\medskip

We conclude this section by a few lemmas which
are immediate consequences of the above review.

\begin{Lemma}
\label{Lemma:limitWR1}
We have: $\limit{W(R)} = \overline{W(R)} \cap V(h_R)$.
In particular, $\limit{W(R)}$ is either empty or an affine variety
of dimension $s-e-1$.
\end{Lemma}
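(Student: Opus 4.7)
The plan is to establish the set-theoretic identity first, then read off the dimension from standard properties of $\sat{R}$.

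For the equality, unfold the definitions. The preliminaries give $\limit{W(R)} = \overline{W(R)} \setminus W(R)$ and $W(R) = V(R) \setminus V(h_R)$. Since $V(R)$ is Zariski closed and contains $W(R)$, it contains $\overline{W(R)}$ as well. Hence a point $p \in \overline{W(R)}$ fails to lie in $W(R)$ if and only if $p \in V(h_R)$, which is exactly the claimed identity. Using $\overline{W(R)} = V(\sat{R})$, one rewrites
\[
\limit{W(R)} \;=\; V(\sat{R}) \cap V(h_R) \;=\; V(\sat{R} + \langle h_R \rangle),
\]
so $\limit{W(R)}$ is indeed an affine variety.

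For the dimension claim, recall from the preliminaries that $\sat{R}$ is unmixed of height $e$, so every irreducible component of $V(\sat{R})$ has dimension $s-e$. The defining identity $\sat{R} = \langle R \rangle : h_R^{\infty}$ forces $h_R$ to avoid every associated prime of $\sat{R}$, and thus to be a non-zero-divisor modulo $\sat{R}$. Working in each quotient $\K[X_1,\ldots,X_s]/\p$ indexed by a minimal prime $\p$ of $\sat{R}$, the image of $h_R$ is either a unit or a non-zero non-unit; by Krull's Hauptidealsatz the zero locus of $h_R$ on that component is either empty or cuts out a pure codimension-one subvariety. Taking the union over the components of $V(\sat{R})$ yields that $\limit{W(R)}$ is either empty or of pure dimension $s-e-1$.

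Everything then reduces to two facts already recorded in the preliminaries: the unmixedness of $\sat{R}$, and the regularity of $h_R$ modulo $\sat{R}$ (immediate from $\sat{R} = \langle R \rangle : h_R^{\infty}$). For this reason, I do not expect a real obstacle: the set identity is a one-line chase through definitions, and the dimension statement is Krull's principal ideal theorem applied to an unmixed ideal against a regular element. The only point worth stating explicitly is the exclusion of spurious lower-dimensional components, which is exactly what unmixedness plus regularity rules out.
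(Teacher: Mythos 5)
Your proof is correct. The paper states this lemma without proof, presenting it as an immediate consequence of the preliminaries; your argument --- the definitional chase using $\limit{W(R)}=\overline{W(R)}\setminus W(R)$ together with $\overline{W(R)}\subseteq V(R)$ for the set identity, then unmixedness of $\sat{R}$, regularity of $h_R$ modulo $\sat{R}$, and Krull's Hauptidealsatz for the dimension count --- supplies exactly the intended justification, with no gaps.
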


\ifcomment
\begin{proof}
To Do.
\end{proof}
\fi

\begin{Lemma}
\label{Lemma:limitWR2}
If $R$ is a primitive regular chain, 
that is, if $R$ is a system of generators of its saturated ideal,
then we have
$\limit{W(R)} = V(R) \cap V(h_R)$.
\end{Lemma}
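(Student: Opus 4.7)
My plan is to reduce the statement directly to Lemma~\ref{Lemma:limitWR1} together with the characterization of primitive regular chains recalled in the paragraph on ``Weak primitivity, primitive regular chain''. By Lemma~\ref{Lemma:limitWR1}, we already have the identity
\begin{equation*}
\limit{W(R)} \ = \ \overline{W(R)} \cap V(h_R),
\end{equation*}
so the whole task is to replace $\overline{W(R)}$ by $V(R)$ under the primitivity hypothesis.

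First I would recall that, by the preliminaries, the Zariski closure of the quasi-component satisfies $\overline{W(R)} = V(\sat{R})$. Next I would invoke the characterization of primitivity stated in the Weierstrass/weak-primitivity paragraph: $R$ is primitive if and only if $\langle R \rangle = \sat{R}$. Substituting this equality of ideals gives $V(\sat{R}) = V(\langle R \rangle) = V(R)$. Chaining these two identifications yields $\overline{W(R)} = V(R)$, and intersecting both sides with $V(h_R)$ produces the desired formula.

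There is essentially no obstacle here: the lemma is a one-line corollary of Lemma~\ref{Lemma:limitWR1} once one has the ideal-theoretic characterization of primitivity. The only thing to be careful about is to cite the equivalence ``$R$ primitive $\Longleftrightarrow$ $\langle R \rangle = \sat{R}$'' correctly, since it is the only place where the primitivity hypothesis is actually used; all other ingredients (the description of $\overline{W(R)}$ and the formula for $\limit{W(R)}$) are unconditional.
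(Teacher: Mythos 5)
Your proof is correct, and it is the natural (essentially unique) argument: $\limit{W(R)} = \overline{W(R)} \cap V(h_R)$ by Lemma~\ref{Lemma:limitWR1}, $\overline{W(R)} = V(\sat{R})$ from the preliminaries, and primitivity gives $\sat{R} = \langle R\rangle$, hence $V(\sat{R}) = V(R)$. The paper leaves this lemma without a written-out proof, so there is nothing to contrast against; your chain of three identifications is exactly what the statement is designed to be a one-line corollary of. Note that you do not even strictly need to invoke the equivalence ``$R$ primitive $\iff \langle R\rangle = \sat{R}$'' as a theorem, since the lemma's own phrasing (``that is, if $R$ is a system of generators of its saturated ideal'') already hands you $\langle R\rangle = \sat{R}$ as the working hypothesis.
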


\ifcomment
\begin{proof}
To Do.
\end{proof}
\fi

\begin{Lemma}
\label{Lemma:limitWR3}
If $N$ is a strongly normalized regular chain such that
$\sat{R} = \sat{N}$ and $V(h_N) = V(\widehat{h_R})$ both hold, 
then we have $\limit{W(R)} \subseteq \limit{W(N)}$.
\end{Lemma}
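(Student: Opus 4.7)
The plan is to reduce the statement to a comparison of the ``bad'' hypersurfaces $V(h_R)$ and $V(h_N)$, using Lemma~\ref{Lemma:limitWR1} together with the hypothesis $\sat{R}=\sat{N}$. First I would apply Lemma~\ref{Lemma:limitWR1} to both $R$ and $N$, rewriting the desired inclusion as
\begin{equation*}
\overline{W(R)} \cap V(h_R) \; \subseteq \; \overline{W(N)} \cap V(h_N).
\end{equation*}
Since $\sat{R}=\sat{N}$ gives $\overline{W(R)}=V(\sat{R})=V(\sat{N})=\overline{W(N)}$, the two ambient closures coincide, and only the inclusion ``$V(h_R) \cap \overline{W(R)} \subseteq V(h_N)$'' remains to be established.

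Using the hypothesis $V(h_N)=V(\widehat{h_R})$, it then suffices to prove
\begin{equation*}
V(\sat{R}) \cap V(h_R) \; \subseteq \; V(\widehat{h_R}).
\end{equation*}
Here I would exploit two elementary observations. First, because $R \subseteq \sat{R}$, any point $x \in V(\sat{R})$ lies in $V(R)$. Second, the iterated resultant $\widehat{h_R}$, obtained by successively computing $\res(\cdot, r_e, X_s), \res(\cdot, r_{e-1}, X_{s-1}), \ldots, \res(\cdot, r_1, X_{s-e+1})$ starting from $h_R$, belongs to the ideal $\langle h_R \rangle + \langle R \rangle$ of $\K[X_1,\ldots,X_s]$. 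This follows from the standard fact that $\res(f,g,X) \in \langle f, g \rangle$, applied $e$ times along the chain. Consequently any $x$ with $h_R(x)=0$ and $r(x)=0$ for all $r \in R$ must satisfy $\widehat{h_R}(x)=0$, which is what we need.

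The main (and essentially only) delicate point is keeping track of ideal memberships across successive resultant operations: each resultant is computed over a coefficient ring obtained by progressively adjoining the eliminated variables, so one must verify that the identity $\res(f,g,X) \in \langle f, g \rangle$ can be lifted back to an identity in $\K[X_1,\ldots,X_s]$ at every step, so that the final iterated resultant lies in $\langle h_R \rangle + \langle R \rangle$ on the nose. Once this is in place, chaining the inclusions gives
\begin{equation*}
\limit{W(R)} \,=\, \overline{W(R)} \cap V(h_R) \;\subseteq\; \overline{W(N)} \cap V(h_N) \,=\, \limit{W(N)},
\end{equation*}
which completes the proof.
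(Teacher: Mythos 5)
Your argument is correct, and since the paper states this lemma without proof (it is listed among the ``immediate consequences of the above review''), your proposal fills in exactly the right content. The reduction is clean: by Lemma~\ref{Lemma:limitWR1} and $\sat{R}=\sat{N}$, the two Zariski closures coincide, and the claim collapses to $V(\sat{R}) \cap V(h_R) \subseteq V(\widehat{h_R})$. The key technical step you correctly identify is that the iterated resultant $\widehat{h_R}$ lies in $\langle h_R \rangle + \langle R \rangle$: this follows by applying, at each stage, the classical Bezout-type identity $\res{f,g,X} = uf + vg$ with $u,v$ in the ambient polynomial ring. One small inaccuracy in your exposition, which does not affect correctness: the ``delicate point'' you flag about lifting ideal membership across coefficient rings is in fact a non-issue, because the Bezout identity for the resultant of $f,g \in A[X]$ already holds with cofactors $u,v \in A[X]$, and $A[X]$ sits inside $\K[X_1,\ldots,X_s]$ as a subring; no genuine lifting argument is required. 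Also, the coefficient rings shrink (one variable is eliminated per step) rather than being ``progressively adjoined,'' but again this is a description issue, not a mathematical one.
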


\begin{Lemma}
\label{Lemma:limitWR4}
Let $x \in {\A}^s$ such that $x \not\in W(R)$.
Then $x \in \limit{W(R)}$ holds if and only if
there exists a sequence $({\alpha}_n, n \in {\N})$ 
of points in ${\A}^s$ such that ${\alpha}_n \in W(R)$
for all $n \in {\N}$ and 
$\lim_{n \rightarrow  \infty} \, {\alpha}_n = x$.
\end{Lemma}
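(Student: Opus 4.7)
The plan is to deduce the lemma directly from the sequential-space characterization of limit points recalled in the preliminaries, applied to ${\A}^s = {\C}^s$ equipped with the Euclidean topology. Since ${\C}^s$ is a metric space, it is sequential (in fact first-countable), so a point $p$ is a limit point of a subset $S \subseteq {\C}^s$ if and only if $p$ is the limit of some sequence of elements of $S \setminus \{p\}$. This reduces the work to matching the present statement with that general fact.

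First I would reduce the condition $x \in \limit{W(R)}$ to a pure closure statement. By the definition given in the preliminaries, $\limit{W(R)}$ is the set of non-trivial limit points of $W(R)$, i.e.\ those limit points that do not lie in $W(R)$. Under the hypothesis $x \not\in W(R)$, the condition $x \in \limit{W(R)}$ is therefore equivalent to $x$ being a topological limit point of $W(R)$ in the classical sense; note that we may work in the Euclidean topology without loss, since the Euclidean and Zariski closures of $W(R)$ agree by the earlier discussion.

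For the forward direction, if $x$ is a limit point of $W(R)$, the sequential nature of ${\C}^s$ yields a sequence $(\alpha_n, n \in \N)$ of points of $W(R) \setminus \{x\}$ with $\lim_{n \to \infty} \alpha_n = x$; this is the required sequence. Conversely, given a sequence $(\alpha_n, n \in \N)$ of points of $W(R)$ with $\lim_{n \to \infty} \alpha_n = x$, the assumption $x \not\in W(R)$ automatically forces $\alpha_n \neq x$ for every $n$, so the sequence in fact lies in $W(R) \setminus \{x\}$. Every Euclidean neighborhood $U$ of $x$ then contains all but finitely many $\alpha_n$, hence at least one point of $W(R)$ distinct from $x$, so $x$ is a limit point of $W(R)$ and therefore belongs to $\limit{W(R)}$. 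I do not anticipate any real obstacle here: the statement is essentially an unpacking of definitions, the only point worth noting being the switch from Zariski to Euclidean topology, which is harmless because the two closures of $W(R)$ coincide.
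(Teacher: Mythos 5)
Your proposal is correct and follows exactly the route the paper intends: the paper gives no explicit proof, presenting this lemma as an immediate consequence of the preliminaries, namely the sequential-space characterization of limit points in a metric space together with the coincidence of the Zariski and Euclidean closures of $W(R)$. Your unpacking of these two facts, including the observation that $x \notin W(R)$ makes the condition equivalent to membership in the closure and automatically excludes $\alpha_n = x$, is precisely the intended argument.
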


\begin{Lemma}
\label{Lemma:limitWR5}
Recall that $R$ writes $\{ r_1, \ldots, r_{e} \}$.
If $e > 1$ holds, 
writing $R' = \{ r_1, \ldots, r_{e-1} \}$ and $r = r_{e}$, 
we have
\begin{equation*}
\limit{W(R' \cup r)} \ \subseteq \ 
\limit{W(R')} \, \cap \,
\limit{W(r)}.
\end{equation*}
\end{Lemma}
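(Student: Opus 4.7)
The plan is to combine Lemma~\ref{Lemma:limitWR1} and Lemma~\ref{Lemma:limitWR4} with the monotonicity of Zariski closure, and then to upgrade a union inclusion to the stated intersection inclusion via the Puiseux-parametrization machinery of Section~\ref{sec:theory}. Fix $x \in \limit{W(R)}$. Because $W(R) = V(R' \cup \{r\}) \setminus V(h_{R'}\cdot\init{r})$ is obtained from $W(R')$ by imposing the extra equation $r=0$ and the open condition $\init{r}\neq 0$, and obtained from $W(r)$ by imposing the $R'$-equations and the open condition $h_{R'}\neq 0$, the elementary inclusions $W(R) \subseteq W(R')$ and $W(R) \subseteq W(r)$ hold; taking Zariski closures gives $\overline{W(R)} \subseteq \overline{W(R')} \cap \overline{W(r)}$. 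Lemma~\ref{Lemma:limitWR1} then rewrites $\limit{W(R)} = \overline{W(R)} \cap V(h_R)$ and, applied to $R'$ and to $\{r\}$, gives $\limit{W(R')} = \overline{W(R')} \cap V(h_{R'})$ and $\limit{W(r)} = \overline{W(r)} \cap V(\init{r})$. Since $h_R = h_{R'} \cdot \init{r}$, any $x \in \limit{W(R)}$ lies in $\overline{W(R')} \cap \overline{W(r)}$ and satisfies $h_{R'}(x) \cdot \init{r}(x) = 0$, so at least one of the two factors vanishes at $x$.

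The delicate step is promoting this to the simultaneous vanishing of both $h_{R'}(x)$ and $\init{r}(x)$, which by the rewriting above would place $x$ in $\limit{W(R')} \cap \limit{W(r)}$. Here I would appeal to Lemma~\ref{Lemma:limitWR4} to select a sequence $x_n \in W(R)$ converging to $x$, and then pick a Puiseux-parametrized branch of $\overline{W(R)}$ through $x$ in the style of Section~\ref{sec:theory} containing infinitely many $x_n$. Along such a branch, the fact that $\init{r_i}(x_n) \neq 0$ for every $i$ and every $n$, combined with $x \notin W(R)$, forces some initial to vanish at $x$; the technical input I would exploit is that the Puiseux parametrization of $r_e$ is built layer by layer on top of the parametrizations of $r_1, \ldots, r_{e-1}$, so that a collapse of the leading coefficient at one level propagates, along the same branch, to a collapse of the leading coefficient at every other level.

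The main obstacle will be making this propagation step rigorous. The purely algebraic decomposition $V(h_R) = V(h_{R'}) \cup V(\init{r})$ only yields the weaker inclusion $\limit{W(R)} \subseteq \limit{W(R')} \cup \limit{W(r)}$, and upgrading the union to the stated intersection is precisely where the branch-by-branch Puiseux structure of $\overline{W(R)}$ is essential rather than auxiliary. I anticipate that the rigorous justification of the propagation will be a consequence of Theorem~\ref{thrm:PuiseuxParamRC}, once the notion of a system of Puiseux parametrizations of a regular chain is available, and that the truncation accuracy estimates of Theorem~\ref{Theorem:bound-general} will quantify how far along the branch one has to go to observe the simultaneous vanishing.
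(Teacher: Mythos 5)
Your first paragraph is correct and is essentially all that can be extracted ``immediately'': from $W(R'\cup r)\subseteq W(R')\cap W(r)$ you get $\overline{W(R'\cup r)}\subseteq \overline{W(R')}\cap\overline{W(r)}$, and Lemma~\ref{Lemma:limitWR1} reduces what remains to showing that \emph{both} $h_{R'}$ and $\init{r}$ vanish at $x$, not merely their product. You have correctly located the delicate point. Note that the paper offers no argument beyond the easy half (the lemma is listed among the ``immediate consequences'' of the preliminaries), and the only way it is used later --- e.g.\ just before Proposition~\ref{Propo:LWRdim1OTowPoly1} --- is through the closure inclusion combined with a separate case discussion on which initials vanish at the point under consideration.

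The propagation step you propose to close the gap is not carried out, and it cannot be: the vanishing of one initial at $x$ does not propagate along a Puiseux branch to the vanishing of the others. Take $R'=\{X_1X_2-X_1^2\}$ and $r=X_3-X_2$, a strongly normalized regular chain whose saturated ideal $\langle X_2-X_1,\,X_3-X_2\rangle$ has dimension one. Then $W(R'\cup r)=\{(t,t,t)\mid t\neq 0\}$, so $\limit{W(R'\cup r)}=\{(0,0,0)\}$; but $\init{r}=1$, so $W(r)=V(X_3-X_2)$ is closed, $(0,0,0)\in W(r)$, and $\limit{W(r)}=\emptyset$. No choice of branch or truncation accuracy will make $\init{r}$ collapse at the origin, so the intersection form of the inclusion is out of reach by your route (and indeed fails in this example). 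What your argument does establish is
$\limit{W(R'\cup r)}\subseteq \left(\limit{W(R')}\cap\overline{W(r)}\right)\cup\left(\overline{W(R')}\cap\limit{W(r)}\right)$,
i.e.\ precisely the union you feared, which becomes the stated intersection only at points where both $h_{R'}$ and $\init{r}$ vanish. Separately, invoking Theorem~\ref{thrm:PuiseuxParamRC} and Theorem~\ref{Theorem:bound-general} here is methodologically backwards --- those results sit downstream of this section --- and Theorem~\ref{Theorem:bound-general} concerns truncation accuracies, which have no bearing on a purely topological inclusion.
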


\begin{Lemma}
\label{Lemma:NegativePowers}
Let ${\varphi} \in {\C} ( \langle T^{*} \rangle )$
and let $p/q \in {\Q}$ be the order of ${\varphi}$.
Let $({\alpha}_n, n \in {\N})$ be a sequence of complex
numbers converging to zero and let 
$N$ be a positive integer such that 
$({\varphi}({\alpha}_n), n \geq N)$ is well defined.
Then, if $p/q < 0$ holds, the sequence 
$({\varphi}({\alpha}_n), n \geq N)$ escapes to infinity
while if $p/q \geq 0$, the sequence 
$({\varphi}({\alpha}_n), n \geq N)$ converges to 
the complex number ${\varphi}(0)$.
\end{Lemma}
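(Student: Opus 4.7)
The plan is to factor out the leading fractional power of $T$ so that what remains is a convergent Puiseux series of order zero with nonvanishing constant term. Concretely, by definition of the order, we can write
\begin{equation*}
{\varphi}(T) \ = \ \sum_{m=n_0}^{\infty} \, a_m T^{m/r}, \qquad a_{n_0} \neq 0, \qquad \frac{p}{q} \ = \ \frac{n_0}{r},
\end{equation*}
and then set
\begin{equation*}
{\psi}(T) \ := \ \sum_{k=0}^{\infty} \, a_{n_0 + k} T^{k/r} \ = \ T^{-p/q} \, {\varphi}(T),
\end{equation*}
so that ${\psi}\in {\C}(\langle T^{*} \rangle)$ has order $0$ and satisfies ${\psi}(0) = a_{n_0} \neq 0$.

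Next I would invoke the fact that a convergent Puiseux series of nonnegative order is continuous at the origin. This follows from the definition: if we set $S := T^{1/r}$, then ${\psi}$ becomes an ordinary convergent power series in $S$, which is continuous at $S = 0$; the assumption that ${\varphi}({\alpha}_n)$ is well defined for all $n \geq N$ amounts to a consistent choice of $r$-th roots ${\beta}_n$ with ${\beta}_n^r = {\alpha}_n$, and since ${\alpha}_n \to 0$ we also have ${\beta}_n \to 0$. Therefore ${\psi}({\alpha}_n) \to {\psi}(0) = a_{n_0}$, and in particular $|{\psi}({\alpha}_n)|$ is bounded away from $0$ and $\infty$ for $n$ large enough.

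It then remains to analyze the factor $|{\alpha}_n|^{p/q}$ in the identity $|{\varphi}({\alpha}_n)| = |{\alpha}_n|^{p/q} \, |{\psi}({\alpha}_n)|$. If $p/q < 0$, then $|{\alpha}_n|^{p/q} \to \infty$ as ${\alpha}_n \to 0$, so the product escapes to infinity, as claimed. If $p/q > 0$, then $|{\alpha}_n|^{p/q} \to 0$ while $|{\psi}({\alpha}_n)|$ stays bounded, so ${\varphi}({\alpha}_n) \to 0 = {\varphi}(0)$. Finally, if $p/q = 0$, then ${\varphi} = {\psi}$ and continuity at the origin gives ${\varphi}({\alpha}_n) \to a_{n_0} = {\varphi}(0)$ directly.

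The only genuine subtlety, and the point that deserves the most care when writing up the argument, is the multi-valued nature of the evaluation ${\varphi}({\alpha}_n)$: strictly speaking ${\alpha}_n^{p/q}$ depends on a choice of $r$-th root of ${\alpha}_n$. Since the statement assumes ${\varphi}({\alpha}_n)$ is well defined, however, such a choice has been made consistently, and the argument above uses only the modulus $|{\alpha}_n|^{p/q} = |{\alpha}_n|^{p/q}$, which is independent of the branch. Thus no further bookkeeping is needed, and the three cases above exhaust the claim.
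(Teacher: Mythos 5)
Your proof is correct. The paper itself gives no proof of this lemma (it is listed among the ``immediate consequences'' of the preliminaries on Puiseux series), and your argument --- factoring $\varphi = T^{p/q}\psi$ with $\psi$ of order zero and $\psi(0)=a_{n_0}\neq 0$, using continuity of the convergent power series in $S=T^{1/r}$, and noting that the branch ambiguity affects only the argument and not the modulus $|\alpha_n|^{p/q}$ --- is exactly the standard argument the authors leave implicit.
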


\ifcomment
\begin{proof}
To Do.
\end{proof}
\fi

\section{Basic techniques}
\label{sec:limitpoints}

This section is an overview of the basic techniques
of this paper.
This presentation is meant to help the non-expert reader understand
our objectives and solutions.
In particular, the results of this section are stated
for regular chains in three variables, while the statements of
Sections~\ref{sec:theory} to \ref{sec:principle}
do not have this restriction.

Recall that $R \subseteq {\C}[X_1, \ldots, X_s]$
is a regular chain whose saturated ideal has height $1 \leq e \leq s$.
As mentioned in the introduction, we mainly focus
on the case $e = s-1$, that is, \sat{R} has dimension one.

Lemma~\ref{Lemma:limitWR1} and the assumption $e = s-1$ imply that
$\limit{W(R)}$ consists of finitely many points.

We further assume that $R$ is strongly normalized,
thus we have $h_R$ lies in ${\C}[X_1]$.

Lemma~\ref{Lemma:limitWR2}  and the assumption  $h_R \in {\C}[X_1]$ 
imply that computing $\limit{W(R)}$ reduces
to check, for each root $\alpha \in {\C}$ of ${h_R}$ 
whether or not there is a point $x \in \limit{W(R)}$
whose $X_1$-coordinate is $\alpha$.
Without loss of generality, it is enough to develop our
results for the case $\alpha = 0$.
Indeed, a change of coordinates can be used to reduce
to this latter assumption.

We start by considering the case $n=2$.
Thus, our regular chain $R$ consists of a single polynomial 
$r_1 \in {\C}[X_1, X_2]$ whose initial $h_1$ satisfies 
$h_1(0) = 0$.
Lemma~\ref{Lemma:LWRdim1OnePoly} provides a necessary 
and sufficient condition for a point of $({\alpha}, {\beta}) \in {\A}^2$,
with $\alpha = 0$, 
to satisfy $({\alpha}, {\beta}) \in \limit{W( \{ r_1 \} )}$.

Let $d$ be the degree of $r_1$ in $X_2$.
Applying Puiseux Theorem, we consider 
${\varphi}_1, \ldots, {\varphi}_d \in {\C}(\langle X_1^{*} \rangle)$
such that the following holds
\begin{equation}
\label{eq:facto1}
\frac{r_1}{h_1} = (X_2 - {\varphi}_1) \cdots (X_2 - {\varphi}_d)
\end{equation}
in ${\C}(\langle X_1^{*} \rangle) [X_2]$.
We assume that the series 
${\varphi}_1, \ldots, {\varphi}_d $ are numbered in such a way that 
each of ${\varphi}_1, \ldots, {\varphi}_c $ has a non-negative order while
each of ${\varphi}_{c+1}, \ldots, {\varphi}_d $ has a negative order,
for some $c$ such that $0 \leq c \leq d$.

\begin{Lemma}
\label{Lemma:LWRdim1OnePoly}
With $h_1(0)=0$, 
for all  $\beta \in {\C}$, the following two conditions
are equivalent
\begin{enumerate}
\item[$(i)$] $(0,  \beta) \in \limit{W(r_1)}$ holds,
\item[$(ii)$] there exists $1 \leq j \leq c$
               and a sequence $({\alpha}_n, n \in {\N})$
               of complex numbers such that the sequence
                $( {\varphi}_j ({\alpha}_n), n \in {\N})$
               is well defined, we have $h_1({\alpha}_n) \neq 0$
               for all $n \in {\N}$ and we we have
\begin{equation*}
\lim_{n \rightarrow \infty} \, {\alpha}_n = 0 \ \ {\rm and}  \ \
\lim_{n \rightarrow \infty} \, {\varphi}_j ({\alpha}_n) = {\beta}.
\end{equation*}
\end{enumerate}
\end{Lemma}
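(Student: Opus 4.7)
The plan is to prove the two directions separately, using the Puiseux factorization~(\ref{eq:facto1}) as a ``parametric description'' of $W(r_1)$ near fibers where $h_1$ vanishes, together with Lemma~\ref{Lemma:limitWR4} (sequential characterization of limit points) and Lemma~\ref{Lemma:NegativePowers} (behavior of Puiseux series according to their order).

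For the implication $(ii)\Rightarrow (i)$, I would take $j\leq c$ and a sequence $({\alpha}_n)$ as in $(ii)$. For each sufficiently large $n$, the convergent Puiseux series ${\varphi}_j$ is defined at ${\alpha}_n$ (on an appropriate branch of $X_1^{1/{\varsigma}_j}$) and~(\ref{eq:facto1}) gives $r_1({\alpha}_n, {\varphi}_j({\alpha}_n))=0$. Since $h_1({\alpha}_n)\neq 0$, the point $({\alpha}_n, {\varphi}_j({\alpha}_n))$ lies in $W(r_1)$ and the sequence of these points converges to $(0,{\beta})$. Because $h_1(0)=0$, the point $(0,{\beta})$ does not belong to $W(r_1)$, so Lemma~\ref{Lemma:limitWR4} yields $(0,{\beta})\in\limit{W(r_1)}$.

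For the converse $(i)\Rightarrow (ii)$, I would again invoke Lemma~\ref{Lemma:limitWR4} to obtain a sequence $((a_n,b_n))$ in $W(r_1)$ with limit $(0,{\beta})$. For $n$ large enough, $a_n$ lies in the common domain of convergence of the ${\varphi}_k$'s (on a suitably chosen branch), so evaluating~(\ref{eq:facto1}) at $X_1=a_n$ gives $h_1(a_n)\prod_{k=1}^{d}(b_n-{\varphi}_k(a_n))=0$. Since $h_1(a_n)\neq 0$, we have $b_n={\varphi}_{j_n}(a_n)$ for some index $j_n\in\{1,\ldots,d\}$. By the pigeonhole principle, some index $j$ occurs infinitely often, and after passing to a subsequence we may assume $j_n=j$ for every $n$. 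Finally, I would rule out $j>c$: if ${\varphi}_j$ had negative order, Lemma~\ref{Lemma:NegativePowers} would force $|{\varphi}_j(a_n)|\to\infty$, contradicting $b_n\to{\beta}\in{\C}$. Therefore $j\leq c$ and ${\varphi}_j(a_n)=b_n\to{\beta}$, which is $(ii)$.

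The main obstacle is purely technical and lies in the $(i)\Rightarrow(ii)$ direction: the Puiseux series ${\varphi}_k$ are multivalued functions of $X_1$, so the identification $b_n={\varphi}_{j_n}(a_n)$ requires choosing, for each $n$, a consistent branch of $X_1^{1/{\varsigma}_k}$ at $a_n$ so that the factorization~(\ref{eq:facto1}) evaluates to a genuine identity in ${\C}[X_2]$. This is handled by restricting to $n$ large enough that all $a_n$ lie in a common slit neighborhood of the origin on which all the ${\varphi}_k$'s can be simultaneously realized as single-valued analytic functions; the argument above then proceeds unchanged.
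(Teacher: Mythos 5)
Your proof is correct and follows essentially the same route as the paper: both directions go through the sequential characterization of limit points (Lemma~\ref{Lemma:limitWR4}), the Puiseux factorization~(\ref{eq:facto1}) to translate membership in $W(r_1)$ into a vanishing product, and Lemma~\ref{Lemma:NegativePowers} to discard the negative-order branches. The only cosmetic difference is that you first apply the pigeonhole argument to pin down a single index $j$ and then rule out $j>c$, whereas the paper first discards the divergent factors $\varphi_{c+1},\ldots,\varphi_d$ and then extracts the index; your ordering, together with your explicit remark about choosing consistent branches of $X_1^{1/\varsigma}$, actually makes the selection of the subsequence slightly cleaner than the paper's ``one of these sequences converges to~$0$'' phrasing.
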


\begin{proof}
We first prove the implication $(ii) \Rightarrow (i)$.
Equation~(\ref{eq:facto1}) together with $(ii)$
implies $({\alpha}_n, {\varphi}_j ({\alpha}_n)) \in V(r_1)$ for all $n \in {\N}$.
Since we also have
$({\alpha}_n, {\varphi}_j ({\alpha}_n)) \not\in V(h_1)$ for all $n \in {\N}$
and 
$\lim_{n \rightarrow \infty} \, ({\alpha}_n, {\varphi}_j ({\alpha}_n))  = (0, {\beta})$,
we deduce $(i)$, thanks to Lemma~\ref{Lemma:limitWR4}.

We now prove the implication $(i) \Rightarrow (ii)$.
By Lemma~\ref{Lemma:limitWR4}, there exists
a sequence $(({\alpha}_n, {\beta}_n), n \in {\N})$ in ${\A}^2$
such that for all $n \in {\N}$ we have:
          (1) $h_1({\alpha}_n) \neq 0$,
          (2) $r_1 ({\alpha}_n, {\beta}_n) = 0$, and
          (3) $\lim_{n \rightarrow \infty} \, ({\alpha}_n, {\beta}_n) = (0, {\beta})$.
Since  $\lim_{n \rightarrow \infty} \, {\alpha}_n = 0$,
each series 
${\varphi}_1 ({\alpha}_n), \ldots, {\varphi}_d ({\alpha}_n)$
is well defined for $n$ larger than some positive integer $N$.
Hypotheses (1) and (2), together with Equation~(\ref{eq:facto1}),
imply that for all $n \geq N$ the product
\begin{equation*}
({\beta}_n - {\varphi}_1({\alpha}_n)) \cdots
({\beta}_n - {\varphi}_c({\alpha}_n))
({\beta}_n - {\varphi}_{c+1}({\alpha}_n)) \cdots
({\beta}_n - {\varphi}_d({\alpha}_n))
\end{equation*}
is $0$.
Since $\lim_{n \rightarrow \infty} \, {\beta}_n = {\beta}$,
and by definition of the integer $c$,
each of the sequences $({\beta}_n - {\varphi}_1({\alpha}_n)), \ldots, 
({\beta}_n - {\varphi}_c({\alpha}_n))$ converges
while each of the sequences $({\beta}_n - {\varphi}_{c+1}({\alpha}_n)), \ldots, 
({\beta}_n - {\varphi}_d({\alpha}_n))$ escapes to infinity.
Thus, for $n$ large enough
the product 
$({\beta}_n - {\varphi}_1({\alpha}_n)) \cdots
({\beta}_n - {\varphi}_c({\alpha}_n))$ is zero.
Therefore, one of sequences $({\beta}_n - {\varphi}_1({\alpha}_n)), \ldots, 
({\beta}_n - {\varphi}_c({\alpha}_n))$ converges to $0$
and the conclusion follows.
\end{proof}

Lemmas~\ref{Lemma:NegativePowers} and \ref{Lemma:LWRdim1OnePoly}
immediately imply the following.

\begin{Proposition}
\label{Propo:LWRdim1OnePoly1}
With $h_1(0) = 0$, for all  $\beta \in {\C}$, we have
\begin{equation*}
(0,  \beta) \in \limit{W(r_1)} \ \ \iff \ \ 
\beta \in \{ {\varphi}_1(0), \ldots, {\varphi}_c(0) \}.
\end{equation*}
\end{Proposition}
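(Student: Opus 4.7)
The plan is to combine the two preceding lemmas directly. By Lemma~\ref{Lemma:LWRdim1OnePoly}, the statement $(0, \beta) \in \limit{W(r_1)}$ is equivalent to the existence of an index $j \in \{1, \ldots, c\}$ together with a sequence $({\alpha}_n, n \in {\N})$ of complex numbers with ${\alpha}_n \to 0$, $h_1({\alpha}_n) \neq 0$, and ${\varphi}_j({\alpha}_n) \to \beta$. So the proposition reduces to identifying the possible limits $\lim_n {\varphi}_j({\alpha}_n)$ when $1 \le j \le c$.

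For the forward implication, I would assume $(0, \beta) \in \limit{W(r_1)}$ and apply Lemma~\ref{Lemma:LWRdim1OnePoly} to extract such a $j$ and sequence $({\alpha}_n)$. Since $j \leq c$, by the numbering convention the Puiseux series ${\varphi}_j$ has non-negative order, so Lemma~\ref{Lemma:NegativePowers} forces $\lim_n {\varphi}_j({\alpha}_n) = {\varphi}_j(0)$. Combined with ${\varphi}_j({\alpha}_n) \to \beta$ and the uniqueness of limits in a Hausdorff space, this yields $\beta = {\varphi}_j(0) \in \{ {\varphi}_1(0), \ldots, {\varphi}_c(0) \}$.

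For the converse, suppose $\beta = {\varphi}_j(0)$ for some $j \in \{1, \ldots, c\}$. I would choose any sequence $({\alpha}_n)$ of nonzero complex numbers tending to $0$ which avoids the finitely many zeros of $h_1$ near the origin (and along which the fractional powers appearing in ${\varphi}_j$ can be evaluated consistently, which is routine since ${\varphi}_j$ converges on a punctured neighborhood of $0$). Lemma~\ref{Lemma:NegativePowers}, applied to ${\varphi}_j$, then gives ${\varphi}_j({\alpha}_n) \to {\varphi}_j(0) = \beta$. Hence condition $(ii)$ of Lemma~\ref{Lemma:LWRdim1OnePoly} is satisfied, so $(0, \beta) \in \limit{W(r_1)}$.

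The only subtlety is the well-definedness of ${\varphi}_j({\alpha}_n)$ along the chosen sequence, which is handled by restricting to sufficiently large $n$ so that ${\alpha}_n$ lies in the disk of convergence and away from $h_1^{-1}(0)$; this is a finite-index adjustment and is immaterial for the limit. Everything else follows mechanically from the two cited lemmas.
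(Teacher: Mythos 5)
Your argument is correct and is precisely the "immediate" combination of Lemmas~\ref{Lemma:NegativePowers} and \ref{Lemma:LWRdim1OnePoly} that the paper invokes without elaboration; you have simply spelled out the two implications explicitly. Nothing in the approach differs from the paper's.
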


\ifcomment
We observe that $\limit{W(r_1)}$ can be computed in another
way. Let us denote by $p_1$ the primitive part of $r_1$
over ${\C}[X_1]$. It is easy to check that
$\limit{W(r_1)} = \limit{W(p_1)}$ holds.
Therefore, with Lemma~\ref{Lemma:limitWR2}, we deduce
another characterization of $\limit{W(r_1)}$,
which also derives from Theorem 4.1 in~\cite{RW78}.

\begin{Proposition}
\label{Propo:LWRdim1OnePoly1}
With $h_1(0) = 0$, for all  $\beta \in {\C}$, we have
\begin{equation*}
(0,  \beta) \in \limit{W(r_1)} \ \ \iff \ \ 
(0,  \beta) \in V(r_1).
\end{equation*}
\end{Proposition}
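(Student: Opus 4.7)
The plan is to chain Lemma~\ref{Lemma:LWRdim1OnePoly} and Lemma~\ref{Lemma:NegativePowers} in succession, invoking the former to translate between the geometric statement about $\limit{W(r_1)}$ and a statement about limits of Puiseux series evaluated along sequences, and invoking the latter to actually compute those limits.

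For the direction $\Rightarrow$, I would assume $(0, \beta) \in \limit{W(r_1)}$ and apply Lemma~\ref{Lemma:LWRdim1OnePoly} to extract an index $1 \leq j \leq c$ together with a sequence $({\alpha}_n, n \in {\N})$ converging to $0$ such that $h_1({\alpha}_n) \neq 0$ for every $n$ and $\lim_{n \to \infty} {\varphi}_j({\alpha}_n) = {\beta}$. Because $j \leq c$, the series ${\varphi}_j$ has non-negative order, and Lemma~\ref{Lemma:NegativePowers} therefore gives $\lim_{n \to \infty} {\varphi}_j({\alpha}_n) = {\varphi}_j(0)$. Uniqueness of limits in ${\C}$ forces ${\beta} = {\varphi}_j(0)$, placing $\beta$ in the desired finite set.

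For the direction $\Leftarrow$, suppose ${\beta} = {\varphi}_j(0)$ for some $1 \leq j \leq c$. I would construct an admissible sequence by choosing any $({\alpha}_n)$ with ${\alpha}_n \to 0$, each ${\alpha}_n$ lying in the disk of convergence of ${\varphi}_j$, and each ${\alpha}_n$ avoiding the zero set of $h_1$; such choices exist for all sufficiently large $n$ because $h_1 \in {\C}[X_1]$ has only finitely many roots. Lemma~\ref{Lemma:NegativePowers} then yields ${\varphi}_j({\alpha}_n) \to {\varphi}_j(0) = {\beta}$, and applying Lemma~\ref{Lemma:LWRdim1OnePoly} in its $(ii) \Rightarrow (i)$ direction concludes that $(0, \beta) \in \limit{W(r_1)}$.

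The argument presents no real obstacle; the only mildly subtle point is ensuring that the approximating sequence in the reverse direction can simultaneously be chosen inside the domain of convergence of ${\varphi}_j$ and outside $V(h_1)$, but both conditions are satisfied automatically for any sequence sufficiently close to, yet distinct from, $0$.
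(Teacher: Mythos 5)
Your argument correctly establishes that, for $\beta\in\C$,
\[
(0,\beta)\in\limit{W(r_1)}\ \iff\ \beta\in\{\varphi_1(0),\ldots,\varphi_c(0)\},
\]
by chaining Lemma~\ref{Lemma:LWRdim1OnePoly} with Lemma~\ref{Lemma:NegativePowers}. This is exactly the Puiseux-based characterization the paper has in mind when it writes that those two lemmas ``immediately imply the following,'' and your treatment of both directions (including the care taken to choose $\alpha_n$ inside the disk of convergence and off $V(h_1)$) is sound.

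However, the proposition you were asked to prove states the right-hand side as $(0,\beta)\in V(r_1)$, i.e.\ $r_1(0,\beta)=0$, which is a genuinely different condition from $\beta\in\{\varphi_1(0),\ldots,\varphi_c(0)\}$, and you never bridge the two. The bridge requires identifying $\{\varphi_1(0),\ldots,\varphi_c(0)\}$ with the zero set of $r_1(0,X_2)$, and that identification is only valid under the additional hypothesis that $r_1$ is primitive over $\C[X_1]$ (equivalently, that $r_1(0,X_2)\not\equiv 0$). Without that hypothesis the stated equivalence is simply false: for $r_1=X_1(X_2-1)$ one has $\limit{W(r_1)}\cap\{X_1=0\}=\{(0,1)\}$, while $V(r_1)$ contains the entire line $X_1=0$. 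Under primitivity, the identification does hold; it is Walker's result on the continuity of roots for a general polynomial (Theorem~2.3 of~\cite{RW78}, reflected in Lemma~\ref{Lemma:Walker} in the paper). The route the paper actually takes to the $V(r_1)$-formulation is different from yours: pass to the primitive part $p_1$ of $r_1$, argue that $\limit{W(r_1)}$ and $\limit{W(p_1)}$ agree, and then invoke Lemma~\ref{Lemma:limitWR2}, which gives $\limit{W(p_1)}=V(p_1)\cap V(h_{p_1})$ precisely because $p_1$ generates its own saturated ideal. Either way, primitivity is where the real content lies, and that is the ingredient missing from your proof of this particular formulation.
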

\fi

Next, we consider the case $n=3$.
Hence, our regular chain $R$ consists of two polynomials
$r_1 \in {\C}[X_1, X_2]$ and $r_2 \in {\C}[X_1, X_2, X_3]$
with respective initials $h_1$ and $h_2$.
We assume that $0$ is a root of the product $h_1 h_2$
and we are looking for all ${\beta} \in {\C}$ and 
all ${\gamma} \in {\C}$ such that 
$(0, {\beta}, {\gamma}) \in \limit{W(r_1, r_2)}$.

Lemma~\ref{Lemma:limitWR5} tells us that 
$(0, {\beta}, {\gamma}) \in \limit{W(r_1, r_2)}$
implies 
$(0, {\beta}) \in \limit{W(r_1)}$.
This observation together with Proposition~\ref{Propo:LWRdim1OnePoly1}
yields immediately the following.

\begin{Proposition}
\label{Propo:LWRdim1OTowPoly1}
With $h_1(0) = 0$ and $h_2(0) \neq 0$, assuming that $r_1$ is primitive
over ${\C}[X_1]$, 
for all ${\beta} \in {\C}$ and 
all ${\gamma} \in {\C}$, we have 
\begin{equation*}
(0,  \beta, \gamma) \in \limit{W(r_1, r_2)} \ \ \iff \ \ 
(0,  \beta, \gamma) \in V(r_1, r_2).
\end{equation*}
\end{Proposition}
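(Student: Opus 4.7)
The forward implication is immediate: since $\limit{W(r_1,r_2)} \subseteq \overline{W(r_1,r_2)} = V(\sat{\{r_1,r_2\}}) \subseteq V(r_1,r_2)$, any non-trivial limit point of $W(r_1,r_2)$ automatically lies in $V(r_1,r_2)$. The substance is in the converse. My plan is, starting from a point $(0,\beta,\gamma) \in V(r_1,r_2)$, to build a sequence $(\alpha_n,\beta_n,\gamma_n)$ in $W(r_1,r_2)$ converging to it and then invoke Lemma~\ref{Lemma:limitWR4}, observing that $(0,\beta,\gamma)$ cannot itself lie in $W(r_1,r_2)$ because $h_1(0) = 0$ forces it into $V(h_R)$.

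The construction would have two stages. First, I would obtain a sequence $(\alpha_n,\beta_n)$ in $W(r_1)$ converging to $(0,\beta)$. The hypothesis that $r_1$ is primitive lets me apply Lemma~\ref{Lemma:limitWR2} to the one-element regular chain $\{r_1\}$, yielding $\limit{W(r_1)} = V(r_1) \cap V(h_1)$; since $(0,\beta) \in V(r_1)$ and $h_1(0)=0$, we have $(0,\beta) \in \limit{W(r_1)}$, and Lemma~\ref{Lemma:limitWR4} applied in $\A^2$ produces the desired sequence, with $h_1(\alpha_n) \neq 0$ for every $n$. Primitivity is essential at this step: without it, $V(r_1) \cap V(h_1)$ could contain points on components of $V(r_1)$ that are not genuine limits of $W(r_1)$.

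The second and main stage is to lift this sequence by choosing appropriate $\gamma_n$. Because $R$ is strongly normalized, $h_2 \in \C[X_1]$, and the hypothesis $h_2(0) \neq 0$ ensures $h_2(\alpha_n) \neq 0$ for every sufficiently large $n$. Hence the univariate polynomials $r_2(\alpha_n,\beta_n,X_3)$ all have the same degree in $X_3$ as the limit polynomial $r_2(0,\beta,X_3)$, and their coefficients converge to those of the limit. The classical continuous dependence of the roots of a univariate polynomial on its coefficients, valid when the leading coefficient stays bounded away from zero, then yields a sequence $\gamma_n$ of roots of $r_2(\alpha_n,\beta_n,X_3)$ with $\gamma_n \to \gamma$. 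The point $(\alpha_n,\beta_n,\gamma_n)$ satisfies $r_1(\alpha_n,\beta_n) = r_2(\alpha_n,\beta_n,\gamma_n) = 0$ and $h_R(\alpha_n) = h_1(\alpha_n)h_2(\alpha_n) \neq 0$, so it lies in $W(r_1,r_2)$ and converges to $(0,\beta,\gamma)$, completing the proof by Lemma~\ref{Lemma:limitWR4}. I expect this lifting step to be the only nontrivial one; the continuity-of-roots statement is classical, but care is needed to follow an individual root of the limit polynomial rather than merely the unordered multiset of roots.
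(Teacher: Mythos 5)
Your proof is correct and matches the argument the paper intends but does not actually write out: the paper merely asserts that the proposition follows ``immediately'' from Lemma~\ref{Lemma:limitWR5} and Proposition~\ref{Propo:LWRdim1OnePoly1}. The two ingredients you supply --- primitivity of $r_1$ giving $\limit{W(r_1)} = V(r_1) \cap V(h_1)$ via Lemma~\ref{Lemma:limitWR2}, and the lifting of the sequence through $r_2$ by continuity of roots, legitimate because the leading coefficient $h_2(\alpha_n) \to h_2(0) \neq 0$ --- are exactly the content the paper elides, and the continuity-of-roots device is the same one the authors themselves invoke in Section~\ref{sec:discussion} for extending points that do not cancel the initial.
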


We turn now our attention to the case $h_1(0) = h_2(0) = 0$.
Since $(0, {\beta}) \in \limit{W(r_1)}$ is a necessary
condition for $(0, {\beta}, {\gamma}) \in \limit{W(r_1, r_2)}$
to hold we apply Proposition~\ref{Propo:LWRdim1OnePoly1}
and assume 
$\beta \in \{ {\varphi}_1(0), \ldots, {\varphi}_c(0) \}$.
Without loss of generality, we further assume $\beta = 0$.
For each $1 \leq j \leq c$, such that 
 ${\varphi}_j(0) = 0$ holds,
we define the univariate polynomial $f^j_2 \in {\C}(\langle X_1^{*} \rangle)[X_3]$ by
\begin{equation}
\label{eq:f2}
f^j_2(X_1,X_3) = r_2(X_1, {\varphi}_j(X_1), X_3).
\end{equation}
Let $b$ be the degree of $f^j_2$.
Applying again Puiseux theorem, 
we consider ${\psi}_1, \ldots, {\psi}_b \in {\C}(\langle X_1^{*} \rangle)$
such that the following holds
\begin{equation}
\label{eq:facto2}
\frac{f_2^j}{h_2} = (X_3 - {\psi}_1) \cdots (X_3 - {\psi}_b)
\end{equation}
in ${\C}(\langle X_1^{*} \rangle) [X_3]$.
We assume that the series 
${\psi}_1, \ldots, {\psi}_b $ are numbered in such a way that 
each of ${\psi}_1, \ldots, {\psi}_a $ has a non-negative order while
each of ${\psi}_{a+1}, \ldots, {\psi}_b $ has a negative order,
for some $a$ such that $0 \leq a \leq b$.

\begin{Lemma}
\label{Lemma:LWRdim1TwoPolys}
For all  $\gamma \in {\C}$, the following two conditions
are equivalent.
\begin{enumerate}
\item[$(i)$] $(0, 0, \gamma) \in \limit{W(r_1, r_2)}$ holds,
\item[$(ii)$] there exist integers $j,k$ with
              $1 \leq j \leq c$ and $1 \leq k \leq a$,
              and two sequences
              $({\alpha}_n, n \in {\N})$, 
              $({\beta}_n, n \in {\N})$ of complex numbers 
              such that:
\begin{enumerate}
\item[$(a)$] the sequences $( {\varphi}_j ({\alpha}_n), n \in {\N})$
             and $( {\psi}_k ({\beta}_n), n \in {\N})$ are well defined,
\item[$(b)$] $h_1({\alpha}_n) \neq 0$ and $h_2({\alpha}_n) \neq 0$, 
             for all $n \in {\N}$,
\item[$(c)$] ${\beta}_n = {\varphi}_j ({\alpha}_n)$, for all $n \in {\N}$,
\item[$(d)$] $\lim_{n \rightarrow \infty} \, 
             ({\alpha}_n, {\beta}_n, {\psi}_k ({\beta}_n)) = (0, 0, \gamma)$.
\end{enumerate}
\end{enumerate}
\end{Lemma}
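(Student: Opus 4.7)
My plan is to mirror the proof of Lemma \ref{Lemma:LWRdim1OnePoly}, but now apply Puiseux factorization in two layers: first factor $r_1/h_1$ to control the $X_2$-coordinate via the $\varphi_i$'s, and then factor $f_2^j/h_2$ to control the $X_3$-coordinate via the $\psi_\ell$'s. Since $R$ is strongly normalized, $h_1$ and $h_2$ both lie in $\C[X_1]$, so evaluating them at the complex number $\alpha_n$ is unambiguous, and the factorization (\ref{eq:facto2}) specializes cleanly at any $X_1=\alpha_n$ sufficiently close to $0$. Throughout the plan I read $\psi_k(\beta_n)$ in condition $(d)$ as $\psi_k$ evaluated via $\beta_n=\varphi_j(\alpha_n)$, i.e.\ at the first coordinate $\alpha_n$ of the parametrization producing $\beta_n$.

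For $(ii)\Rightarrow(i)$, I would define $P_n:=(\alpha_n,\varphi_j(\alpha_n),\psi_k(\alpha_n))$. Condition $(b)$ gives $h_R(P_n)\neq 0$; equation (\ref{eq:facto1}) gives $r_1(\alpha_n,\varphi_j(\alpha_n))=0$; and equation (\ref{eq:facto2}) combined with the definition (\ref{eq:f2}) of $f_2^j$ gives $r_2(\alpha_n,\varphi_j(\alpha_n),\psi_k(\alpha_n))= f_2^j(\alpha_n,\psi_k(\alpha_n))=0$. Hence $P_n\in W(R)$, and by $(d)$ the sequence $(P_n)$ converges to $(0,0,\gamma)$. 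Since $h_1(0)=0$, the point $(0,0,\gamma)$ lies in $V(h_R)$ and therefore not in $W(R)$, so Lemma \ref{Lemma:limitWR4} yields $(0,0,\gamma)\in\limit{W(R)}$.

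For $(i)\Rightarrow(ii)$, Lemma \ref{Lemma:limitWR4} gives a sequence $(\alpha_n,\beta_n,\gamma_n)\in W(R)$ with $(\alpha_n,\beta_n,\gamma_n)\to(0,0,\gamma)$. For $n$ large enough that all Puiseux series involved converge at $\alpha_n$, the factorization (\ref{eq:facto1}) forces $\beta_n\in\{\varphi_1(\alpha_n),\dots,\varphi_d(\alpha_n)\}$. By the pigeonhole principle I extract a subsequence so that $\beta_n=\varphi_j(\alpha_n)$ for one fixed index $j$; Lemma \ref{Lemma:NegativePowers} together with the finiteness of $\lim \beta_n=0$ rules out $j>c$, and forces $\varphi_j(0)=0$. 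Defining $f_2^j$ as in (\ref{eq:f2}) and specializing (\ref{eq:facto2}) at $X_1=\alpha_n$ yields $\gamma_n\in\{\psi_1(\alpha_n),\dots,\psi_b(\alpha_n)\}$; a second pigeonhole extraction fixes $k$ with $\gamma_n=\psi_k(\alpha_n)$, and since $\gamma_n\to\gamma$ is finite, Lemma \ref{Lemma:NegativePowers} again gives $k\leq a$. Setting $\beta_n:=\varphi_j(\alpha_n)$ verifies $(a)$--$(d)$.

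The only real obstacle is bookkeeping across the two pigeonhole steps: the polynomial $f_2^j$ depends on the index $j$, so the $\psi_\ell$'s can only be defined after $j$ has been fixed on a subsequence. Once this two-stage extraction is written carefully, the order-sign argument (via Lemma \ref{Lemma:NegativePowers}) for excluding $j>c$ and $k>a$ is identical to what appears in the proof of Lemma \ref{Lemma:LWRdim1OnePoly}, and the conclusion of $(d)$ follows by continuity of $\varphi_j$ and $\psi_k$ at $0$ together with $\varphi_j(0)=0$ and $\psi_k(0)=\gamma$.
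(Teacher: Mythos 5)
Your argument is correct and follows the same two-stage Puiseux-extraction strategy as the paper's proof: obtain a converging sequence in $W(R)$ via Lemma~\ref{Lemma:limitWR4}, use~(\ref{eq:facto1}) together with a subsequence extraction to fix $j\leq c$ with $\beta_n=\varphi_j(\alpha_n)$, then repeat the same argument for the third coordinate using $f_2^j$ and~(\ref{eq:facto2}). Your reading of $\psi_k(\beta_n)$ as evaluation at $\alpha_n$ matches the paper's intent (indeed $\psi_k\in{\C}(\langle X_1^{*}\rangle)$ is a series in $X_1$), and the remaining bookkeeping mirrors the paper's, which merely compresses the argument by declaring $(ii)\Rightarrow(i)$ ``easy'' and deferring the second extraction to the proof of Lemma~\ref{Lemma:LWRdim1OnePoly}.
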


\begin{proof}
Proving the implication $(ii) \Rightarrow (i)$ is easy.
We now prove the implication $(i) \Rightarrow (ii)$.
By Lemma~\ref{Lemma:limitWR4}, there exists
a sequence $(({\alpha}_n, {\beta}_n, {\gamma}_n), n \in {\N})$ in ${\A}^3$
s.t. for all $n \in {\N}$ we have:
          (1) $h_1({\alpha}_n) \neq 0$,
          (2) $h_2({\alpha}_n) \neq 0$, 
          (3) $r_1 ({\alpha}_n, {\beta}_n) = 0$, 
          (4) $r_2 ({\alpha}_n, {\beta}_n, {\gamma}_n) = 0$, 
(5) $\lim_{n \rightarrow \infty} \, ({\alpha}_n, {\beta}_n, {\gamma}_n) = (0, 0, {\gamma})$.
Following the proof of Lemma~\ref{Lemma:LWRdim1OnePoly}, 
we know that for $n$ large enough
the product 
$({\beta}_n - {\varphi}_1({\alpha}_n)) \cdots
({\beta}_n - {\varphi}_c({\alpha}_n))$ is zero.
Therefore, from one of the sequences
$({\beta}_n - {\varphi}_1({\alpha}_n)), \ldots, 
({\beta}_n - {\varphi}_c({\alpha}_n))$,
say the $j$-th, one can extract an (infinite) sub-sequence
whose terms are all zero.
Thus, without loss of generality, we assume that
${\beta}_n = {\varphi}_j ({\alpha}_n)$ holds, for all $n \in {\N}$.
Hence, for all $n \in {\N}$, we have
$f^j_2({\alpha}_n,{\gamma}_n) = r_2({\alpha}_n, {\beta}_n, {\gamma}_n) = 0$.
Together with Equation~(\ref{eq:facto2})
and following the proof of Lemma~\ref{Lemma:LWRdim1OnePoly}, 
we deduce the desired result.
\end{proof}

Lemmas~\ref{Lemma:NegativePowers} and \ref{Lemma:LWRdim1TwoPolys}
immediately imply the following.

\begin{Proposition}
\label{Propo:LWRdim1OneTwoPolys}
For all  $\gamma \in {\C}$, the following two conditions
are equivalent.
\begin{enumerate}
\item[$(i)$] $(0, 0, \gamma) \in \limit{W(r_1, r_2)}$ holds,
\item[$(ii)$] there exist integers $j,k$ with
              $1 \leq j \leq c$ and $1 \leq k \leq a$, such that 
               ${\varphi}_j (0) = 0$ and ${\psi}_k (0) = \gamma$.
\end{enumerate}
\end{Proposition}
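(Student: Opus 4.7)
The plan is to obtain the equivalence as an immediate consequence of Lemma~\ref{Lemma:LWRdim1TwoPolys} combined with Lemma~\ref{Lemma:NegativePowers}. The heart of the argument is to translate statements about limits of evaluations of Puiseux series into statements about their constant terms: since $\varphi_1,\ldots,\varphi_c$ and $\psi_1,\ldots,\psi_a$ have non-negative order by construction, Lemma~\ref{Lemma:NegativePowers} guarantees that evaluating them along any sequence tending to $0$ produces a sequence converging to the value of the series at $0$.

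For the implication $(i)\Rightarrow (ii)$, I would start from (i) and apply Lemma~\ref{Lemma:LWRdim1TwoPolys} to obtain indices $j,k$ with $1\le j\le c$, $1\le k\le a$ and sequences $(\alpha_n)$, $(\beta_n)$ satisfying the conditions (a)--(d) of that lemma. Because $j\le c$, the series $\varphi_j$ has non-negative order, and since $\alpha_n\to 0$, Lemma~\ref{Lemma:NegativePowers} yields $\varphi_j(\alpha_n)\to \varphi_j(0)$. Condition (c) and the first coordinate of (d) then force $\beta_n=\varphi_j(\alpha_n)\to 0$, so $\varphi_j(0)=0$. Symmetrically, $k\le a$ gives $\psi_k$ non-negative order, hence $\psi_k(\beta_n)\to \psi_k(0)$; the third coordinate of (d) identifies this limit as $\gamma$, giving $\psi_k(0)=\gamma$.

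For the converse $(ii)\Rightarrow (i)$, I would construct by hand sequences meeting the conditions of Lemma~\ref{Lemma:LWRdim1TwoPolys}. Since $h_1,h_2\in\C[X_1]$ are non-zero univariate polynomials, their zero sets are finite, so one may pick a sequence $(\alpha_n)$ of complex numbers tending to $0$ with $h_1(\alpha_n)h_2(\alpha_n)\neq 0$ for every $n$ and small enough in modulus to lie in a disk of convergence on which $\varphi_j$ is defined. Setting $\beta_n=\varphi_j(\alpha_n)$, Lemma~\ref{Lemma:NegativePowers} gives $\beta_n\to \varphi_j(0)=0$; discarding finitely many initial terms, we may assume the $\beta_n$ lie in a disk of convergence of $\psi_k$, so $\psi_k(\beta_n)$ is well defined and, by the same lemma, converges to $\psi_k(0)=\gamma$. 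Conditions (a)--(d) of Lemma~\ref{Lemma:LWRdim1TwoPolys} are thus all satisfied, and therefore $(0,0,\gamma)\in\limit{W(r_1,r_2)}$.

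The only subtle point is the bookkeeping on the domains of convergence: one must ensure that $\varphi_j$ can be evaluated at $\alpha_n$ and $\psi_k$ at $\beta_n$ for all sufficiently large $n$. This is a standard argument once it is observed that $\alpha_n\to 0$ and $\beta_n\to 0$, combined with the fact that removing finitely many initial terms of a sequence does not affect its limit behaviour.
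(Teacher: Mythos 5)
Your proof is correct and follows exactly the approach the paper indicates: the paper dispatches Proposition~\ref{Propo:LWRdim1OneTwoPolys} with the single sentence that ``Lemmas~\ref{Lemma:NegativePowers} and \ref{Lemma:LWRdim1TwoPolys} immediately imply the following,'' and your write-up simply spells out the details of that implication (non-negative order plus a sequence tending to $0$ gives convergence to the constant term, in both the forward and backward directions). The bookkeeping on domains of convergence that you flag at the end is indeed the only point one must be mildly careful about, and you handle it correctly by discarding finitely many initial terms.
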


Therefore, applying  Puiseux theorem to $r_1$ and $f^j_2$,
then checking the constant terms of the series
${\psi}_1, \ldots, {\psi}_b $ provides a way to compute
all $\gamma \in {\C}$ such that $(0, 0, \gamma)$
is a limit point of $W(r_1, r_2)$.
Theorem~\ref{thrm:PuiseuxParamRC} in Sections~\ref{sec:theory}
states this principle formally
for an arbitrary regular chain $R$ in dimension one.

Finally, one should also consider the case 
$h_1(0) \neq 0, h_2(0) = 0$.
In fact, it is easy to see that this latter case can be handled
in a similar manner as the case $h_1(0) = 0, h_2(0) = 0$.

\section{Puiseux expansions of a regular chain}
\label{sec:theory}
In this section, we introduce the notion of Puiseux expansions
of a regular chain, motivated by the work of~\cite{MaurerJoseph80,Mo92} on 
Puiseux expansions of space curves.

\begin{Lemma}
\label{Lemma:V0}
Let $R=\{r_1,\ldots,r_{s-1}\}\subset\C[X_1<\cdots<X_s]$
be a strongly normalized regular chain whose saturated ideal has 
dimension one.
Recall that $h_R(X_1)$ denotes the product of the initials of polynomials in $R$.
Let $\rho>0$ be small enough such that the set $0<\abs{X_1}<\rho$
does not contain any zeros of $h_R$.
Denote by $U_{\rho} := \{x=(x_1,\ldots,x_s)\in\C^s \mid 0<\abs{x_1}<\rho \}$.
Denote by $V_{\rho}(R) := V(R)\cap U_{\rho}$.
Then we have $W(R)\cap U_{\rho}=V_{\rho}(R)$.
\iflongversion
Let $R' := \{ \primpart{r_1},\ldots,\primpart{r_{s-1}}\}$. 
Then $V_{\rho}(R)=V_{\rho}(R')$.
\fi
\end{Lemma}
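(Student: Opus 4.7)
The plan is to exploit the fact that strong normalization plus one-dimensionality collapses $h_R$ to a polynomial in $X_1$ alone. Since $R$ is strongly normalized and $\sat{R}$ has dimension $s-e=1$, every initial $h_i$ lies in $\C[X_1]$, hence so does their product $h_R$. By the choice of $\rho$, the only root of $h_R$ in the disc $|X_1|<\rho$ is possibly $X_1=0$; therefore on the punctured set $U_\rho$ the function $x\mapsto h_R(x_1)$ never vanishes.

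Given this, I would prove the first equality by a direct chain of inclusions. Starting from the definition $W(R)=V(R)\setminus V(h_R)$, intersection with $U_\rho$ gives
\begin{equation*}
W(R)\cap U_\rho \;=\; \bigl(V(R)\cap U_\rho\bigr)\setminus V(h_R).
\end{equation*}
Because $h_R(x_1)\neq 0$ for every $x\in U_\rho$, the set $U_\rho$ is disjoint from $V(h_R)$, so removing $V(h_R)$ changes nothing, and we obtain $W(R)\cap U_\rho = V(R)\cap U_\rho = V_\rho(R)$, as required.

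For the second assertion in the long version, I would handle the primitive-part replacement componentwise. Write each $r_i = c_i\cdot\primpart{r_i}$, where $c_i$ denotes the content of $r_i$ with respect to its main variable. Then $c_i$ divides the initial $h_i$ in $\C[X_1,\ldots,X_i]$; since $h_i\in\C[X_1]$ by strong normalization and $\C[X_1]$ is a UFD factorially embedded in $\C[X_1,\ldots,X_i]$, any divisor of $h_i$ in the bigger ring must itself lie in $\C[X_1]$ (up to a constant unit). Consequently $c_i\in\C[X_1]$ and $c_i\mid h_R$, so the zero set of $c_i$ is contained in $V(h_R)$ and in particular $c_i(x_1)\neq 0$ on $U_\rho$. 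The factorization $r_i(x)=c_i(x_1)\cdot\primpart{r_i}(x)$ then gives $r_i(x)=0 \iff \primpart{r_i}(x)=0$ for $x\in U_\rho$, proving $V_\rho(R)=V_\rho(R')$.

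There is no real technical obstacle here; the entire lemma is a bookkeeping consequence of the strong-normalization hypothesis, which keeps $h_R$ and all contents inside the one-variable ring $\C[X_1]$ where the choice of $\rho$ takes direct effect. The only mildly non-obvious point is the unit/divisibility argument showing $c_i\in\C[X_1]$, and even this reduces to the standard fact that factorizations in $\C[X_1]$ persist in $\C[X_1,\ldots,X_i]$.
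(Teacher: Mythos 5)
Your proof is correct and follows essentially the same route as the paper: both arguments reduce to the observation that $U_\rho\cap V(h_R)=\emptyset$ by the choice of $\rho$, so that removing $V(h_R)$ from $V(R)\cap U_\rho$ changes nothing. Your treatment of the primitive-part claim (showing each content $c_i$ lies in $\C[X_1]$ and divides $h_R$, hence is nonvanishing on $U_\rho$) correctly fills in the step the paper dismisses with ``similarly.''
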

\begin{proof}
Let $x\in W(R)\cap U_{\rho}$, then $x\in V(R)$ and $x\in U_{\rho}$ hold, 
which implies that $W(R)\cap U_{\rho}\subseteq V(R)\cap U_{\rho}$.
Let $x\in V(R)\cap U_{\rho}$. 
Since $U_{\rho}\cap V(h_R)=\emptyset$, 
we have $x\in W(R)$. Thus $V(R)\cap U_{\rho}\subseteq W(R)\cap U_{\rho}$.
So $W(R)\cap U_{\rho}=V_{\rho}(R)$. 
\iflongversion
Similarly we have $V_{\rho}(R)=V_{\rho}(R')$.
\fi
\end{proof}

\begin{Notation}
Let $W \subseteq \C^{s}$.
Denote ${\rm lim}_{0}(W):=\{x=(x_1,\ldots,x_s)\in\C^s\mid x\in \limit{W}
        \ {\rm and} \ x_1=0\}$.
\end{Notation}

\begin{Lemma}
\label{Lemma:lim0}
Let $R=\{r_1,\ldots,r_{s-1}\}\subset\C[X_1<\cdots<X_s]$.
Then we have ${\rm lim}_{0}(W(R))={\rm lim}_{0}(V_{\rho}(R))$.
\end{Lemma}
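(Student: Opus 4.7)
The plan is to reduce the question to the local picture around $X_1=0$ using Lemma~\ref{Lemma:V0}. The key observation is that a point $x$ with $x_1=0$ can only be approached by sequences whose first coordinate tends to $0$, so those sequences eventually enter the annular neighbourhood $U_\rho$ and thus, by Lemma~\ref{Lemma:V0}, eventually lie in $V_\rho(R)$ itself.

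Concretely, I would prove the two inclusions as follows. For $\lim_0(W(R)) \subseteq \lim_0(V_\rho(R))$: take $x \in \lim_0(W(R))$ and a sequence $(y_n) \subseteq W(R) \setminus \{x\}$ with $y_n \to x$ in the Euclidean topology. Since $R$ is strongly normalized, $h_R \in \C[X_1]$ is a polynomial in the single variable $X_1$; the constraint $y_n \in W(R)$ means $h_R(y_{n,1}) \neq 0$, and together with $y_{n,1} \to 0$ this forces $0 < |y_{n,1}| < \rho$ for all $n$ sufficiently large. Hence $y_n \in W(R) \cap U_\rho = V_\rho(R)$ eventually. Because $x_1 = 0$ puts $x$ outside $U_\rho$ and therefore outside $V_\rho(R)$, the point $x$ is a non-trivial limit point of $V_\rho(R)$, i.e.\ $x \in \lim_0(V_\rho(R))$.

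For the reverse inclusion, take $x \in \lim_0(V_\rho(R))$ and a sequence in $V_\rho(R) \setminus \{x\}$ converging to $x$. By Lemma~\ref{Lemma:V0} the sequence lies in $W(R)$, so $x$ is a limit point of $W(R)$; to conclude $x \in \lim_0(W(R))$ it only remains to check $x \notin W(R)$. This is where the standing hypothesis that $0$ is a root of $h_R$ (which is the relevant case inherited from Section~\ref{sec:limitpoints}: the analysis is carried out around each root of $h_R$) is used: since $h_R \in \C[X_1]$, one has $h_R(x) = h_R(0) = 0$, so $x \notin W(R)$ as required.

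The argument is a straightforward set-theoretic manipulation once Lemma~\ref{Lemma:V0} is in hand; the only delicate point—and the one I would be most careful to highlight in the write-up—is the role of the hypothesis $h_R(0)=0$. Without it, a limit point of $V_\rho(R)$ with $x_1=0$ could still belong to $W(R)$, and would then fail to be \emph{non-trivial} in the sense of $\limit{\cdot}$, breaking the equality. In the context of the paper this is harmless because the overall algorithm already reduces to the case where $0$ is a root of $h_R$.
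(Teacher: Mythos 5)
Your proof is correct and follows the same route as the paper: both reduce to Lemma~\ref{Lemma:V0} and then observe that behaviour near the fibre $X_1=0$ depends only on the intersection with the punctured neighbourhood $U_\rho$. The paper compresses the second step into the single unproved assertion ${\rm lim}_{0}(W(R))={\rm lim}_{0}(W(R)\cap U_{\rho})$; your sequence argument is precisely the missing justification for that equality, so the two proofs are in substance identical. Your remark about the implicit hypothesis $h_R(0)=0$ is a genuine catch: as you note, if $h_R(0)\neq 0$ the reverse inclusion can fail (a point of $V(R)$ on the fibre $X_1=0$ would be a non-trivial limit point of $V_\rho(R)$ but a \emph{trivial} one of $W(R)$), and the paper's one-line proof tacitly relies on the same assumption. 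One small refinement you could add: in the forward inclusion, $h_R(0)=0$ is not an extra hypothesis but is forced --- since $x\in V(R)$ (as a limit of points of $V(R)$) and $x\notin W(R)$, one gets $x\in V(h_R)$, i.e.\ $h_R(0)=0$ --- so it is really only the reverse inclusion that needs this assumption from context.
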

\begin{proof}
By Lemma~\ref{Lemma:V0}, we have $W(R)\cap U_{\rho}(R)=V_{\rho}(R)$.
On the other hand   ${\rm lim}_{0}(W(R))={\rm lim}_{0}(W(R)\cap U_{\rho}(R))$.
Thus ${\rm lim}_{0}(W(R))={\rm lim}_{0}(V_{\rho}(R))$ holds.
\end{proof}

\begin{Lemma}
\label{Lemma:zero-series}
Let $R$ be as in Lemma~\ref{Lemma:V0}.
For $1\leq i\leq s-1$, let $d_i := \deg(r_i, X_{i+1})$.
Then $R$ generates a zero-dimensional ideal in $\C(\langle X_1^*\rangle)[X_2,\ldots,X_s]$.
Let $V^{*}(R)$ be the zero set of $R$ in $\C(\langle X_1^*\rangle)^{s-1}$.
Then $V^{*}(R)$ has exactly $\prod_{i=1}^{s-1} d_i$  points, counting multiplicities.
\end{Lemma}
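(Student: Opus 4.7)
\bigskip

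\noindent\textbf{Proof proposal.} The plan is to work in the field $K := \C(\langle X_1^{*}\rangle)$ and exploit the fact, recorded in the Puiseux Theorem subsection of Section~\ref{sec:preliminaries}, that $K$ is algebraically closed. The key observation is that since $R$ is strongly normalized, each initial $h_i$ of $r_i$ lies in $\C[X_1] \setminus \{0\}$ and is therefore a unit of $K$. Hence, regarded in $K[X_2, \ldots, X_s]$, the set $R$ is a triangular set in which, for every $1 \leq i \leq s-1$, the polynomial $r_i$ has main variable $X_{i+1}$, degree $d_i$ in $X_{i+1}$, and an invertible leading coefficient $h_i \in K^{\times}$.

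From here I would proceed by induction on $i$. Let $V_i^{*}$ denote the zero set of $\{r_1, \ldots, r_i\}$ in $K^{i}$ (in the coordinates $X_2, \ldots, X_{i+1}$). For the base case $i=1$, the polynomial $r_1 \in K[X_2]$ has degree $d_1$ and leading coefficient a unit of $K$, so by the algebraic closure of $K$ it has exactly $d_1$ roots counted with multiplicity, whence $\#V_1^{*} = d_1$. For the inductive step, assume $\#V_i^{*} = d_1 \cdots d_i$ counting multiplicities. Given a point $(\varphi_2, \ldots, \varphi_{i+1}) \in V_i^{*}$, specialize $r_{i+1} \in K[X_2, \ldots, X_{i+2}]$ at this point to obtain a univariate polynomial in $X_{i+2}$ over $K$. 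The crucial point here is that, by the strong normalization hypothesis, the leading coefficient $h_{i+1} \in \C[X_1]$ involves only $X_1$ and is therefore unaffected by this substitution; it remains a nonzero element of $K$. Consequently the specialization has degree exactly $d_{i+1}$ in $X_{i+2}$ and, again by algebraic closure, contributes exactly $d_{i+1}$ roots counted with multiplicity. Combining over all points of $V_i^{*}$ yields $\#V_{i+1}^{*} = d_1 \cdots d_{i+1}$.

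Finally, to justify the zero-dimensionality claim, I would note that the triangular-and-unit-leading-coefficient structure gives, by successive pseudo-division (which in this setting is exact division since each $h_i$ is invertible in $K$), a direct presentation of $K[X_2, \ldots, X_s]/\langle R \rangle$ as a free $K$-module spanned by the monomials $X_2^{\alpha_1} \cdots X_s^{\alpha_{s-1}}$ with $0 \leq \alpha_j < d_j$; this module has finite $K$-dimension $\prod_{i=1}^{s-1} d_i$, so $\langle R \rangle$ is zero-dimensional in $K[X_2, \ldots, X_s]$, and this dimension also matches the count of zeros in $K^{s-1}$ with multiplicity by B\'ezout-type reasoning in the algebraically closed field $K$.

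The main obstacle, and the only place where the strong normalization hypothesis is genuinely used, is guaranteeing that the substitution step in the induction does not drop the degree of $r_{i+1}$ in its main variable: without $h_{i+1}\in \C[X_1]$ we would have to argue that the specialized leading coefficient is nonzero in $K$, which would require a regularity argument modulo the already-chosen components. Strong normalization bypasses this issue cleanly, so the induction goes through.
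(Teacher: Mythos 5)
Your argument is correct and follows essentially the same route as the paper, whose proof is a one-line appeal to the definition of a regular chain, the B\'ezout bound, and the algebraic closedness of $\C(\langle X_1^{*}\rangle)$; you have simply written out the induction that this citation implicitly invokes. The only (harmless) difference is that you use strong normalization to make each initial a unit of the coefficient field, whereas the paper's phrasing suggests appealing to the regularity of the initials from the definition of a regular chain --- both are available under the hypotheses of Lemma~\ref{Lemma:V0}.
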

\begin{proof}
It follows directly from the definition of regular chain, 
Bezout bound and the fact that $\C(\langle X_1^*\rangle)^{s-1}$
is an algebraically closed field.
\end{proof}

\begin{Definition}
\label{Definition:PuiseuxExpansionOfRC}
We use the notions in Lemma~\ref{Lemma:zero-series}.
Each point in $V^{*}(R)$ is called a {\em Puiseux expansion} of $R$.
\end{Definition}

\begin{Notation}
Let $m=\abs{V^*(R)}$.
Write $V^*(R)= \{\Phi_1, \ldots, \Phi_m \}$
with $\Phi_i=(\Phi_i^1(X_1),\ldots,\Phi_i^{s-1}(X_1))$,
for $i=1,\ldots,m$.
Let $\rho>0$ be small enough such that 
for $1 \leq i \leq m$, $ 1 \leq j \leq s-1$, each $\Phi_i^j(X_1)$
converges in $0<\abs{X_1}<\rho$.
We define
$
V^*_{\rho}(R):=\cup_{i=1}^m \{x\in\C^s\mid 0<\abs{x_1}<\rho, x_{j+1}=\Phi_i^j(x_1), j=1,\ldots,s-1\}.
$

\end{Notation}

\begin{Theorem}
\label{Theorem:VStarPho}
We have $V^*_{\rho}(R)=V_{\rho}(R)$.
\end{Theorem}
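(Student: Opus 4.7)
The plan is to prove the two inclusions separately. The forward inclusion $V^*_{\rho}(R) \subseteq V_{\rho}(R)$ is essentially a matter of convergence and continuous evaluation. Take a point of $V^*_{\rho}(R)$, namely $(x_1, \Phi_i^1(x_1), \ldots, \Phi_i^{s-1}(x_1))$ with $0 < |x_1| < \rho$. Because $\Phi_i \in V^*(R)$, each identity $r_j(X_1, \Phi_i^1(X_1), \ldots, \Phi_i^j(X_1)) = 0$ holds in $\C(\langle X_1^*\rangle)$. By the choice of $\rho$, the Puiseux series $\Phi_i^1, \ldots, \Phi_i^{s-1}$ all converge on the punctured disc $0 < |X_1| < \rho$, so we may evaluate the identity at $x_1$, concluding that $r_j$ vanishes at the point for every $j$. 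The point lies in $U_{\rho}$ by construction, hence in $V_{\rho}(R)$.

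The reverse inclusion $V_{\rho}(R) \subseteq V^*_{\rho}(R)$ is the main content of the theorem. Let $x = (x_1, x_2, \ldots, x_s) \in V_{\rho}(R)$. The plan is to construct, by induction on $j \in \{1, \ldots, s-1\}$, a partial Puiseux expansion $(\varphi_1, \ldots, \varphi_j) \in \C\langle X_1^* \rangle^{\,j}$ of $\{r_1, \ldots, r_j\}$ such that $\varphi_k(x_1) = x_{k+1}$ for $1 \leq k \leq j$. For the base case, since $r_1$ has initial $h_1 \in \C[X_1]$ (strong normalization with $h_R \in \C[X_1]$) which does not vanish at $x_1$, Puiseux's theorem factors $r_1/h_1 = \prod_{k=1}^{d_1}(X_2 - \varphi_k(X_1))$ in $\C\langle X_1^* \rangle[X_2]$; evaluating at $x_1$, one of the $\varphi_k$ must satisfy $\varphi_k(x_1) = x_2$ because the product vanishes there.

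For the inductive step, suppose $(\varphi_1, \ldots, \varphi_j)$ has already been chosen. Substituting into $r_{j+1}$ yields $\tilde r_{j+1}(X_1, X_{j+2}) := r_{j+1}(X_1, \varphi_1, \ldots, \varphi_j, X_{j+2}) \in \C\langle X_1^*\rangle[X_{j+2}]$, whose leading coefficient in $X_{j+2}$ is still the polynomial $h_{j+1}(X_1)$ by strong normalization, and therefore does not vanish at $x_1$. Puiseux's theorem again factors $\tilde r_{j+1}/h_{j+1}(X_1) = \prod_{\ell} (X_{j+2} - \psi_\ell(X_1))$ in $\C\langle X_1^*\rangle[X_{j+2}]$. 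Since $r_{j+1}(x_1, x_2, \ldots, x_{j+2}) = 0$ and the inductive hypothesis gives $\varphi_k(x_1) = x_{k+1}$ for $k \leq j$, we have $\tilde r_{j+1}(x_1, x_{j+2}) = 0$, forcing some $\psi_\ell(x_1) = x_{j+2}$. Setting $\varphi_{j+1} := \psi_\ell$ extends the partial expansion. After $s-1$ steps we obtain an element $\Phi = (\varphi_1, \ldots, \varphi_{s-1})$ of $V^*(R)$ matching $x$, so $x \in V^*_{\rho}(R)$.

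The only delicate point is ensuring that each factor $\varphi_k$ produced by Puiseux's theorem is convergent (not merely formal) and that $\rho$ really is small enough to accommodate all these series simultaneously; convergence is guaranteed by the fact that the input polynomials have convergent (in fact polynomial) coefficients, and the uniform choice of $\rho$ is already built into its definition, as finitely many Puiseux series $\Phi_i^j$ were used. No other obstacle arises.
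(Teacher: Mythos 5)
Your proof is correct and follows essentially the same route as the paper: induction over the chain, the Puiseux factorization of each $r_{j+1}$ specialized at the previously obtained expansions, strong normalization to keep the leading coefficient $h_{j+1}\in\C[X_1]$ nonvanishing on the punctured disc, and the fact that the constructed lift is one of the finitely many elements of $V^*(R)$, so it converges on the disc used to define $\rho$. The only presentational difference is that the paper packages both inclusions into a single chain of set equalities using Equation~(\ref{equation:expansion}) (which it reuses later), while you argue the two inclusions separately with the reverse direction done as an explicit point-by-point lift.
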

\begin{proof}
We prove this by induction on $s$.
For $i=1,\ldots,s-1$, recall that $h_i$  is the initial of $r_i$.
If $s=2$, we have 
$$
r_1(X_1,X_2)=h_1(X_1)\prod_{i=1}^{d_1}(X_2-\Phi_i^1(X_1)).
$$
So $V^*_{\rho}(R)=V_{\rho}(R)$ clearly holds.

Write $R=R'\cup\{r_{s-1}\}$,
      $X' = X_2, \ldots, X_{s-1}$,
      $X=(X_1,X',X_{s})$,
      $x' = x_2, \ldots, x_{s-1}$,
      $x=(x_1,x',x_{s})$, 
       and $m'=\abs{V^*(R')}$.
For $i=1,\ldots,m$, let $\Phi_i=(\Phi_i', \Phi_i^{s-1})$,
where $\Phi_i'$ stands for $\Phi_i^1, \ldots, \Phi_i^{s-2}$.
Assume the theorem holds for $R'$, 
that is $V^*_{\rho}(R')=V_{\rho}(R')$.
For any  $i=1,\ldots,m'$, there exist $i_k\in\{1,\ldots,m\}$, $k=1,\ldots,d_{s-1}$
such that
\begin{equation}
\label{equation:expansion}
r_{s-1}(X_1,X'=\Phi_i', X_{s})=h_1(X_1)\prod_{k=1}^{d_{s-1}}(X_s-\Phi_{i_k}^{s-1}(X_1)).
\end{equation}
Note that $V^*(R)=\cup_{i=1}^{m'}\cup_{k=1}^{d_{s-1}}\{ (X'=\Phi_i', X_s=\Phi_{i_k}^{s-1})\}$.
Therefore, by induction hypothesis and Equation~(\ref{equation:expansion}), 
we have
$$
\begin{array}{rcl}
V^*_{\rho}(R) &=&\cup_{i=1}^{m'}\cup_{k=1}^{d_{s-1}}\{x\mid x\in U_{\rho}, x'=\Phi_i'(x_1), x_s=\Phi_{i_k}^{s-1}(x_1)\}\\
            &=&\cup_{k=1}^{d_{s-1}}\{x\mid (x_1,x')\in V^*_{\rho}(R'), x_s=\Phi_{i_k}^{s-1}(x_1)\}\\
            &=&\{x\mid (x_1,x')\in V^*_{\rho}(R'), r_{s-1}(x_1,x',x_s)=0\}\\
            &=&\{x\mid (x_1,x')\in V_{\rho}(R'), r_{s-1}(x_1,x',x_s)=0\}\\
            &=&V_{\rho}(R).           
\end{array}
$$
\end{proof}

\begin{Theorem}
\label{Theorem:limit}
Let $V^*_{\geq 0}(R):=\{\Phi=(\Phi^1,\ldots,\Phi^{s-1})\in V^*(R)\mid \ord{\Phi^j}\geq 0, j=1,\ldots,s-1 \}$.
Then we have 
$$
{\rm lim}_0(W(R))=\cup_{\Phi\in V^*_{\geq 0}(R)} \{(X_1=0, \Phi(X_1=0))\}.
$$
\end{Theorem}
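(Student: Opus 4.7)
The plan is to prove both inclusions by transporting the question from $W(R)$ to the explicit Puiseux parametrization. First, I would reduce: by Lemma~\ref{Lemma:lim0}, ${\rm lim}_0(W(R))={\rm lim}_0(V_{\rho}(R))$, and by Theorem~\ref{Theorem:VStarPho}, $V_{\rho}(R)=V^{*}_{\rho}(R)$. Thus $V_{\rho}(R)$ is a finite union $\bigcup_{i=1}^{m}\{(\alpha,\Phi_i(\alpha))\mid 0<\abs{\alpha}<\rho\}$ of punctured analytic branches, one per Puiseux expansion of $R$. The proof then reduces to analyzing convergence of points along these branches as $\alpha\to 0$, which is exactly the setting of Lemma~\ref{Lemma:NegativePowers}: the coordinate $\Phi_i^j(\alpha_n)$ converges to a finite complex number as $\alpha_n\to 0$ if and only if $\ord{\Phi_i^j}\geq 0$, in which case the limit is $\Phi_i^j(0)$.

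For the inclusion $\supseteq$, fix $\Phi\in V^{*}_{\geq 0}(R)$ and a sequence $\alpha_n\in\C$ with $0<\abs{\alpha_n}<\rho$ and $\alpha_n\to 0$. Each $(\alpha_n,\Phi(\alpha_n))$ lies in $V^{*}_{\rho}(R)=V_{\rho}(R)\subseteq W(R)$. Applying Lemma~\ref{Lemma:NegativePowers} componentwise, since every $\ord{\Phi^j}\geq 0$, we obtain $\Phi^j(\alpha_n)\to\Phi^j(0)$, hence $(\alpha_n,\Phi(\alpha_n))\to(0,\Phi(0))$. Therefore $(0,\Phi(0))$ lies in ${\rm lim}_0(W(R))$ (the point is distinct from all $(\alpha_n,\Phi(\alpha_n))$ since its first coordinate is $0$ while theirs is nonzero).

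For the inclusion $\subseteq$, let $x=(0,x_2,\ldots,x_s)\in{\rm lim}_0(W(R))={\rm lim}_0(V^{*}_{\rho}(R))$. By definition of a limit point together with Lemma~\ref{Lemma:limitWR4}, there exists a sequence of points of $V^{*}_{\rho}(R)$ converging to $x$; each such point is of the form $(\alpha_n,\Phi_{i_n}(\alpha_n))$ for some index $i_n\in\{1,\ldots,m\}$. Because $m$ is finite, the pigeonhole principle lets us extract an infinite subsequence with a constant index $i_n=i$, so that $\alpha_n\to 0$ and $\Phi_i^j(\alpha_n)\to x_{j+1}$ for every $j=1,\ldots,s-1$. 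The contrapositive of Lemma~\ref{Lemma:NegativePowers} then forces $\ord{\Phi_i^j}\geq 0$, which also yields $x_{j+1}=\Phi_i^j(0)$. Hence $\Phi_i\in V^{*}_{\geq 0}(R)$ and $x=(0,\Phi_i(0))$.

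The main obstacle I anticipate is the extraction of a single-branch subsequence in the $\subseteq$ direction; everything else is a direct unpacking of Lemmas~\ref{Lemma:lim0} and~\ref{Lemma:NegativePowers} together with Theorem~\ref{Theorem:VStarPho}. The extraction is handled by finiteness of $m$, but one should be careful that the resulting subsequence still has $\alpha_n\to 0$ (immediate, since passing to a subsequence preserves convergence). A second point worth noting, but not a real obstacle, is that the Puiseux branches only parametrize $V(R)$ over the punctured disk $0<\abs{X_1}<\rho$; the theorem concerns limits at $X_1=0$, a value which by assumption is a root of $h_R$ and hence not attained by any $\alpha_n$, so the approaching sequence never leaves the parametrized region.
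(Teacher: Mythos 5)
Your proposal is correct and follows essentially the same route as the paper: reduce via Lemma~\ref{Lemma:lim0} and Theorem~\ref{Theorem:VStarPho} to the parametrized set $V^*_{\rho}(R)$, then identify its limit points at $X_1=0$ with the constant terms of the non-negative-order expansions. The only difference is that the paper declares this last identification ``immediate by definition of $V^*_{\geq 0}(R)$,'' whereas you supply the missing details (componentwise use of Lemma~\ref{Lemma:NegativePowers} and the pigeonhole extraction of a single-branch subsequence), which is a welcome elaboration rather than a divergence.
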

\begin{proof}
By definition of $V^*_{\geq 0}(R)$, 
we immediately have 
$$
{\rm lim}_0(V^*_{\rho}(R))=\cup_{\Phi\in V^*_{\geq 0}(R)} \{(X_1=0, \Phi(X_1=0))\}.
$$
Next, by Theorem~\ref{Theorem:VStarPho}, 
we have $V^*_{\rho}(R)=V_{\rho}(R)$.
Thus, we have ${\rm lim}_0(V^*_{\rho}(R))={\rm lim}_0(V_{\rho}(R))$.
Besides, 
with Lemma~\ref{Lemma:lim0}, we have ${\rm lim}_{0}(W(R))={\rm lim}_{0}(V_{\rho}(R))$.
Thus the theorem holds.
\end{proof}

\begin{Definition}
\label{Definition:PuiseuxParamRC}
Let $V^*_{\geq 0}(R):=\{\Phi=(\Phi^1,\ldots,\Phi^{s-1})\in V^*(R)\mid \ord{\Phi^j}\geq 0, j=1,\ldots,s-1 \}$.
Let $M=\abs{V^*_{\geq 0}(R)}$.
For each $\Phi_i=(\Phi_i^1,\ldots,\Phi_i^{s-1})\in V^*_{\geq 0}(R)$, $1\leq i\leq M$,
we know that $\Phi_i^j\in\C(\langle X_1^*\rangle)$.
Moreover, by Equation~(\ref{equation:expansion}), 
we know that for $j=1,\ldots,s-1$, $\Phi_i^j$ is a Puiseux expansion
of $r_{j}(X_1,X_2=\Phi_i^1,\ldots,X_j=\Phi_i^{j-1} , X_{j+1})$.
Let $\varsigma_{i,j}$ be the ramification index of $\Phi_i^j$
and $(T^{\varsigma_{i,j}}, X_{j+1}=\varphi_i^j(T))$, where $\varphi_i^j\in\C\langle T\rangle$,
be the corresponding Puiseux parametrization of $\Phi_i^j$.
Let $\varsigma_i$ be the least common multiple of $\{\varsigma_{i,1},\ldots,\varsigma_{i,s-1}\}$.
Let $g_{i}^{j}=\varphi_i^j(T=T^{\varsigma_i/\varsigma_{i,j}})$.
We call the set $\frak{G}_R := \{(X_1=T^{\varsigma_{i}}, X_2=g_i^1(T),\ldots,X_{s}=g_i^{s-1}(T)), i=1,\ldots,M\}$
a {\em system of Puiseux parametrizations} of $R$.
%
\end{Definition}

\begin{Theorem}
\label{thrm:PuiseuxParamRC}
We have
$$
{\rm lim}_0(W(R))=\frak{G}_R(T=0).
$$
\end{Theorem}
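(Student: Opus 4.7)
The plan is to chain Theorem~\ref{Theorem:limit} with a direct identification, for each Puiseux expansion $\Phi \in V^*_{\geq 0}(R)$, of its constant term with the value at $T=0$ of the associated Puiseux parametrization in $\frak{G}_R$.

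First, I would invoke Theorem~\ref{Theorem:limit} to rewrite
$$
{\rm lim}_0(W(R)) = \bigcup_{\Phi \in V^*_{\geq 0}(R)} \{(0, \Phi^1(X_1=0), \ldots, \Phi^{s-1}(X_1=0))\},
$$
observing that each $\Phi^j$ is a convergent Puiseux series of non-negative order, so its evaluation at $X_1 = 0$ is simply its constant term and is well-defined (this is also consistent with Lemma~\ref{Lemma:NegativePowers}). Next, I would recall from Definition~\ref{Definition:PuiseuxParamRC} that the enumeration $\Phi_1, \ldots, \Phi_M$ of $V^*_{\geq 0}(R)$ is precisely the enumeration used to build $\frak{G}_R$, so both sides of the claimed equality are sets indexed by $i \in \{1, \ldots, M\}$, with common first coordinate $X_1 = T^{\varsigma_i}|_{T=0} = 0$. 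It therefore suffices to show, for every such $i$ and every $j \in \{1, \ldots, s-1\}$, that $g_i^j(0) = \Phi_i^j(X_1=0)$.

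The key calculation exploits the defining relation between a Puiseux expansion and its Puiseux parametrization. By construction, $(T^{\varsigma_{i,j}}, \varphi_i^j(T))$ parametrizes $\Phi_i^j$, which means $\varphi_i^j(T) = \Phi_i^j(T^{\varsigma_{i,j}})$ as elements of $\C\langle T \rangle$; in particular, their constant terms agree, so $\varphi_i^j(0) = \Phi_i^j(X_1=0)$. Since $g_i^j(T) = \varphi_i^j(T^{\varsigma_i/\varsigma_{i,j}})$ with $\varsigma_i/\varsigma_{i,j}$ a positive integer (by definition of $\varsigma_i$ as a least common multiple), substituting $T=0$ gives $g_i^j(0) = \varphi_i^j(0) = \Phi_i^j(X_1=0)$, as required.

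The main obstacle is essentially bookkeeping: one must keep careful track of the three ramification indices $\varsigma_{i,j}$, $\varsigma_i$, and the passage from fractional powers in $X_1$ to integer powers in $T$, making sure that the reparametrization through $\varsigma_i$ does not disturb the constant term. No delicate analytic argument beyond Theorem~\ref{Theorem:limit} is required, since the non-negativity of the orders in $V^*_{\geq 0}(R)$ already guarantees that every relevant series has a well-defined value at the origin.
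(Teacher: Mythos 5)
Your proposal is correct and follows exactly the route the paper takes: the paper's proof simply states that the theorem "follows directly from Theorem~\ref{Theorem:limit} and Definition~\ref{Definition:PuiseuxParamRC}", and your argument supplies the straightforward verification that the reparametrization $T \mapsto T^{\varsigma_i/\varsigma_{i,j}}$ preserves constant terms, so that $\frak{G}_R(T=0)$ enumerates precisely the constant terms of the expansions in $V^*_{\geq 0}(R)$. No issues.
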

\begin{proof}
It follows directly from Theorem~\ref{Theorem:limit} and Definition~\ref{Definition:PuiseuxParamRC}.
\end{proof}

\section{Puiseux parametrization in finite accuracy}
\label{sec:newton}

In this section, we define the Puiseux parametrizations 
of a polynomial $f\in\C\langle X\rangle[Y]$ in finite accuracy, 
see Definition~\ref{Definition:PuiseuxParaFinite}.

\ifcomment
This definition is different from the usual 
sense of Puiseux parametrizations up to some accuracies
in the sense that we do not require the singular 
part of Puiseux parametrizations to be computed.
\fi

For $f\in\C\langle X\rangle[Y]$, 
we define the approximation $\widetilde{f}$ of $f$
for a given finite accuracy, see Definition~\ref{Definition:approximation}.
This approximation $\widetilde{f}$ of $f$ is a polynomial in $\C[X, Y]$.
In Section~\ref{sec:bounds}, we prove that 
in order to compute a Puiseux parametrizations of $f$ 
of a given accuracy,
it suffices to compute a Puiseux parametrization of $\widetilde{f}$
of some finite accuracy.

In this section, we review and adapt  the classical Newton-Puiseux 
algorithm to compute Puiseux parametrizations
of a polynomial $f\in\C[X, Y]$ of a given accuracy.
Since we do not need to compute the singular
part of Puiseux parametrizations, 
the usual requirement $\discrim{f, Y}\neq 0$
is dropped.


\begin{Definition}
\label{Definition:approximation}
  Let $f=\sum_{i=0}^{\infty} a_i X^i \in \C[[X]]$.  
  For any $\tau\in\N$, 
  let $f^{(\tau)} := \sum_{i=0}^{\tau} a_iX^i$.  We call
  $f^{(\tau)}$ the {\em polynomial part} of $f$ %
{\em of accuracy} $\tau+1$.
Now, let  $f=\sum_{i=0}^{d} a_i(X) Y^i\in\C\langle X\rangle[Y]$.
For any $\tau\in\N$, 
we call $\widetilde{f}^{(\tau)} := \sum_{i=0}^d a_i^{(\tau)} Y^i$
the {\em approximation of $f$ of accuracy $\tau+1$}.
\end{Definition}

\begin{Definition}
\label{Definition:PuiseuxParaFinite}
Let $f\in\C\langle X\rangle[Y]$, $\deg(f, Y)>0$.
Let $\sigma, \tau\in{\N}_{>0}$ and $g(T)=\sum_{k=0}^{\tau-1}b_k T^k$.
The pair $(T^{\sigma}, g(T))$ is called 
{\em a Puiseux parametrization of $f$ of accuracy $\tau$} if 
there exists an irreducible Puiseux parametrization $(T^{\varsigma}, \varphi(T))$
of $f$ such that
\begin{itemizeshort}
\item[$(i)$] $\sigma$ divides  $\varsigma$.
\item[$(ii)$] $\gcd{\sigma, b_0,\ldots,b_{\tau-1}}=1$.
\item[$(iii)$] $g(T^{\varsigma/\sigma})$ is the polynomial part of $\varphi(T)$
of accuracy $(\varsigma/\sigma)(\tau-1)+1$.
\end{itemizeshort}
Note that if $\sigma=\varsigma$, then 
$g(T)$ is simply the polynomial part of $\varphi(T)$ of accuracy $\tau$.
\end{Definition}

We borrow the following notion from~\cite{Duv89}
in order to state an algorithm 
for computing Puiseux parametrizations.

\begin{Definition}
A $\C$-term\footnote{It is a simplified version of Duval's definition.} 
is defined as a triple $t=(q, p,\beta)$, where $q$ and $p$ 
are coprime integers, $q>0$ and $\beta \in {\C}$ is non-zero.
A $\C$-expansion is a sequence $\pi=(t_1,t_2,\ldots)$ of $\C$-terms,
where $t_i=(q_i, p_i,\beta_i)$, 
We say that  $\pi$ is {\em finite} if there are only finitely many elements in $\pi$.
\end{Definition}

\begin{Definition}
\label{Definition:ExpansionParametrization}
Let $\pi=(t_1,\ldots,t_N)$ be a finite $\C$-expansion.
We define a pair $(T^{\sigma}, g(T))$ of polynomials in ${\C}[T]$ in the following manner:
\begin{itemizeshort}
\item if $N=1$, set $\sigma=1$, $g(T)=0$ and $\delta_N = 0$,
\item otherwise, let $a := \prod_{i=1}^N q_i$, 
$c_i := \sum_{j=1}^i\left( p_j\prod_{k=j+1}^Nq_k \right)$ ($1\leq i\leq N$), 
and $\delta_i := c_i/\gcd{a, c_1,\ldots,c_N}$ ($1\leq i\leq N$).
Set $\sigma := a/\gcd{a, c_1,\ldots,c_N}$ and $g(T) := \sum_{i=1}^N \beta_i T^{\delta_i}$.
\end{itemizeshort}
We call the pair $(T^{\sigma}, g(T))$ {\em the corresponding Puiseux para-metrization of $\pi$ of accuracy $\delta_N+1$}.
Denote by {\sf ConstructParametrization} an algorithm to compute $(T^{\sigma}, g(T))$
from $\pi$.
\end{Definition}

\begin{Definition}
Let $f\in\C\langle X\rangle[Y]$ and 
write $f$ as $f(X, Y) := \sum_{i=0}^d \left(\sum_{j=0}^{\infty} a_{i, j}X^j \right) Y^i$.
The {\em Newton Polygon} of $f$
is defined as the lower part of the convex hull 
of the set of points $(i,j)$
in the plane such that $a_{i,j}\neq 0$.
\end{Definition}

\iflongversion

Let $f\in\C\langle X\rangle[Y]$.
Next we present an algorithm, called {\sf NewtonPolygon} to compute 
the segments in the Newton Polygon of $f$.
This algorithm is from R.J. Walker's book~\cite{RW78}.

\begin{itemizeshort}
\item[] $\NewtonPolygon{f, I}$
\item[] {\bf Input:} A polynomial $f\in\C\langle X\rangle[Y]$; 
a controlling flag $I$, whose value is $1$ or $2$.
\item[] {\bf Output:} The Newton Polygon of $f$.
If $I=1$, only segments with non-positive slopes
are computed.
If $I=2$, only segments with negative slopes are computed.

\item[] {\bf Description:}
\begin{itemizeshort}
\item Write $f$ as $f=\sum_{i=0}^d b_{i}(X)Y^i$, where $b_i(X)=\sum_{j=0}^{\infty} a_{i,j}X^j$.
\item For $0\leq i\leq d$, define $\delta_i := \ord{b_i}$.
\item For $0\leq i\leq d$, we plot the points $P_i$ with coordinates $(i,\delta_i)$;
      we omit $P_i$ if $\delta_i=\infty$.
\item We join $P_0$ to $P_d$ with a convex polygonal arc each of whose vertices
is a $P_i$ and such that no $P_i$ lies below the arc.
\item If $I=1$, output all segments with non-positive slopes in the polygon;
      if $I=2$, output all segments with negative slopes in the polygon.
\end{itemizeshort}
\end{itemizeshort}
\else

Let $f\in\C\langle X\rangle[Y]$.
We denote by $\NewtonPolygon{f, I}$ an algorithm to compute 
the segments in the Newton Polygon of $f$, where $I$ is flag
controlling the algorithm specification as follows.
If $I=1$, only segments with non-positive slopes
are computed.
If $I=2$, only segments with negative slopes are computed.
Such an algorithm can be found in~\cite{RW78}.

\fi


Next we present the specification 
of several other sub-algorithms 
which are necessary to present Algorithm~\ref{alg:newton}
for computing Puiseux parametrization of some finite accuracy
as defined in Definition~\ref{Definition:PuiseuxParaFinite}.
\smallskip

\begin{itemizeshort}
\item[] $\NewPolynomial{f, t, \ell}$
\item[] {\bf Input:} $f\in\C[X, Y]$; a $\C$-term $t=(q, p, \beta)$; $\ell\in\N$.
\item[] {\bf Output:} A polynomial 
$X^{-\ell}f(X^q, X^p(\beta + Y))\in\C[X, Y]$.
\end{itemizeshort}
\smallskip

\begin{itemizeshort}
\item[] $\SegmentPoly{f, \Delta}$
\item[] {\bf Input:}  $f\in\C[X, Y]$; $\Delta$ is a segment 
of the Newton Polygon of $f$.
\item[] {\bf Output:} A quadruple $(q, p, \ell, \phi)$
such that the following holds
\begin{itemizeshort}
\item $q, p,\ell\in\N$; $\phi\in\C[Z]$; $q$ and $p$ are coprime, $q>0$.
\item For any $(i,j)\in\Delta$, we have $qj+pi=\ell$.
\item Let $i_0:=\min{\{i\mid (i,j)\in\Delta\}}$, 
we have 
$\phi=\sum_{(i,j)\in\Delta}a_{i,j}Z^{(i-i_0)/q}.$
\end{itemizeshort}
\end{itemizeshort}
\smallskip

\begin{Theorem}
Algorithm~\ref{alg:newton} terminates and is correct.
\end{Theorem}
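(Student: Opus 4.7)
The plan is to verify correctness and termination of the algorithm by relating its behavior to the classical Newton--Puiseux construction, while accounting for the two simplifications in our setting: we work up to finite accuracy only, and we drop the usual $\discrim{f,Y}\neq 0$ hypothesis because we do not need to recover the singular part uniquely.

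For correctness, I would argue by induction on the depth of recursion that the $\C$-expansion $\pi=(t_1,\ldots,t_N)$ produced along any branch of the algorithm corresponds to an irreducible Puiseux parametrization $(T^\varsigma,\varphi(T))$ of $f$ in the sense required by Definition~\ref{Definition:PuiseuxParaFinite}. The key geometric fact is standard: each segment returned by \NewtonPolygon{f,I} with slope $-p/q$ (where $\gcd(p,q)=1$, $q>0$) forces, for any Puiseux root of $f$ living on that segment, a leading behaviour of the form $\beta X^{p/q}+\cdots$ with $\beta$ a nonzero root of the characteristic polynomial $\phi$ returned by \SegmentPoly{f,\Delta}. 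The substitution performed by \NewPolynomial{f,t,\ell}, namely $(X,Y)\mapsto (X^q,X^p(\beta+Y))$ followed by division by $X^\ell$, precisely removes the already-computed leading term and rescales the Puiseux variable by $T\mapsto T^q$, so that the branches of the new polynomial are in bijection with the branches of $f$ whose leading behaviour matches the chosen term $t_i=(q,p,\beta)$. By induction, recursing on the resulting polynomial produces a $\C$-expansion whose assembled parametrization through {\sf ConstructParametrization} has the form prescribed in Definition~\ref{Definition:ExpansionParametrization}. The flag $I$ ensures that during the first call we retain all relevant branches (slopes $\leq 0$), while inner calls restrict to strictly negative slopes, which is exactly what the theory of Newton polygons demands once the leading monomial has been stripped. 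Conditions $(i)$--$(iii)$ of Definition~\ref{Definition:PuiseuxParaFinite} are then read directly from the construction of $\sigma$ and the $\delta_i$: the common factor removal in $\gcd(a,c_1,\ldots,c_N)$ makes $\sigma$ divide the true ramification index $\varsigma$, guarantees the coprimality of $\sigma$ with the support of $g(T)$, and identifies $g(T^{\varsigma/\sigma})$ with the polynomial part of $\varphi(T)$ of the claimed accuracy.

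For termination, the crucial invariant is that the sequence $(\delta_i)_{i\geq 1}$ of accuracies produced along any branch is strictly increasing. Indeed, each new $\C$-term appends an exponent $p_i/q_i>0$ (after the first step, slopes are strictly negative, hence $p_i>0$ and $q_i\geq 1$), and the rescaling in \NewPolynomial{} multiplies previous exponents by $q_i$, so $\delta_i>\delta_{i-1}$. The algorithm halts on each branch as soon as $\delta_N$ reaches the prescribed accuracy $\tau$, which happens after finitely many steps. Branching is also finite: the number of segments of any Newton polygon is bounded by $\deg(f,Y)$, and the number of roots of each characteristic polynomial $\phi$ is bounded by the length of the corresponding segment projected onto the $Y$-axis, so the recursion tree has bounded fan-out at each node.

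The main obstacle I expect is justifying the correctness of the algorithm without the hypothesis $\discrim{f,Y}\neq 0$. In the classical presentation, squarefreeness is used to guarantee that after finitely many Newton-polygon steps every branch becomes a regular point, allowing the algorithm to certify full Puiseux series. Here, since we only demand accuracy $\tau$, we must instead argue that at each step the segments with non-positive (resp.\ negative) slope still exhaust all branches whose leading exponent is non-negative (resp.\ positive), even when several branches coincide up to the current accuracy; this is where the bookkeeping between $\sigma$, $\varsigma$, and the multiplicity of roots of $\phi$ requires care. I would handle this by observing that a multiple root $\beta$ of $\phi$ simply means several branches of $f$ agree through the current $\C$-term, and \NewPolynomial{f,t,\ell} remains a well-defined polynomial of positive $Y$-degree on which the recursion can continue until the accuracy threshold is met, at which point any remaining agreement among branches is irrelevant to the finite-accuracy output.
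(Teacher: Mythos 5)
Your proof is correct and takes essentially the same route as the paper, which simply cites Walker~\cite{RW78} for the Newton--Puiseux construction, Duval~\cite{Duv89} for the $\C$-expansion bookkeeping, and then refers back to Definitions~\ref{Definition:ExpansionParametrization} and~\ref{Definition:PuiseuxParaFinite}. Your version supplies the details those references would furnish, including the correct observation that dropping the usual squarefreeness hypothesis is harmless because the finite-accuracy stopping rule renders the coalescence of branches irrelevant.
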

\begin{proof}
It directly follows from the proof of Newton-Puiseux 
algorithm in Walker's book~\cite{RW78}, 
the relation between $\C$-expansion
and Puiseux parametrization discussed
in Duval's paper~\cite{Duv89},
and Definitions~\ref{Definition:ExpansionParametrization}
and~\ref{Definition:PuiseuxParaFinite}.
\end{proof}

\begin{algorithm}
\linesnumbered
\caption{$\NonzeroTerm{f, I}$}
\KwIn{
 $f\in\C[X, Y]$; $I=1$ or $2$
}
\KwOut{
A finite set of pairs $(t, \ell)$, where $t$ is a $\C$-term, 
and $\ell\in\N$.
}
\Begin{
$S := \emptyset$\;
\For{each $\Delta\in\NewtonPolygon{f, I}$}{
  $(q, p,\ell, \phi) := \SegmentPoly{f,\Delta}$\;
    \For{each root $\xi$ of $\phi$ in $\C$}{
        \For{each root $\beta$ of $U^q-\xi$ in $\C$}{
             $t := (q,p,\beta)$\;
             $S := S\cup \{ (t, \ell) \}$
        }

    }
}

}
\end{algorithm}

\begin{algorithm}
\label{alg:newton}
\linesnumbered
\caption{${\sf NewtonPuiseux}$}
\KwIn{$f\in\C[X, Y]$; a given accuracy $\tau>0\in\N$.
}
\KwOut{
All the Puiseux parametrizations of $f$
of accuracy $\tau$.
}
\Begin{
   $\pi := (~)$;
   $S := \{(\pi, f)\}$\;
   \While{$S\neq \emptyset$}{
     choose $(\pi^*, f^*)\in S$; $S := S\setminus \{(\pi^*, f^*)\}$\;
     {\bf if} $\pi^*=(~)$ {\bf then} $I := 1$ {\bf else} $I := 2$\;
     $(T^{\sigma}, g(T)) := \ConstructParametrization{\pi^*}$\;
     \If{$\deg(g(T), T)+1<\tau$}{
       $C := \NonzeroTerm{f^*, I}$\;
       \uIf{$C=\emptyset$}{
           output $(T^{\sigma}, g(T))$ \tcp{a finite Puiseux parametrization is found}
       }
       \Else{
         \For{each $(t=(p, q, \beta), \ell)\in C$}{
           $\pi^{**} :=\pi^*\cup(t)$\;
           $f^{**} := \NewPolynomial{f^*, t, \ell}$\;
           $S := S\cup \{ (\pi^{**}, f^{**})\}$
         }
       }
     }
     output $(T^{\sigma}, g(T))$
   }
}
\end{algorithm}

\section{Computing in finite accuracy}
\label{sec:approximation}

Let $f\in\C\langle X\rangle[Y]$.
In this section, we consider the following problems.
\begin{itemizeshort}
\item[$(a)$] Is it possible to use an approximation of $f$ of some finite accuracy $m$
in order to compute a Puiseux parametrization of $f$ 
of some finite accuracy $\tau$?
\item[$(b)$] If yes,  how to deduce $m$ from $f$ and $\tau$?
\item[$(c)$] Provide a bound on $m$.
\end{itemizeshort}
Theorem~\ref{Theorem:finite-bound} provides the answers 
to $(a)$ and $(b)$ while 
Lemma~\ref{Lemma:generic} answers $(c)$.

\ifcomment
The proof of Theorem~\ref{Theorem:finite-bound}
relies on Lemma~\ref{Lemma:one-iteration}.
In Lemma~\ref{Lemma:one-iteration}, 
we consider one iteration in Newton-Puiseux's algorithm
for computing Puiseux parametrizations.
Let $f$ and $f_1$ be respectively the input 
and output polynomial of the iteration, 
which corresponds exactly to the input and output of algorithm {\sf NewPolynomial} in Section~\ref{sec:newton}.
It provides the estimates on the accuracy 
required for coefficients of $f\in\C\langle X\rangle[Y]$
in order to compute the approximation of $f_1\in\C\langle X_1\rangle[Y_1]$ of 
a given accuracy.
\fi

\iflongversion
\begin{Lemma}[\cite{GF01}]
  \label{Lemma:GFsubs}
Let $\underline{X}=X_1,\ldots,X_s$ and $\underline{Y}=Y_1,\ldots,Y_m$.
  For $g_1, \ldots, g_s \in {\C}[[ \underline{Y} ]]$, with ${\rm
    ord}(g_i) \geq 1$, there is a ${\C}$-algebra homomorphism (called
  the {\em substitution homomorphism})
  \begin{center}
    $ {\Phi}_g:
    \begin{array}{rcl} {\C}[[ \underline{X} ]] & \longrightarrow & {\C}[[ \underline{Y} ]] \\
      f & \longmapsto & f(g_1(\underline{Y}),\ldots,
      g_s(\underline{Y})).
    \end{array}
    $
  \end{center}
  Moreover, if $g_1, \ldots, g_s $ are convergent power series, then
  we have ${\Phi}_g({\C} \langle \underline{X} \rangle) \subseteq {\C}
  \langle \underline{Y} \rangle$ holds.
\end{Lemma}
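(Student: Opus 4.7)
The plan is to prove this in three stages: first establishing that the substitution is well-defined as a formal series, then that it is a $\C$-algebra homomorphism, and finally that convergence is preserved.

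For well-definedness, I would exploit the hypothesis $\mathrm{ord}(g_i) \geq 1$, that is, $g_i \in \mathcal{M}_m$. For any multi-index $e = (e_1, \ldots, e_s)$ with $|e| = e_1 + \cdots + e_s = d$, the product $g^e := g_1^{e_1} \cdots g_s^{e_s}$ satisfies $\mathrm{ord}(g^e) \geq d$, since $\mathcal{M}_m^d$ consists of series of order $\geq d$. Therefore, for any fixed degree $k$, the homogeneous component of degree $k$ of $g^e$ is zero as soon as $|e| > k$. Writing $f = \sum_e a_e X^e \in \C[[\underline{X}]]$, the candidate image is $\sum_e a_e g^e$, and its homogeneous component of degree $k$ is determined by the finite sum over $|e| \leq k$. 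This makes the expression convergent in the Krull topology on $\C[[\underline{Y}]]$, which is complete, so $\Phi_g(f)$ is a well-defined element of $\C[[\underline{Y}]]$.

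For the algebra homomorphism property, I would use that each partial sum $f_N := \sum_{|e| \leq N} a_e X^e$ is a polynomial, and $\Phi_g(f_N) = f_N(g_1, \ldots, g_s)$ is a genuine polynomial substitution, which is trivially a $\C$-algebra homomorphism on $\C[\underline{X}]$. Because $\Phi_g$ is continuous for the Krull topologies (by the order estimate above, $f - f_N \in \mathcal{M}_s^{N+1}$ implies $\Phi_g(f - f_N) \in \mathcal{M}_m^{N+1}$), the homomorphism property extends from $\C[\underline{X}]$ to its completion $\C[[\underline{X}]]$ by passage to the limit on sums and products.

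For the convergent case, the main task is to choose polyradii so that the double series converges absolutely. Let $f \in \C\langle \underline{X} \rangle$ with $\|f\|_{\rho} < \infty$ for some $\rho = (\rho_1, \ldots, \rho_s) \in \R_{>0}^s$, and let $g_1, \ldots, g_s \in \C\langle \underline{Y} \rangle$ with $\mathrm{ord}(g_i) \geq 1$, so in particular $g_i(0) = 0$. Since each $g_i$ is continuous at the origin and vanishes there, I can pick $\rho' = (\rho'_1, \ldots, \rho'_m) \in \R_{>0}^m$ small enough that $\|g_i\|_{\rho'} \leq \rho_i$ for every $i$. The submultiplicativity of the Banach-algebra norm then gives
\[
\sum_e |a_e| \, \|g_1\|_{\rho'}^{e_1} \cdots \|g_s\|_{\rho'}^{e_s} \;\leq\; \sum_e |a_e| \, \rho^e \;=\; \|f\|_\rho \;<\; \infty,
\]
so the series $\sum_e a_e g^e$ converges absolutely in $\C\langle \underline{Y}\rangle$ (in the norm $\|\cdot\|_{\rho'}$), placing $\Phi_g(f)$ inside $\C\langle \underline{Y}\rangle$. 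The subtle point — and the main obstacle — is precisely choosing $\rho'$: this requires knowing that each $\|g_i\|_{\rho'} \to 0$ as $\rho' \to 0$, which holds exactly because $g_i$ has no constant term, i.e.\ because $\mathrm{ord}(g_i) \geq 1$. Without this hypothesis the substitution could fail to converge, so the order condition is doing essential work in both the formal and the convergent parts of the statement.
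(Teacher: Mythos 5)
Your proof is correct, and it follows the standard argument: one uses $\mathrm{ord}(g_i)\geq 1$ to get $\mathrm{ord}(g^e)\geq |e|$, which makes the formal substitution well defined and continuous in the Krull topology; the homomorphism property then follows by passage to the limit from the polynomial case; and the convergent case is handled by shrinking the polyradius so that $\|g_i\|_{\rho'}\leq\rho_i$ and then invoking submultiplicativity of the Banach-algebra norm. The paper itself gives no proof — this lemma is quoted directly from Fischer's textbook \cite{GF01} — so there is no in-paper argument to compare against, but your reconstruction is the one found in the standard references, and you correctly identify the one place where the hypothesis $\mathrm{ord}(g_i)\geq 1$ is doing real work in both the formal and convergent halves. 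One small point worth making explicit, for completeness, is that the limit of $\sum_e a_e g^e$ in the norm $\|\cdot\|_{\rho'}$ agrees with its limit in the Krull topology (norm convergence controls each Taylor coefficient), so the convergent-series element you produce really is the formal $\Phi_g(f)$ already constructed; as written this identification is implicit.
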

\fi

\begin{Definition}[\cite{GF01}]
Let $f=\sum a_{\mu\nu}X^{\mu}Y^{\nu}\in\C[[X, Y]]$.
The {\em carrier} of $f$ is defined as 
$$
{\rm carr}(f)=\{(\mu,\nu)\in\N^2 \ \mid \  a_{\mu\nu}\neq 0\}.
$$
\end{Definition}

\begin{Lemma}
\label{Lemma:one-iteration}
Let $f\in \C\langle X \rangle[Y]$.
Let $d := \deg(f, Y)>0$.
Let $q\in {\N}_{>0}$,  $p, \ell\in\N$ and assume that $q$ and $p$ are coprime.
Let $\beta\neq 0 \in\C$.
Assume that $q, p, \ell$ define a line $ L : qj + pi = \ell$ in $(i, j)$ plane
such that 
\begin{itemizeshort}
\item[$(a)$] There are at least two points $(j_1, i_1)\in\carr{f}$ 
and $(j_2, i_2)\in \carr{f}$
on $L$ with $i_1\neq i_2$.
\item[$(b)$] For any $(j, i)\in\carr{f}$, we have $qj + pi \geq \ell$.
\end{itemizeshort}
Let $f_1 := X_1^{-\ell}f(X_1^q, X_1^p(\beta + Y_1))$.
Then, we have the following results
\begin{itemizeshort}
\item[$(i)$] We have $f_1\in\C\langle X_1 \rangle[Y_1]$.
\item[$(ii)$] For any given $m_1\in\N$, there exists a finite number $m\in\N$ 
such that the approximation of $f_1$
of accuracy $m_1$ can be computed 
from the approximation of $f$ of accuracy $m$.
\item[$(iii)$] Moreover, it suffices to take $m=\lfloor \frac{m_1+\ell}{q} \rfloor$.
\end{itemizeshort}
\end{Lemma}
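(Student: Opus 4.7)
The plan is to split off (i) as a direct substitution argument and then handle (ii) and (iii) together by tracking how each monomial of $f$ maps to a monomial of $f_1$ under the change of variables $X \mapsto X_1^q$, $Y \mapsto X_1^p(\beta+Y_1)$.

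For (i), I would write $f = \sum_{(j,i) \in \carr{f}} a_{ji}\, X^j Y^i$ and substitute directly to obtain
\[
f(X_1^q, X_1^p(\beta+Y_1)) \;=\; \sum_{i=0}^d \Bigl(\sum_j a_{ji}\, X_1^{qj+pi}\Bigr)(\beta+Y_1)^i.
\]
Hypothesis (b) forces $qj + pi \geq \ell$ for every $(j,i) \in \carr{f}$, so every monomial is divisible by $X_1^\ell$ and the division $f_1 = X_1^{-\ell} f(X_1^q, X_1^p(\beta+Y_1))$ yields an element of $\C[[X_1]][Y_1]$. Convergence of each coefficient of $Y_1^k$ in $f_1$ follows from Lemma~\ref{Lemma:GFsubs}: each $b_i(X)=\sum_j a_{ji} X^j$ is in $\C\langle X\rangle$, hence $b_i(X_1^q) \in \C\langle X_1\rangle$, and finite sums, products with $X_1^{pi}$, expansions of $(\beta+Y_1)^i$, and division by $X_1^\ell$ all stay inside $\C\langle X_1\rangle[Y_1]$. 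Hypothesis (a) is not used here; it plays a role only in ensuring the outer Newton-Puiseux iteration is non-degenerate.

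For (ii) and (iii), the observation is that the monomial $a_{ji} X^j Y^i$ appears in $f_1$ only through the single block $a_{ji} X_1^{qj+pi-\ell}(\beta+Y_1)^i$, whose $X_1$-degree equals $qj+pi-\ell$. Therefore the approximation $\widetilde{f_1}^{(m_1-1)}$, which keeps only $X_1$-coefficients of degree at most $m_1-1$, depends only on those $a_{ji}$ satisfying $qj+pi-\ell \leq m_1-1$. Since $p,i \geq 0$ this is implied by $qj \leq m_1+\ell-1$, so the $X$-degrees of $f$ that matter are bounded in terms of $m_1$, $\ell$ and $q$. Setting $m = \lfloor (m_1+\ell)/q \rfloor$, the truncation $\widetilde{f}^{(m-1)}$ retains every coefficient $a_{ji}$ that can possibly contribute, and substituting $\widetilde{f}^{(m-1)}$ into the displayed formula and dropping $X_1$-monomials of degree exceeding $m_1-1$ reproduces $\widetilde{f_1}^{(m_1-1)}$.

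The main obstacle, though essentially routine, is to verify the converse direction: monomials of $f$ with $j \geq m$ must contribute only to $X_1$-degrees in $f_1$ that are at least $m_1$, so that discarding them is faithful. This amounts to the inequality $qm - \ell \geq m_1$ obtained by unwinding the floor in the definition of $m$. Once this is in place, (ii) is the bare statement that a finite $m$ exists, (iii) is the explicit value, and together they justify running Newton-Puiseux on the truncated polynomial $\widetilde{f} \in \C[X,Y]$ rather than on the full $f \in \C\langle X\rangle[Y]$ in the finite-accuracy sense of Definition~\ref{Definition:PuiseuxParaFinite}.
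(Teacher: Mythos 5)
Your approach matches the paper's: expand $f$, substitute $X\mapsto X_1^q$ and $Y\mapsto X_1^p(\beta+Y_1)$, use hypothesis $(b)$ together with Lemma~\ref{Lemma:GFsubs} for $(i)$, and bound the contributing $X$-degrees of $f$ for $(ii)$ and $(iii)$. Your side remark that hypothesis $(a)$ is not needed for this particular lemma is also correct.

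However, the very step you identify as the ``main obstacle'' is not handled correctly: you claim $qm-\ell\geq m_1$ follows ``by unwinding the floor,'' but $m=\lfloor(m_1+\ell)/q\rfloor$ yields $qm\leq m_1+\ell$, that is, $qm-\ell\leq m_1$ --- the reverse inequality, with equality only when $q\mid(m_1+\ell)$. Concretely, if $q\nmid(m_1+\ell)$ and $a_{0,m}\neq 0$, then $a_{0,m}X^m$ contributes $a_{0,m}X_1^{qm-\ell}$ to $f_1$ with $qm-\ell<m_1$, yet this term is discarded by the truncation $\widetilde{f}^{(m-1)}$; the tight value would be $m=\lceil(m_1+\ell)/q\rceil$. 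That said, the paper's own proof only establishes $j\leq\lfloor(m_1+\ell)/q\rfloor$ for the maximal contributing $X$-degree $j$, which entitles accuracy $j+1$ rather than accuracy $j$, so this off-by-one is inherited from the statement rather than introduced by your reconstruction.
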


\begin{proof}
\iflongversion
Since $q >0$ holds, we know that 
$\ord{X_1^q}=q>0$ holds.
We also have $f(X_1^q, X_1^p(\beta + Y_1))\in\C\langle X_1 \rangle[Y_1]$.
Let $f(X, Y) := \sum_{i=0}^d \left(\sum_{j=0}^{\infty} a_{i, j}X^j \right) Y^i$.
Then we have
$
f_1(X_1, Y_1)
             = \sum_{i=0}^d \left(\sum_{j=0}^{\infty} a_{i, j}  X_1^{(qj+pi-\ell)} \right) (\beta + Y_1)^i. 
$
Since for any $(j, i)\in\carr{f}$, we have $qj + pi \geq \ell$, 
the power of $X_1$ cannot be negative. 
By Lemma~\ref{Lemma:GFsubs}, we have $f_1\in\C\langle X_1 \rangle[Y_1]$. 
That is $(i)$ holds.

We prove $(ii)$. We have
$$
\begin{array}{rcl}
&&f_1(X_1, Y_1)~\mbox{mod}~\langle X_1^{m_1} \rangle \\
&=& \sum_{i=0}^d \left(\sum_{qj+pi-\ell < m_1} a_{i, j}  X_1^{(qj+pi-\ell)} \right) (\beta + Y_1)^i.
\end{array}
$$
Since $q\in {\N}_{>0}$ and $m_1$, $\ell$ and $i$ are all finite, 
we know that $j$ has to be finite. 
In other words, there exists a finite $m$ 
such that 
the approximation of $f_1$ of accuracy $m_1$
can be computed from 
the approximation of $f$ of accuracy $m$.
That is, $(ii)$ holds.

Since the first $m_1$ terms of $f_1$
depends on the $j$-th terms  of $f$, which satisfies 
the constraint $qj+pi-\ell < m_1$, 
we have
$
j < \frac{(m_1+\ell)-pi}{q} \leq \frac{(m_1+\ell)}{q}.
$
Let $m'$ be the the maximum of these $j$'s . 
Now we have
$
m'-1 < \frac{(m_1+\ell)}{q}.
$
Since $m'$ is an integer, 
we have 
$
m'\leq \lfloor \frac{(m_1+\ell)}{q}\rfloor
$
holds.
Let $m=\lfloor \frac{(m_1+\ell)}{q}\rfloor$.
Next we show shat $m_1\geq 1$ implies that $m\geq 1$ holds.
If there is at least one point $(i,j)\in L$
such that $j\geq 1$, then we have $\ell\geq q$, 
which implies $m\geq 1$.
If the $j$-coordinates of all points on $L$ is $0$, 
then $q=1$ and $\ell=0$, 
which implies also $m\geq 1$.
Thus $(iii)$ is proved.
\else
The proof is routine.
\fi
\end{proof}

\iflongversion

\begin{Remark}
\label{Remark:monotone}
We use the same notations as in the previous Lemma.
In particular, let $f(X, Y) := \sum_{i=0}^d \left(\sum_{j=0}^{\infty} a_{i, j}X^j \right) Y^i$
and $f_1 := X_1^{-\ell}f(X_1^q, X_1^p(\beta + Y_1))$.
For a fixed term $a_{i,j}X^jY^i$ of $f$, 
it appears in $f_1$ as 
$$
a_{i,j}X_1^{qj+pi-\ell}(\beta+Y_1)^i
= \sum_{k=0}^i \left({i \choose k}\beta^{i-k} a_{i,j}X_1^{qj+pi-\ell}   \right) Y_1^{k}  .
$$
For two fixed terms $a_{i,j_1}X^{j_1}Y^i$ and $a_{i,j_2}X^{j_2}Y^i$ of $f$ with $j_1<j_2$, 
\ifcomment
they respectively appear in $f_1$ as 
$$
a_{i,j_1}X_1^{q{j_1}+pi-\ell}(\beta+Y_1)^i
=\sum_{k=0}^i \left({i \choose k}\beta^{i-k}  a_{i,j_1}X_1^{q{j_1}+pi-\ell} \right) Y_1^{k}
$$
and 
$$
a_{i,j_2}X_1^{q{j_2}+pi-\ell}(\beta+Y_1)^i
=\sum_{k=0}^i \left({i \choose k}\beta^{i-k}   a_{i,j_2}X_1^{q{j_2}+pi-\ell}  \right) Y_1^{k}.
$$
\fi
since $q{j_1}+pi-\ell<q{j_2}+pi-\ell$, 
we know that for any fixed $k$, $a_{i,j_2}X^{j_2}Y^i$ always contributes 
strictly higher order of powers of $X_1$ than $a_{i,j_1}X^{j_1}Y^i$ in $f_1$.
\end{Remark}
\fi

\iflongversion
\begin{Remark}
\label{Remark:polygon}
Let $f(X, Y) := \sum_{i=0}^d \left(\sum_{j=0}^{\infty} a_{i, j}X^j \right) Y^i$.
For $0\leq i\leq d$, let $a_{i,j^*}$ be the first nonzero coefficient
among $\{a_{i,j}| 0\leq j < \infty\}$.
We observe that the Newton polygon of $f$ is completely 
determined by $a_{i,j^*}$, $0\leq i\leq d$.
\end{Remark}
\fi


\begin{Theorem}
\label{Theorem:finite-bound}
Let $f\in \C\langle X \rangle[Y]$.
Let $\tau\in{\N}_{>0}$.
Let $\sigma\in{\N}_{>0}$ and $g(T)=\sum_{k=0}^{\tau-1} b_k T^k$.
Assume that $(T^{\sigma}, g(T))$ is a Puiseux parametrization of $f$
of accuracy $\tau$.
Then one can compute a finite number $m\in \N$
such that $(T^{\sigma}, g(T))$ is a Puiseux parametrization of
accuracy $\tau$ of the approximation of $f$ of accuracy $m$.
We denote by ${\sf AccuracyEstimate}$ an algorithm to
compute such $m$ from $f$ and $\tau$.
\end{Theorem}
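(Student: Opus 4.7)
The plan is to proceed by induction along the iterations of the Newton–Puiseux algorithm (Algorithm~\ref{alg:newton}) that produces $(T^{\sigma}, g(T))$ from $f$. Write $g(T) = \sum_{k=1}^N \beta_k T^{\delta_k}$ with $0 \leq \delta_1 < \delta_2 < \cdots < \delta_N \leq \tau - 1$, so that the underlying $\C$-expansion $\pi = (t_1, \ldots, t_N)$ has length $N \leq \tau$; this bounds the number of iterations that need to be controlled. Let $f_0 := f$ and $f_{k} := X^{-\ell_{k-1}} f_{k-1}(X^{q_{k-1}}, X^{p_{k-1}}(\beta_{k-1} + Y))$ denote the polynomial produced after the $k$-th call to {\sf NewPolynomial}; by Lemma~\ref{Lemma:one-iteration}(i) each $f_k$ lies in $\C\langle X\rangle[Y]$.

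The key idea is to propagate the accuracy requirement backwards. First I would pin down an accuracy $m_N \in \N$ at which $f_N$ must be known in order for the algorithm to correctly conclude at step $N$: since the decision at each iteration (the Newton polygon, the segment polynomial $\phi$, and its roots $\xi, \beta$) depends only on the lowest-order coefficient of each column of the carrier, it suffices to take $m_N$ bounded by a quantity depending only on $\deg(f,Y)$ and $\tau$ (for instance, $m_N = 1$ is enough to see that the polygon has no further relevant segments, or to extract the next $\C$-term if the iteration were to continue). Then I would apply Lemma~\ref{Lemma:one-iteration}(ii)–(iii) iteratively: defining
\begin{equation*}
m_{k-1} := \left\lfloor \frac{m_k + \ell_{k-1}}{q_{k-1}} \right\rfloor, \qquad k = N, N-1, \ldots, 1,
\end{equation*}
one obtains from $f_k$ known to accuracy $m_k$ a determination of $f_{k-1}$ to the required accuracy $m_{k-1}$. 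Setting $m := m_0$ yields a finite integer computable from $f$, $\tau$, and the data of $\pi$; this is the output of ${\sf AccuracyEstimate}$.

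To conclude, I would argue that running the algorithm on the approximation $\widetilde{f}^{(m)}$ reproduces exactly the same sequence of $\C$-terms $t_1, \ldots, t_N$ (and hence the same $(T^{\sigma}, g(T))$ via {\sf ConstructParametrization}) as running it on $f$. This is because $\widetilde{f}^{(m)}$ and $f$ have the same truncation modulo $X^{m+1}$, and by the backward construction of the $m_k$'s, the truncations of $f_k$ and of the corresponding polynomial obtained from $\widetilde{f}^{(m)}$ at accuracy $m_k$ agree throughout the run; in particular the Newton polygons, the segment polynomials, and the chosen roots $\xi, \beta$ at each step coincide.

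The main obstacle is the justification of the base step: specifying an $m_N$ for which the Newton polygon and the relevant root data of $f_N$ are genuinely determined by finitely many coefficients of the input. This requires the observation that every segment of the Newton polygon, as well as each segment polynomial $\phi$ used in {\sf NonzeroTerm}, depends only on the lowest-order term of each column of $\carr{f_N}$, so a uniform (small) choice of $m_N$ suffices. Once this is secured, the backward recursion through Lemma~\ref{Lemma:one-iteration}(iii) terminates in $N \leq \tau$ steps and produces the desired finite bound $m$.
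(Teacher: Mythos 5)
Your proof follows the same core idea as the paper: propagate accuracy requirements through the Newton--Puiseux iterations using Lemma~\ref{Lemma:one-iteration}, and use the observation that Newton polygons depend only on the lowest-order coefficient of each column of the carrier (this is exactly Remark~\ref{Remark:polygon} in the paper). However, the algorithmic portion of your argument has a circularity that the paper's proof takes pains to avoid. Your backward recursion $m_{k-1} := \lfloor (m_k + \ell_{k-1})/q_{k-1}\rfloor$ presupposes that the transformation data $(q_k, p_k, \ell_k, \beta_k)$ of the entire $\C$-expansion $\pi$ are already known. But those data are computed by running Newton--Puiseux on $f$, and since $f$ has power-series coefficients one must truncate $f$ to do so --- and deciding how much to truncate is exactly the problem {\sf AccuracyEstimate} is supposed to solve. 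Your ``base step'' difficulty ($m_N$) is a symptom of the same issue: knowing the lowest-order coefficient of each column of $f_k$ from a truncation of $f$ requires knowing a priori that the truncation is long enough to reach those lowest-order terms, which is again circular. The paper sidesteps this by an incremental algorithm: start with $m' := \tau$, run Newton--Puiseux on the truncation of $f$ to accuracy $m'+1$, and check whether the boundary coefficients $a_{k,m'}$ ever contribute to any Newton polygon along the way; if so, increase $m'$ and repeat. Termination and correctness of this check are justified by Remark~\ref{Remark:monotone}, which shows that contributions from higher-order terms of $f$ stay strictly higher-order in every $f_i$, so once the boundary terms stop contributing, enlarging the truncation can never change the computed polygons. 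To repair your proof you would need to invoke that monotonicity argument (or an equivalent) to turn your backward propagation into a terminating, non-circular procedure.
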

\begin{proof}
\iflongversion
Let $f_0 := f$, $X_0 := X$ and $Y_0 := Y$.
For $i=1, 2, \ldots$, 
Newton-Puiseux's algorithm 
computes numbers $q_{i}, p_{i}, {\ell}_{i}, \beta_{i}$
and the transformation
$$
f_i := X_i^{-\ell_{i}}f_{i-1}(X_i^{q_{i}}, X_i^{p_{i}}(\beta_{i} + Y_i))
$$
such that the assumption of Lemma~\ref{Lemma:one-iteration} is satisfied.

By Lemma~\ref{Lemma:one-iteration}, we know that for any $i$, 
a given number of terms of the coefficients of $f_i$ in $Y_i$
can be computed from a finite number of terms of the coefficients of $f_{i-1}$ in $Y_{i-1}$. 
Thus for any $i$, a given number of terms of the coefficients of $f_i$ in $Y_i$
can be computed from a finite number of terms of the coefficients of $f$ in $Y$.

On the other hand, the construction of Newton-Puiseux's algorithm
and Remark~\ref{Remark:polygon}
tell us that there exists a finite $M$, such that 
$\sigma$ and all the terms of $g(T)$
can be computed from a finite number 
of terms of the coefficients of $f_i$ in $Y_i$, $i=1,\ldots,M$.

Thus we conclude that there exists a finite number $m\in \N$
such that $(T^{\sigma}, g(T))$ is a Puiseux parametrization of
accuracy $\tau$ of the approximation of $f$ of accuracy $m$.
\else
By Lemma~\ref{Lemma:one-iteration} and the construction of Newton-Puiseux's algorithm,
we conclude that there exists a finite number $m\in \N$
such that $(T^{\sigma}, g(T))$ is a Puiseux parametrization of
accuracy $\tau$ of the approximation of $f$ of accuracy $m$.
\fi

Next we show that there is an algorithm to compute $m$.
We initially set $m' := \tau$.
Let $f_0 := \sum_{i=0}^d \left(\sum_{j=0}^{m'} a_{i, j}X^j \right) Y^i$.
That is, $f_0$ is the approximation of $f$ of accuracy $m'+1$.
We run Newton-Puiseux's algorithm to check whether 
the terms $a_{k,m'}X^{m'}Y^k$, $0\leq k\leq d$,
make any contributions in constructing the Newton Polygons of all $f_i$.
If at least one of them make contributions,
we increase the value of $m'$  and restart the Newton-Puiseux's algorithm 
until
none of the terms $a_{k,m'}X^{m'}Y^k$, $0\leq k\leq d$,
makes any contributions in constructing Newton Polygons of all $f_i$.
\iflongversion
By Remark~\ref{Remark:monotone}, we can set $m := m'$.
\else
We set $m := m'$.
\fi
\end{proof}

\begin{Lemma}
\label{Lemma:prime-chain}
Let $d, \tau\in{\N}_{>0}$.
Let $a_{i,j}$, $0\leq i\leq d$, $0\leq j < \tau $,
and $b_k$, $0\leq k < \tau$ be symbols.
Write 
$
{\bf a}=(a_{0,0},\ldots,a_{0,\tau-1},\\
\ldots,a_{d,0}, \ldots,a_{d,\tau-1})
$
and ${\bf b}=(b_0,\ldots,b_{\tau-1})$.
Let $f({\bf a}, X, Y)= \sum_{i=0}^d \left(\sum_{j=0}^{\tau-1} a_{i, j}X^j \right) Y^i\in\C[{\bf a}][X, Y]$.
Let $g({\bf b}, X)=
\sum_{k=0}^{\tau-1} \\
b_kX^k\in\C[{\bf b}][X]$.
Let $p := f({\bf a}, X, Y =g({\bf b}, X))$.
Let $F_k := \coeff{p, X^k}$, $0\leq k < \tau - 1\}$, and $F := \{F_0,\ldots, F_{\tau-1}\}$.
Then under the order ${\bf a} < {\bf b}$ and $b_0<b_1<\cdots <b_{\tau-1}$, 
$F$ forms a zero-dimensional regular chain in $\C({\bf a})[{\bf b}]$
with main variables $(b_0,b_1,\ldots,b_{\tau-1})$
and main degrees $(d, 1,\ldots,1)$.
In addition, we have
\begin{itemizeshort}
\item $F_0=\sum_{i=0}^d a_{i,0}b_0^i$ and 
\item $\init{F_1}=\cdots=\init{F_{\tau-1}}=\sum_{i=1}^d i\cdot a_{i, 0}b_0^{i-1}$.
\end{itemizeshort}
\end{Lemma}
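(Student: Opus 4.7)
The plan is to expand $p = f(\mathbf{a}, X, g(\mathbf{b}, X))$ by substituting $g$ into $f$, and to track where each variable $b_k$ can appear in the coefficient $F_k$ of $X^k$. Writing
\[
g(\mathbf{b},X)^i \ = \ \sum_{\alpha_0+\cdots+\alpha_{\tau-1}=i} \binom{i}{\alpha_0,\ldots,\alpha_{\tau-1}}\, b_0^{\alpha_0}\cdots b_{\tau-1}^{\alpha_{\tau-1}}\, X^{\alpha_0\cdot 0 + \alpha_1\cdot 1 + \cdots + \alpha_{\tau-1}(\tau-1)},
\]
I read off $F_0$ by setting all exponents on $X$ to zero, which forces $\alpha_0=i$, giving immediately $F_0=\sum_{i=0}^d a_{i,0}b_0^i$. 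For $k\geq 1$, any term in $F_k$ contributing a factor of $b_k$ must come from an $\alpha$ with $\alpha_k\geq 1$; because $\sum \ell\alpha_\ell \leq k$ and $\ell\alpha_\ell\geq 0$, this forces $\alpha_k=1$ and $\alpha_\ell=0$ for $\ell\neq 0,k$, hence $\alpha_0=i-1$ and necessarily $j=0$ (on the $X^j$ factor coming from $f$). Summing over $i$ with the multinomial coefficient $\binom{i}{i-1,1}=i$, the $b_k$-linear part of $F_k$ is exactly $\bigl(\sum_{i=1}^d i\,a_{i,0}b_0^{i-1}\bigr)\,b_k$, and all other terms of $F_k$ lie in $\C[\mathbf{a}][b_0,\ldots,b_{k-1}]$.

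From this structural description I will read off the main variables and main degrees: $F_0$ has main variable $b_0$ of degree $d$ with initial $a_{d,0}$, and for each $k\geq 1$ the polynomial $F_k$ has main variable $b_k$ of degree one with initial $h := \sum_{i=1}^d i\,a_{i,0}b_0^{i-1}$, as asserted. Since every $b_k$ is a main variable, the chain is zero-dimensional in $\C(\mathbf{a})[\mathbf{b}]$ as soon as it is shown to be a regular chain.

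The main obstacle is the regularity check, and it will reduce to a single observation about $F_0$ and $h$. By induction on $k$, I will show that $h$ is regular modulo $\sat{\{F_0,\ldots,F_{k-1}\}}$. Since $a_{d,0}$ is a unit in $\C(\mathbf{a})$, we have $\sat{\{F_0\}}=\langle F_0\rangle$. The key remark is that $h=\partial F_0/\partial b_0$; because the coefficients $a_{i,0}$ are independent indeterminates, the discriminant of $F_0$ viewed as a polynomial in $b_0$ over $\C(\mathbf{a})$ is a nonzero element of $\C(\mathbf{a})$, so $F_0$ and $h$ are coprime in $\C(\mathbf{a})[b_0]$. Hence $h$ is a unit in $\C(\mathbf{a})[b_0]/\langle F_0\rangle$, and therefore regular in $\C(\mathbf{a})[\mathbf{b}]/\langle F_0\rangle$. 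The induction step is then immediate: each $F_j$ ($j\geq 1$) is linear in $b_j$ with the same initial $h$, so $\sat{\{F_0,\ldots,F_{k-1}\}}\cap\C(\mathbf{a})[b_0]=\langle F_0\rangle$, and $h$ remains regular modulo $\sat{\{F_0,\ldots,F_{k-1}\}}$. Combining these steps yields all the claims, with the explicit formulas for $F_0$ and for $\init{F_k}$ falling out of the multinomial analysis in the first paragraph.
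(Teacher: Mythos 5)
Your proof is correct and follows essentially the same route as the paper: expand $p$, observe that in $g(\mathbf{b},X)^i$ each $b_\ell$ is always paired with $X^\ell$ so that $F_k$ involves only $b_0,\ldots,b_k$ and is linear in $b_k$ with the common initial $h=\sum_{i=1}^d i\,a_{i,0}b_0^{i-1}$, then conclude regularity from the coprimality of $h$ and $F_0$ in $\C(\mathbf{a})[b_0]$. Your observation that $h=\partial F_0/\partial b_0$ and that the generic discriminant is nonzero is a clean justification of the coprimality that the paper simply asserts.
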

\begin{proof}
Write $p=\sum_{i=0}^d \left(\sum_{j=0}^{\tau-1} a_{i, j}X^j\right) \left(  \sum_{k=0}^{\tau-1} b_kX^k \right)^i $ 
as a univariate polynomial in $X$.
Observe that $F_0=\sum_{i=0}^d a_{i,0}b_0^i$.
Therefore $F_0$ is irreducible in $\C({\bf a})[{\bf b}]$.
Moreover, we have $\mvar{F_0}=b_0$ and $\mdeg{F_0}=d$.

Since $d>0$, we know that $a_{1, 0} \left(  \sum_{k=0}^{\tau-1} b_kX^k \right)$
appears in $p$.
Thus, for $0\leq k <\tau$, 
$b_k$ appears in $F_k$.
Moreover, for any $k\geq 1$ and $i<k$, 
$b_k$ can not appear in $F_i$ since $b_k$ and $X^k$
are always raised to the same power.
For the same reason, for any $i>1$, $b_k^i$
cannot appear in $F_k$, for $1\leq k <\tau$.
Thus $\{F_0,\ldots,F_{\tau-1}\}$ is a triangular set with
main variables $(b_0,b_1,\ldots,b_{\tau-1})$
and main degrees $(d, 1,\ldots,1)$.

Moreover, we have $\init{F_1}=\cdots=\init{F_{\tau-1}}=\sum_{i=1}^d i\cdot a_{i, 0}b_0^{i-1}$, 
which is coprime with $F_0$.
Thus $F= \{F_0,\ldots, F_{\tau-1}\}$ is a regular chain. 
\end{proof}

\begin{Lemma}
\label{Lemma:generic}
Let $f= \sum_{i=0}^d \left(\sum_{j=0}^{\infty} a_{i, j}X^j \right) Y^i\in\C[[X]][Y]$. 
Assume that $\deg(f, Y)>0$ and $f$ is general in $Y$.
Let $\varphi(X)=\sum_{k=0}^{\infty} b_kX^k\in\C[[X]]$ 
such that $f(X, \varphi(X))=0$ holds.
Let $\tau>0\in\N$. 
Then ``generically'', $b_i$, $0\leq i < \tau$, 
 can be completely determined by $\{a_{i,j}\mid 0\leq i\leq d, 0\leq j<\tau\}$.
\end{Lemma}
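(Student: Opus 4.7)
The plan is to substitute $\varphi(X)=\sum_{k=0}^{\infty} b_kX^k$ into $f(X,Y)$ and expand $f(X,\varphi(X))$ as a power series in $X$. Setting this series to zero yields, for each $k\geq 0$, an equation obtained by equating the coefficient of $X^k$ to zero. The first step is to observe that the coefficient of $X^k$ in $f(X,\varphi(X))$ depends only on $\{a_{i,j}\mid 0\leq i\leq d,\ 0\leq j\leq k\}$ and $\{b_0,\ldots,b_k\}$. This is because in the expansion of $(\sum_{\ell\geq 0}b_\ell X^\ell)^i$, the coefficient of $X^k$ involves only $b_0,\ldots,b_k$, and multiplying by the polynomial part of $f$ of total $X$-degree at most $k$ captures exactly the contribution. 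Consequently, the first $\tau$ such equations only involve $\{a_{i,j}\mid 0\leq i\leq d,\ 0\leq j<\tau\}$ and $\{b_0,\ldots,b_{\tau-1}\}$.

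Next, I would identify these $\tau$ equations with the polynomials $F_0,F_1,\ldots,F_{\tau-1}$ constructed in Lemma~\ref{Lemma:prime-chain}, where we view the $a_{i,j}$ (with $0\leq j<\tau$) as the specialization values of the symbolic parameters $\mathbf{a}$ and the $b_k$ as the specialization values of the symbolic unknowns $\mathbf{b}$. Indeed, the polynomial $p=f(\mathbf{a},X,Y=g(\mathbf{b},X))$ from Lemma~\ref{Lemma:prime-chain} is precisely the truncation of $f(X,\varphi(X))$ to $X$-degree less than $\tau$, and its coefficients $F_0,\ldots,F_{\tau-1}$ are exactly the equations we derived.

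By Lemma~\ref{Lemma:prime-chain}, $\{F_0,\ldots,F_{\tau-1}\}$ forms a zero-dimensional regular chain in $\C(\mathbf{a})[\mathbf{b}]$ with main variables $b_0,\ldots,b_{\tau-1}$, main degrees $(d,1,\ldots,1)$, and whose common initial (of $F_1,\ldots,F_{\tau-1}$) is $\Delta(b_0):=\sum_{i=1}^d i\cdot a_{i,0}b_0^{i-1}$. I would then interpret \emph{generically} as the requirement that, after specialization of the $a_{i,j}$ to the given values, (a) the polynomial $F_0=\sum_{i=0}^d a_{i,0}b_0^i$ has $d$ distinct roots (equivalently, $\mathrm{disc}(F_0,b_0)\neq 0$), and (b) the specific value of $b_0$ arising from $\varphi$ satisfies $\Delta(b_0)\neq 0$. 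Note that the general-in-$Y$ hypothesis guarantees $F_0\not\equiv 0$, so $b_0$ is indeed one of at most $d$ roots; condition (b) is precisely the condition for $b_0$ to be a simple root. Under these generic conditions, $F_0$ determines $b_0$ up to $d$ choices, and once $b_0$ is fixed, the triangular linear equations $F_1,F_2,\ldots,F_{\tau-1}$ (each linear in its main variable with invertible initial $\Delta(b_0)$) solve uniquely and recursively for $b_1,b_2,\ldots,b_{\tau-1}$.

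The main obstacle is giving a precise meaning to the word ``generically''. My plan is to take it to mean ``outside the Zariski-closed subset of the parameter space $\{a_{i,j}\}$ defined by the vanishing of the resultant $\mathrm{Res}_{b_0}(F_0,\Delta)$'', which is exactly the locus where $F_0$ and its $b_0$-derivative share a root. Away from this discriminant locus, every root $b_0$ of $F_0$ satisfies $\Delta(b_0)\neq 0$, so the regular chain $\{F_0,\ldots,F_{\tau-1}\}$ remains a regular chain after specialization, and the $b_k$ are determined as described. Everything else --- that the construction of $F_0,\ldots,F_{\tau-1}$ only uses $a_{i,j}$ with $j<\tau$, and that the resulting system is triangular --- is a direct consequence of Lemma~\ref{Lemma:prime-chain}.
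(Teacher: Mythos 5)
Your proposal is correct and follows essentially the same route as the paper: truncate $f(X,\varphi(X))$ modulo $X^{\tau}$, identify the resulting coefficient equations with the regular chain $\{F_0,\ldots,F_{\tau-1}\}$ of Lemma~\ref{Lemma:prime-chain}, and read ``generically'' as the non-vanishing of the resultant of $F_0$ with the common initial of $F_1,\ldots,F_{\tau-1}$. Your write-up is in fact somewhat more explicit than the paper's about what the genericity condition is and why the linear equations then determine $b_1,\ldots,b_{\tau-1}$ uniquely.
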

\begin{proof}
\iflongversion
By $f(X, Y)=0$, we know that $f(X, Y)=0~\mbox{mod}~\langle X^{\tau}\rangle$.
Therefore, we have
$$
\sum_{i=0}^d \left(\sum_{j<\tau} a_{i, j} X^{j} \right)\left(\sum_{k<\tau} b_kX^k \right)^i=0 ~\mbox{mod}~\langle X^{\tau}\rangle.
$$
Let $p=\sum_{i=0}^d \left(\sum_{j<\tau} a_{i, j} X^{j} \right)\left(\sum_{k<\tau} b_kX^k \right)^i$.
Let $F_i := \{\coeff{p, X^i}$, $0\leq i < \tau\}$, and $F := \{F_0,\ldots, F_{\tau-1}\}$.
Since $f$ is general in $Y$ and $f(X, \varphi(X))=0$, there exists $i^*>0$
such that $a_{i^*,0}\neq 0$. 
By Lemma~\ref{Lemma:prime-chain}, we have $F_0=\sum_{i=0}^d a_{i,0}b_0^i$.
Thus $b_0$ can be completely determined by $a_{i,0}$, $0\leq i\leq d$.
In order to completely determine $b_1,\ldots,b_{\tau-1}$, 
it is enough to gurantee $\res{F_0, F_i, b_0}\neq 0$ holds. 
Therefore the values of $b_k$, $0\leq k < \tau$ can be completely determined
from almost all the values 
of $a_{i,j}$, $0\leq i\leq d$, $0\leq j < \tau $.
\else
It follows directly from Lemma~\ref{Lemma:prime-chain}.
\fi
\end{proof}


\section{Accuracy estimates}

\label{sec:bounds}

Let $R := \{r_1(X_1,X_2),\ldots, 
r_{s-1}(X_1,\ldots,X_s)\}\subset\C[X_1<\cdots<X_s]$
be a strongly normalized regular chain.
In this section, we show that to compute 
the limit points of $W(R)$, 
it suffices to compute the Puiseux parametrizations
of $R$ of some accuracy.
Moreover, we provide accuracy estimates in Theorem~\ref{Theorem:bound-general}.

\ifcomment

\begin{Lemma}
  \label{Lemma:subs}
  Let $f\in\C[X_1,\ldots,X_s]$.  Let $g_1,\ldots,g_s\in \C\langle
  T\rangle$.  Then $f(g_1,\ldots,g_s)\in\C\langle T \rangle$.
\end{Lemma}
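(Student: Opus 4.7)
The plan is to reduce the statement to the fact that $\C\langle T\rangle$ is a $\C$-subalgebra of $\C[[T]]$; once this is in hand, the lemma follows immediately, since $f(g_1,\ldots,g_s)$ is obtained from the $g_i$'s and scalars in $\C$ by only finitely many ring operations.

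First I would recall that an element $h\in\C[[T]]$ lies in $\C\langle T\rangle$ exactly when its Taylor series has a positive radius of convergence; equivalently, writing $h=\sum_{k\geq 0} c_kT^k$, there exists $\rho>0$ with $\sum_{k\geq 0}|c_k|\rho^k<\infty$. Clearly $\C\subseteq\C\langle T\rangle$, and for any two elements $h_1,h_2\in\C\langle T\rangle$, with radii of convergence $\rho_1,\rho_2$, both $h_1+h_2$ and $h_1h_2$ are represented by power series absolutely convergent on the disk of radius $\min(\rho_1,\rho_2)$ (for the product this is the standard Cauchy-product estimate $\sum_k|\sum_{i+j=k}a_ib_j|\rho^k\le(\sum_i|a_i|\rho^i)(\sum_j|b_j|\rho^j)<\infty$). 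Hence $\C\langle T\rangle$ is a $\C$-subalgebra of $\C[[T]]$.

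Now write $f=\sum_{\alpha\in A}c_\alpha X_1^{\alpha_1}\cdots X_s^{\alpha_s}$ with $A$ finite and $c_\alpha\in\C$. Then
\[
f(g_1,\ldots,g_s)\;=\;\sum_{\alpha\in A}c_\alpha\,g_1^{\alpha_1}\cdots g_s^{\alpha_s},
\]
which is a finite $\C$-linear combination of finite products of elements of $\C\langle T\rangle$; by the previous paragraph it therefore lies in $\C\langle T\rangle$.

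There is essentially no obstacle here: the only thing worth emphasizing is that the hypothesis ``$f$ is a polynomial'' is used crucially, because it guarantees that only finitely many additions and multiplications are performed. If $f$ were a formal or convergent power series in $s$ variables, one would additionally need $\ord(g_i)\geq 1$ for substitution to make sense, as in Lemma~\ref{Lemma:GFsubs}; no such hypothesis is needed in the present polynomial setting.
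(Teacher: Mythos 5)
Your proof is correct and takes the same route as the paper's one-line argument, which simply invokes the fact that $f$ is a polynomial and $\C\langle T\rangle$ is a ring. You just spell out in more detail why $\C\langle T\rangle$ is a $\C$-subalgebra of $\C[[T]]$ (closure under sums and Cauchy products on a common disk of convergence), which the paper takes for granted.
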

\begin{proof}
It follows immediately from the fact that
$f$ is a polynomial and $\C\langle T\rangle$ is a ring.
\end{proof}
\fi

\begin{Lemma}
  \label{Lemma:general}
  Let $f=a_d(X)Y^d+\cdots+a_0(X)\in\C\langle X\rangle[Y]$,
  where $d>0$ and $a_d(X)\neq 0$.  For $0\leq i\leq d$, let $\delta_i
  := \ord{a_i}$.  Let $k := \min{\delta_0,\ldots,\delta_d}$.  Let
  $\widetilde{f} := f/X^k$.  Then we have $\widetilde{f}\in\C\langle
  X\rangle[Y]$ and $\widetilde{f}$ is general in $Y$.
  This process of producing $\widetilde{f}$ from $f$ is called {``\em
    making $f$ general''} and denote by {\sf MakeGeneral} an operation
  which produces $\widetilde{f}$ from $f$.
\end{Lemma}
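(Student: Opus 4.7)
The plan is to verify the two claims separately, both of which follow almost immediately from the definition of $k$ and of ``general in $Y$''.

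First I would show $\widetilde{f}\in\C\langle X\rangle[Y]$. By definition of $k$, we have $k=\min_i\delta_i\leq \delta_i$ for every $i$, so each coefficient $a_i(X)$ factors as $a_i(X)=X^k\,b_i(X)$ with $b_i\in\C[[X]]$. I need to confirm that each $b_i$ is in fact convergent, not merely formal. This is clear because division by a monomial $X^k$ in $\C\langle X\rangle$ preserves convergence: if $a_i(X)=\sum_{j\geq k}c_jX^j$ converges on some disk $|X|<r$, then $b_i(X)=\sum_{j\geq k}c_jX^{j-k}$ converges on the same disk. Hence $\widetilde{f}=\sum_i b_i(X)Y^i\in\C\langle X\rangle[Y]$.

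Next I would show that $\widetilde{f}$ is general in $Y$. Recall the definition from the preliminaries: a nonzero element $g\in\C[[X_1,\dots,X_s]]$ is general in $X_s$ iff $g\not\equiv 0\bmod\mathcal{M}_{s-1}$. In our setting $s=2$, so we need $\widetilde{f}\not\equiv 0\bmod\mathcal{M}_1$, which means the reduction of $\widetilde{f}$ modulo $\langle X\rangle$ is a nonzero polynomial in $Y$, i.e.\ some coefficient $b_i(X)$ has nonzero constant term. By definition of $k$ as the minimum, there exists an index $i^{*}$ with $\delta_{i^{*}}=k$, and for this index $\ord(b_{i^{*}})=\delta_{i^{*}}-k=0$, so $b_{i^{*}}(0)\neq 0$. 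Therefore $\widetilde{f}(0,Y)=\sum_i b_i(0)Y^i$ is a nonzero polynomial in $Y$, which is exactly the definition of $\widetilde{f}$ being general in $Y$.

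There is no real obstacle here; the only minor subtlety is making sure that the division $a_i(X)/X^k$ preserves convergence, which is immediate as observed above. The key observation is simply that subtracting $k$ from all orders produces at least one coefficient of order exactly zero, and this is precisely what is needed for generality in $Y$.
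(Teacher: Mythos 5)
Your proof is correct and follows essentially the same route as the paper's: observe that there is an index attaining the minimum $k$, so after dividing by $X^k$ all coefficients remain convergent power series and at least one has order $0$, giving $\widetilde{f}(0,Y)\neq 0$. You are slightly more careful than the paper in two minor respects: you explicitly justify that division by $X^k$ preserves convergence, and you correctly range the minimizing index over $0\leq i\leq d$ (the paper's proof writes $1\leq i\leq d$, which is a small slip since the minimum could be attained at $i=0$).
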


\begin{proof}
\iflongversion
  Since $k= \min{\delta_0,\ldots,\delta_d}$, there exists $i$, $1\leq
  i\leq d$, such that $k=\delta_i$.  Moreover, for all $1\leq j\leq
  d$, we have $\delta_j \geq k$.  Thus for every such $i$, we have
  $\ord{a_i(X)/X^k}=0$ and $a_j(X)/X^k\in \C\langle X\rangle$, $0\leq
  j\leq d$.  This shows that $\widetilde{f}\in\C\langle X\rangle[Y]$
  and $\widetilde{f}$ is general in $Y$.
\else
The proof is routine.
\fi
\end{proof}

The following lemma shows that
computing limit points reduces 
to making a polynomial $f$ general.

\begin{Lemma}
\label{Lemma:Walker}
Let $f\in\C\langle X\rangle[Y]$, where $\deg(f, Y)>0$.
Assume that $f$ is general in $Y$.
Let $\rho>0$ be small enough such that
$f$ converges in $\abs{X}<\rho$.
Let $V_{\rho}(f) := \{(x,y)\in\C^2\mid 0<\abs{x}<\rho, f(x,y)=0\}$.
Then we have ${\rm lim}_0(V_{\rho}(f))=\{(0, y)\in\C^2\mid f(0,y)=0\}$.
\end{Lemma}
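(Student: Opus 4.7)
The plan is to prove the two inclusions separately.

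The inclusion ${\rm lim}_{0}(V_{\rho}(f)) \subseteq \{(0,y)\in\C^2 \mid f(0,y) = 0\}$ is an immediate consequence of continuity: if $(0,y_0) \in {\rm lim}_{0}(V_{\rho}(f))$, then by definition of limit point there is a sequence $(x_n, y_n) \in V_{\rho}(f)$ with $(x_n, y_n) \to (0, y_0)$; since $f$ converges on $\abs{X} < \rho$, it is continuous there, so $f(0, y_0) = \lim_n f(x_n, y_n) = 0$.

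For the reverse inclusion, fix $y_0 \in \C$ with $f(0, y_0) = 0$. The idea is to produce a convergent branch of $V(f)$ through $(0, y_0)$. Set $\widetilde{f}(X, Y) := f(X, Y + y_0) \in \C\langle X\rangle[Y]$; since $f$ is general in $Y$, so is $\widetilde{f}$, and $f(0, y_0) = 0$ means $\widetilde{f}(0, 0) = 0$. Hence $\widetilde{f}(0, Y)$ has positive $Y$-order equal to the multiplicity $k \geq 1$ of $y_0$ as a root of $f(0, Y)$. The convergent version of the Weierstrass Preparation Theorem, recalled in Section~\ref{sec:preliminaries}, yields a factorization $\widetilde{f} = \alpha \cdot p$, where $\alpha \in \C\langle X, Y\rangle$ is a unit and $p \in \C\langle X\rangle[Y]$ is a Weierstrass polynomial of degree $k$; since $\alpha$ is a unit near the origin, the zeros of $\widetilde{f}$ in a neighborhood of $(0,0)$ coincide with those of $p$.

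Now I would apply Puiseux's Theorem to $p$ viewed as an element of $\C(\langle X^*\rangle)[Y]$. Since $p$ is monic of positive degree $k$ and $p(0, Y) = Y^k$, there is a root $\psi \in \C\langle X^*\rangle$ with $\ord{\psi} > 0$; let $(T^{\varsigma}, \varphi(T))$ with $\varphi \in \C\langle T\rangle$ and $\varphi(0) = 0$ be the associated Puiseux parametrization, so that $p(T^{\varsigma}, \varphi(T)) = 0$ as an identity in $\C\langle T\rangle$. For any sequence $(t_n) \subset \C$ with $0 < \abs{t_n}$ sufficiently small and $t_n \to 0$, the pair $(x_n, y_n) := (t_n^{\varsigma}, y_0 + \varphi(t_n))$ lies in $V_{\rho}(f)$ and converges to $(0, y_0)$, proving $(0, y_0) \in {\rm lim}_{0}(V_{\rho}(f))$. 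The only technical subtlety is applying Weierstrass preparation at a possibly nonzero $y_0$, which is handled by the translation $Y \mapsto Y + y_0$; once this reduction to the origin is in place, the convergent forms of Weierstrass preparation and Puiseux's theorem stated in Section~\ref{sec:preliminaries} carry out the rest.
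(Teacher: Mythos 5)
Your proof is correct, and it takes a genuinely different route from the paper's. The paper derives the lemma as a corollary of its general machinery: it applies Theorem~\ref{thrm:PuiseuxParamRC} (the characterization of $\lim_0(W(R))$ via the constant terms of nonnegative-order Puiseux parametrizations), passes to a polynomial approximation $\widetilde{f}$ of finite accuracy via Theorem~\ref{Theorem:finite-bound}, and then cites Walker's Theorem~2.3 to identify those constant terms with the roots of $\widetilde{f}(0,Y)=f(0,Y)$. You instead give a self-contained two-inclusion argument: the forward inclusion by continuity of $f$ on $\lvert X\rvert<\rho$, and the reverse inclusion by translating $y_0$ to the origin, invoking the convergent Weierstrass Preparation Theorem to replace $\widetilde{f}$ by a Weierstrass polynomial $p$ with $p(0,Y)=Y^k$, and then producing one convergent Puiseux branch of $p$ through the origin. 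Your route is more elementary and sidesteps the accuracy/approximation machinery entirely; it also avoids a subtle point in the paper's proof, namely that Theorem~\ref{thrm:PuiseuxParamRC} is stated for polynomial regular chains whereas the lemma applies it to an $f$ with convergent power series coefficients. On the other hand, the paper's proof is arguably intentional: it demonstrates that the lemma is a specialization of the general regular-chain theorem. One small presentational caveat: the Weierstrass Preparation Theorem in Section~\ref{sec:preliminaries} is wrapped in an \texttt{ifready} conditional that is switched off, so it does not actually appear in the compiled paper; you would need to cite it from a standard reference (e.g., Fischer) rather than from the preliminaries.
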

\begin{proof}
Let $(X=T^{\varsigma_i}, Y=\varphi_i(T))$, $1\leq i\leq c\leq d$, 
be the Puiseux parametrizations of $f$.
By Lemma~\ref{Lemma:V0} and Theorem~\ref{thrm:PuiseuxParamRC}, 
we have ${\rm lim}_0(V_{\rho}(f))=\cup_{i=1}^c\{(0, y)\in\C^2\mid y=\varphi_i(0)\}$.
Let $(X=T^{\sigma_i}, g_i(T))$, $i=1,\ldots,c$, be the corresponding 
Puiseux parametrizations of $f$ of accuracy $1$.
By Theorem~\ref{Theorem:finite-bound},
there exists an approximation $\widetilde{f}$ of $f$
of some finite accuracy such that $(X=T^{\sigma_i}, g_i(T))$, $i=1,\ldots,c$,
are also Puiseux parametrizations of $\widetilde{f}$ of accuracy $1$.
Thus, we have $\varphi_i(0)=g_i(0)$, $i=1,\ldots,c$.
Since $\widetilde{f}$ is also general in $Y$, by Theorem 2.3 of Walker~\cite{RW78}, we have
$\cup_{i=1}^c\{(0, y)\in\C^2\mid y=g_i(0)\}=\{(0, y)\in\C^2\mid \widetilde{f}(0,y)=0\}$.
Since $\widetilde{f}(0,y)=f(0,y)$, the Lemma holds.
\end{proof}

\begin{Lemma}
  \label{Lemma:finite}
  Let $a(X_1,\ldots,X_s)\in\C[X_1,\ldots,X_s]$.  Let
  $g_i=\sum_{j=0}^{\infty}c_{ij}T^j\in \C\langle T \rangle$.  
  We write
  $a(g_1,\ldots,g_s)$ as $\sum_{k=0}^{\infty}b_kT^k$.  To compute
  a given $b_k$, one only needs the set of coefficients $\{c_{i, j}\mid 1\leq
  i\leq s, 0\leq j\leq k\}$.
\end{Lemma}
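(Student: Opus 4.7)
The plan is to reduce the claim about the polynomial $a$ to the case of a single monomial and then exploit the fact that truncation of power series is compatible with addition and multiplication modulo $T^{k+1}$.

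First, I would observe that since $a \in \C[X_1,\ldots,X_s]$ has only finitely many nonzero terms, it suffices by linearity to prove the statement when $a$ is a single monomial $X_1^{e_1}\cdots X_s^{e_s}$. In that case, $a(g_1,\ldots,g_s) = g_1^{e_1}\cdots g_s^{e_s}$ is a finite product of power series in $\C\langle T\rangle$, so the substitution is well-defined as an element of $\C\langle T\rangle$.

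The key step is then the following truncation identity. For each $i$, let $g_i^{[k]} := \sum_{j=0}^{k} c_{ij}T^j$ denote the truncation of $g_i$ at order $k+1$. Since $g_i \equiv g_i^{[k]} \pmod{T^{k+1}}$ in $\C[[T]]$, and the ideal $\langle T^{k+1}\rangle$ is stable under addition and multiplication, an easy induction on the total degree $e_1+\cdots+e_s$ yields
\begin{equation*}
g_1^{e_1}\cdots g_s^{e_s} \equiv (g_1^{[k]})^{e_1}\cdots (g_s^{[k]})^{e_s} \pmod{T^{k+1}}.
\end{equation*}
Summing over the finitely many monomials of $a$ with their coefficients gives $a(g_1,\ldots,g_s) \equiv a(g_1^{[k]},\ldots,g_s^{[k]}) \pmod{T^{k+1}}$. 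In particular, the coefficient $b_k$ of $T^k$ on both sides agrees, and the right-hand side is a polynomial expression in $\{c_{i,j} \mid 1\leq i\leq s,\ 0\leq j\leq k\}$ alone.

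This proves the claim. No step is really a serious obstacle here: the only thing to be careful about is bookkeeping, namely verifying the congruence $g_1^{e_1}\cdots g_s^{e_s} \equiv (g_1^{[k]})^{e_1}\cdots (g_s^{[k]})^{e_s} \pmod{T^{k+1}}$, which follows from the fact that $\langle T^{k+1}\rangle$ is an ideal of $\C[[T]]$ together with $g_i - g_i^{[k]} \in \langle T^{k+1}\rangle$. The convergence of $g_1,\ldots,g_s$ is not used; the result is a purely formal statement about $\C[[T]]$, and the conclusion for $\C\langle T\rangle$ is inherited since $\C\langle T\rangle \subseteq \C[[T]]$.
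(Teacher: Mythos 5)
Your proof is correct and is simply a fleshed-out version of the paper's own (one-line) argument, which observes that no $c_{i,j}$ with $j>k$ can contribute to $b_k$; your reduction to monomials and the congruence modulo $\langle T^{k+1}\rangle$ make that observation precise. Nothing further is needed.
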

\begin{proof}
\iflongversion
  We observe that any $c_{i, j}$, where $j > k$, does not make any
  contribution to $b_k$.
\else
The proof is routine.
\fi
\end{proof}


\begin{Lemma}
\label{Lemma:general-generic}
 Let $f=a_d(X)Y^d+\cdots+a_0(X)\in\C\langle X\rangle[Y]$, where $d>0$, 
and $a_d(X)\neq 0$.
Let $\delta := \ord{a_d(X)}$.
Then ``generically'', a Puiseux parametrization of $f$ of accuracy $\tau$
can be computed from an approximation of $f$ of accuracy $\tau+\delta$.
\end{Lemma}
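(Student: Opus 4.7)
The plan is to reduce $f$ to a polynomial general in $Y$ via Lemma~\ref{Lemma:general}, apply Lemma~\ref{Lemma:generic} to the reduction (after the substitution $X = T^{\sigma}$ if the ramification index is $\sigma > 1$), and translate accuracies back to $f$. First, I would set $m := \min\{\ord{a_0},\ldots,\ord{a_d}\}$ and $\widetilde{f} := f/X^m$. Since $\ord{a_d} = \delta$, we have $m \leq \delta$, and by Lemma~\ref{Lemma:general}, $\widetilde{f} \in \C\langle X\rangle[Y]$ is general in $Y$. Because $X^m$ is nonzero in the fraction field of $\C\langle X^*\rangle$, the polynomials $f$ and $\widetilde{f}$ share the same Puiseux parametrizations in the sense of Definition~\ref{Definition:PuiseuxParaFinite}.

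Next, given a candidate parametrization $(T^{\sigma}, g(T))$ of $\widetilde{f}$ of accuracy $\tau$ with $g(T) = \sum_{k=0}^{\tau-1} b_k T^k$, I would regard $g(T)$ as the first $\tau$ coefficients of a convergent power-series solution of $\widetilde{f}(T^{\sigma}, Y) \in \C\langle T\rangle[Y]$, which is still general in $Y$ because $\widetilde{f}$ is. Applying Lemma~\ref{Lemma:generic} to $\widetilde{f}(T^{\sigma}, Y)$ then guarantees that, under its genericity condition---namely the nonvanishing of the resultants $\res{F_0, F_i, b_0}$ identified in its proof---the coefficients $b_0, \ldots, b_{\tau-1}$ are fully determined by the coefficients of $\widetilde{f}(T^{\sigma}, Y)$ up to degree $\tau - 1$ in $T$.

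Finally I would unwind the substitution and the reduction: the coefficients of $T^\ell$ in $\widetilde{f}(T^{\sigma}, Y)$ for $\ell < \tau$ come only from the terms $\widetilde{a}_{i,j} X^j$ of $\widetilde{f}$ with $\sigma j < \tau$, hence from the approximation of $\widetilde{f}$ of accuracy $\lceil \tau/\sigma \rceil \leq \tau$; and via $\widetilde{a}_i(X) = \sum_{j \geq 0} a_{i, j+m} X^j$, this in turn only needs the approximation of $f$ of accuracy $\lceil \tau/\sigma \rceil + m \leq \tau + \delta$. The main obstacle will be arguing that the genericity hypothesis transfers cleanly from $\widetilde{f}(X, Y)$ to the substituted polynomial $\widetilde{f}(T^{\sigma}, Y)$ for every admissible ramification index $\sigma \leq d$; once that is settled, the accuracy bookkeeping above yields the bound $\tau + \delta$.
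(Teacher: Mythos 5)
Your proof matches the paper's own: the published argument simply applies \textsf{MakeGeneral} (Lemma~\ref{Lemma:general}), notes that $f$ and $\widetilde f$ share the same Puiseux parametrizations, and then invokes Lemma~\ref{Lemma:generic}. Your accuracy bookkeeping $\lceil\tau/\sigma\rceil + m \leq \tau+\delta$ unpacks that one-line citation explicitly (the paper's proof does not spell it out), and the concern you raise about transferring genericity under the substitution $X=T^{\sigma}$ (or, more precisely, $X=T^{\varsigma}$ with the full ramification index, since $g(T)$ is only the image of a truncation of the genuine power-series solution of $\widetilde f(T^{\varsigma},Y)$) is a real subtlety that the paper's terse proof also leaves unaddressed.
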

\begin{proof}
Let $\widetilde{f} := {\sf MakeGeneral}(f)$. 
Observe that $f$ and $\widetilde{f}$ have the same system of 
Puiseux parametrizations. 
Then the conclusion follows from Lemma~\ref{Lemma:general} and Lemma~\ref{Lemma:generic}.
\end{proof}

\begin{Theorem}
\label{Theorem:bound-general}
Let $R := \{r_1(X_1,X_2),\ldots, 
r_{s-1}(X_1,\ldots,X_s)\}\subset\C[X_1<\cdots<X_s]$
be a strongly normalized regular chain.
For $1\leq i\leq s-1$, let $h_i := \init{r_i}$, $d_i := \deg(r_i, X_{i+1})$
and $\delta_i := \ord{h_i}$.
We define $f_i$, $2\leq i\leq s-1$, 
and $\varsigma_j$, $T_j$, $\varphi_j(T_j)$, $1\leq j\leq s-2$, as follows
\begin{itemizeshort}
\item Let $(X_1=T_1^{\varsigma_1}, X_2=\varphi_1(T_1))$ be a Puiseux parametrization of $r_1(X_1,X_2)$.
\item Let $f_i := r_i(X_1= T_1^{\varsigma_1},X_2=\varphi_1(T_1), \ldots,X_i=\varphi_{i-1}(T_{i-1}), \\
X_{i+1})$.
\item Let $(T_{i-1}=T_i^{\varsigma_i}, X_{i+1}=\varphi_i(T_i))$ be a Puiseux parametrization of $f_i$.
\end{itemizeshort}

Then we have the following results:

\begin{itemizeshort}
\item[$(i)$] Let $T_0 := X_1$, for $0\leq i\leq s-2$, 
define $g_i(T_{s-2}):=T_{s-2}^{\prod_{k=i+1}^{s-2}\varsigma_k}$, 
then we have $T_i =  g_i(T_{s-2})$.
\item[$(ii)$] We have $f_{s-1}\in\C\langle T_{s-2}\rangle[X_{s}]$.

\item[$(iii)$] There exist numbers $\tau_1,\ldots,\tau_{s-2}\in\N$
such that in order to make $f_{s-1}$ general in $X_{s}$, 
it suffices to compute the polynomial parts of $\varphi_i$ 
of accuracy $\tau_i$, $1\leq i\leq s-2$.
Moreover, if we write the algorithm ${\sf AccuracyEstimate}$ for short as $\theta$,
the accuracies $\tau_i$ can be computed in the following manner
\begin{itemizeshort}
\item let $\tau_{s-2} := (\prod_{k=1}^{s-2} \varsigma_k)\delta_{s-1}+1$
\item let $\tau_{i-1} := \max{\theta(f_i, \tau_i), (\prod_{k=1}^{i-1} \varsigma_k)\delta_{s-1}+1}$, 
for $s-2 \geq i \geq 2$.
\end{itemizeshort}

\item[$(iv)$] 
Generically, for $1\leq i\leq s-3$, we can choose $\tau_i=(\prod_{k=1}^{s-2} \varsigma_k)(\sum_{k=2}^{s-1}\delta_i)+1$.
\item[$(v)$] The indices $\varsigma_k$ can be replaced with $d_k$, $k=1,\ldots,s-2$.
\end{itemizeshort}
\end{Theorem}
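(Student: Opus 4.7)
The plan is to handle the five parts in order, since (ii)--(v) all rest on the reparametrization set up in (i). For part (i), I would argue by a short downward induction: by construction $X_1 = T_1^{\varsigma_1}$ and $T_{i-1} = T_i^{\varsigma_i}$ for each $i \geq 2$, so starting from $T_{s-2}$ and iterating yields $T_i = T_{s-2}^{\prod_{k=i+1}^{s-2}\varsigma_k}$. The base case $i = s-2$ uses the empty product equal to $1$.

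For part (ii), by (i) each $\varphi_j(T_{j-1})$, viewed as a function of $T_{s-2}$, equals $\varphi_j(T_{s-2}^{\prod_{k=j}^{s-2}\varsigma_k})$, which is a convergent power series in $T_{s-2}$ because the inner substitution has positive order (this uses the substitution homomorphism, e.g.\ Lemma~\ref{Lemma:GFsubs}, to stay inside $\C\langle T_{s-2}\rangle$). Writing $r_{s-1}$ as a polynomial in $X_s$ with coefficients in $\C[X_1,\ldots,X_{s-1}]$ and substituting these convergent series into the coefficients gives $f_{s-1} \in \C\langle T_{s-2}\rangle[X_s]$.

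Part (iii) is the main content and also the main obstacle. The key observation is that to make $f_{s-1}$ general in $X_s$, it suffices to determine the minimum $T_{s-2}$-order among its $X_s$-coefficients, and this minimum is attained (at worst) by the order of the initial $h_{s-1}$ when pulled back along the parametrization. Since $h_{s-1} \in \C[X_1]$ has order $\delta_{s-1}$ at $X_1 = 0$ and $X_1 = T_{s-2}^{\prod_{k=1}^{s-2}\varsigma_k}$, the initial of $f_{s-1}$ has $T_{s-2}$-order $(\prod_{k=1}^{s-2}\varsigma_k)\delta_{s-1}$, forcing the choice $\tau_{s-2} := (\prod_{k=1}^{s-2}\varsigma_k)\delta_{s-1}+1$. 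To propagate this backwards, I would invoke Theorem~\ref{Theorem:finite-bound}: to compute a Puiseux parametrization of $f_i$ to accuracy $\tau_i$, it suffices to know $f_i$ to accuracy $\theta(f_i, \tau_i)$, which in turn only requires knowing $\varphi_{i-1}$ to that accuracy (together with Lemma~\ref{Lemma:finite}). However, the recursion must simultaneously keep enough accuracy in $\varphi_{i-1}$ to recover the initial-order contribution downstream, which is where the $\max$ with $(\prod_{k=1}^{i-1}\varsigma_k)\delta_{s-1}+1$ enters. The subtle point is verifying that these two competing constraints combine correctly and that the recursion terminates in finitely many steps, which follows from the finiteness claim in Theorem~\ref{Theorem:finite-bound}.

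For part (iv), I would invoke Lemma~\ref{Lemma:general-generic}: generically, $\theta(f_i, \tau_i)$ reduces to $\tau_i + \delta_i'$, where $\delta_i'$ is the $T_{i-1}$-order of the initial of $f_i$. Expressing all these orders in terms of $T_{s-2}$ using part (i) and telescoping yields the uniform bound $(\prod_{k=1}^{s-2}\varsigma_k)(\sum_{k=2}^{s-1}\delta_k)+1$. Finally, for part (v), each ramification index $\varsigma_k$ divides and is therefore bounded by $d_k = \deg(r_k, X_{k+1})$ (an intrinsic bound noted in the definition of Puiseux parametrization), so replacing $\varsigma_k$ by $d_k$ throughout preserves the validity of all estimates as upper bounds and yields an a priori formula depending only on the input.
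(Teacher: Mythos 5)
Your proposal is correct and follows essentially the same route as the paper: the same downward induction for (i), the substitution homomorphism (Lemma~\ref{Lemma:GFsubs}) for (ii), the order computation $\ord{h_{s-1}(g_0(T_{s-2}))}=(\prod_k\varsigma_k)\delta_{s-1}$ combined with the backward recursion through $\theta={\sf AccuracyEstimate}$ and Lemma~\ref{Lemma:finite} for (iii), Lemma~\ref{Lemma:general-generic} for (iv), and the bound $\varsigma_k\leq d_k$ for (v). The only nitpick is that in (v) you only need (and only actually have, in general) the inequality $\varsigma_k\leq d_k$, not divisibility of $d_k$ by $\varsigma_k$.
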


\begin{proof}
We prove $(i)$ by induction.
Clearly $(i)$ holds for $i=s-2$.
Suppose it holds for $i$.
Then we have
$$
T_{i-1}  =  T_i^{\varsigma_i}
        =  \left(      T_{s-2}^{\prod_{k=i+1}^{s-2}\varsigma_k}     \right)^{\varsigma_i}
        =  \left(     T_{s-2}^{\prod_{k=i}^{s-2}\varsigma_k}     \right)
$$
Therefore $(i)$ holds also for $i-1$.
So $(i)$ holds for all $0\leq i\leq s-2$.

\iflongversion
Note that 
\begin{equation}
\label{equation:fs-1}
\begin{array}{rcl}
f_{s-1} &=& r_{s-1}(X_1= T_1^{\varsigma_1},X_2=\varphi_1(T_1), \ldots, \\
       &&  ~\qquad X_{s-2}=\varphi_{s-3} (T_{s-3}), X_{s-1}=\varphi_{s-2}(T_{s-2}), X_{s})\\
       &=& r_{s-1}(X_1=g_0(T_{s-2}),X_2=\varphi_1(g_1(T_{s-2})), \ldots,  \\
       && ~\qquad X_{s-2}=\varphi_{s-3}(g_{s-3}(T_{s-2})), \\
       && ~\qquad X_{s-1}=\varphi_{s-2}(g_{s-2}(T_{s-2})), X_{s})\\
\end{array}
\end{equation}
Since $\ord{g_i(T_{s-2})}>0$ for all $0\leq i\leq s-2$, 
by Lemma~\ref{Lemma:GFsubs}, $(ii)$ holds.
\else
It is clear that $(ii)$ holds.
\fi

Note that $g_0(T_{s-2})= T_{s-2}^{\prod_{k=1}^{s-2}\varsigma_k}$.
Since $\ord{h_{s-1}(X_1)}=\delta_{s-1}$, 
we have 
$$
\ord{h_{s-1}(X_1=g_0(T_{s-2}))}=\left(\prod_{k=1}^{s-2} \varsigma_k\right)\delta_{s-1}.
$$
Let $\tau_{s-2} := (\prod_{k=1}^{s-2} \varsigma_k)\delta_{s-1}+1$.
By Lemma~\ref{Lemma:general}, to make $f_{s-1}$ general in $X_s$, 
it suffices to 
compute the polynomial parts of the coefficients of
$f_{s-1}$ of accuracy $\tau_{s-2}$.

By Lemma~\ref{Lemma:finite},
\iflongversion
and Equation~(\ref{equation:fs-1}), 
\fi
we need to compute the polynomial parts of $\varphi_i(g_i(T_{s-2}))$, $1\leq i\leq s-2$, 
of accuracy $\tau_{s-2}$.
Since $\ord{g_i(T_{s-2})}=\prod_{k=i+1}^{s-2}\varsigma_k$, 
to achieve this accuracy, 
it's enough to compute the polynomial parts of $\varphi_i$
of accuracy $(\prod_{k=1}^i \varsigma_k)\delta_{s-1} + 1$, for $1\leq i\leq s-2$.

On the other hand, 
since $f_i = r_i(X_1= T_1^{\varsigma_1},X_2=\varphi_1(T_1), 
\ldots, X_i=\varphi_{i-1}(T_{i-1}), X_{i+1})$
and $(T_{i-1}=T_i^{\varsigma_i}, X_{i+1}=\varphi_i(T_i))$ is a Puiseux parametrization of $f_i$,
by Theorem~\ref{Theorem:finite-bound} and Lemma~\ref{Lemma:finite},
to compute the polynomial part of $\varphi_i$ of accuracy $\tau_i$, 
we need the polynomial part of $\varphi_{i-1}$ of accuracy $\theta(f_i,\tau_i)$.

Thus, take $\tau_{s-2} := (\prod_{k=1}^{s-2} \varsigma_k)\delta_{s-1}+1$ and 
$\varphi_{i-1}=
\max{\theta(f_i, \tau_i),(\prod_{k=1}^{i-1} \varsigma_k)\delta_{s-1}+1}$
for $2\leq i\leq s-2$ will guarantee $f_{s-1}$ can be made general in $X_{s}$. 
So $(iii)$ holds. 
%
By Lemma~\ref{Lemma:general-generic}, 
generically we can choose $\theta(f_i, \tau_i)=\tau_i+(\prod_{k=1}^{i-1}\sigma_k)\delta_i$, $2\leq i\leq s-2$.
Therefore $(iv)$ holds.
Since we have $\varsigma_k\leq d_k$, $1\leq k\leq s-2$, $(iv)$ holds.
\end{proof}

 \section{Algorithm}
\label{sec:principle}

In this section, we provide a complete algorithm 
for computing the non-trivial limit points 
of the quasi-component of a one-dimensional strongly normalized regular chain
based on the results of the previous sections.

\begin{algorithm}
\label{algorithm:limitatzero}
\linesnumbered
\caption{${\sf LimitPointsAtZero}$}
\KwIn{$R := \{r_1(X_1,X_2),\ldots, r_{s-1}(X_1,\ldots,X_s)\}\subset\C[X_1<\cdots<X_s]$, $s>1$,
is a strongly normalized regular chain.
}
\KwOut{
The non-trivial limit points of $W(R)$ whose $X_1$-coordinates are $0$.
}
\Begin{
let $S := \{(T_0)\}$\; 
compute the accuracy estimates $\tau_1,\ldots,\tau_{s-2}$ by Theorem~\ref{Theorem:bound-general}; 
let $\tau_{s-1}=1$\;
\For{$i$ from $1$ to $s-1$}{
    $S' := \emptyset$\;
    \For{$\Phi\in S$}{
      $f_i := r_i(X_1=\Phi_1,\ldots,X_{i}=\Phi_i, X_{i+1})$\;
      \If{$i>1$}{
         let $\delta := \ord{f_i, T_{i-1}}$;
         let $f_i := f_i/T_{i-1}^{\delta}$\;
      }
      $E := \NewtonPuiseux{f_i, \tau_i}$\;
      \For{$(T_{i-1}=\phi(T_i), X_{i+1}=\varphi(T_i))\in E$}{
           $S' :=  S'\cup \{\Phi(T_{i-1}=\phi(T_i))\cup (\varphi(T_i))\}$
      }
    }
    $S := S'$
}
   \lIf{$S=\emptyset$}{
        return $\emptyset$
   }
   \Else{
        return $\eval{S, T_{s-1}=0}$
   }
}
\end{algorithm}

\begin{algorithm}
\label{algorithm:limit}
\linesnumbered
\caption{${\sf LimitPoints}$}
\KwIn{
A strongly normalized regular chain\\
$R := \{r_1(X_1,X_2),\ldots, r_{s-1}(X_1,\ldots,X_s)\}\subset\C[X_1<\cdots<X_s]$, $s>1$.
}
\KwOut{
All the non-trivial limit points of $W(R)$.
}
\Begin{
    let $h_R := \init{R}$;
    let $L$ be the set of roots of $h_R$\;
    $S := \emptyset$\;
    \For{$\alpha\in L$}{
      $R_{\alpha} := R(X_1=X_1+\alpha)$\;
      $S_{\alpha} := \LimitPointsAtZero{R_{\alpha}}$\;
      update $S_{\alpha}$ by replacing the first coordinate of every point in it by $\alpha$\;
      $S := S\cup S_{\alpha}$
    }
    return $S$
}
\end{algorithm}

\begin{Remark}
Note that line $9$ of Algorithm~\ref{algorithm:limitatzero}
computes Puiseux parametrizations of $f_i$ of accuracy $\tau_i$.
Thus $(\phi(T_i),\varphi(T_i))$ at line $10$
cannot have negtive orders.

If the D5 principle is applied to Algorithms~\ref{algorithm:limitatzero}
and~\ref{algorithm:limit}, the limit points of $W(R)$
can be represented by a finite family of regular chains.
\end{Remark}

\begin{Proposition}
\label{Proposition:newton}
Algorithm~\ref{algorithm:limit} is correct and terminates.
\end{Proposition}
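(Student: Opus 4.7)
The plan is to separate termination from correctness, and to reduce Algorithm~\ref{algorithm:limit} to Algorithm~\ref{algorithm:limitatzero} via translation invariance. Since $R$ is strongly normalized, $h_R \in \C[X_1]$, so by Lemma~\ref{Lemma:limitWR1} we have $\limit{W(R)} \subseteq V(h_R) = \bigcup_{\alpha \in L} \{X_1 = \alpha\}$, where $L$ is the finite set of roots of $h_R$. The affine change of coordinates $X_1 \mapsto X_1 + \alpha$ carries $W(R)$ bijectively onto $W(R_\alpha)$ and commutes with Euclidean closure, so it maps $\limit{W(R)} \cap \{X_1 = \alpha\}$ onto $\limit{W(R_\alpha)} \cap \{X_1 = 0\}$. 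Hence it suffices to show that \mbox{{\sf LimitPointsAtZero}}$(R_\alpha)$ correctly returns $\limit{W(R_\alpha)} \cap \{X_1 = 0\}$ for each $\alpha$, and gluing the translated outputs gives all of $\limit{W(R)}$.

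For termination, the loop in Algorithm~\ref{algorithm:limit} runs $|L|$ times, which is finite. Inside Algorithm~\ref{algorithm:limitatzero}, the outer loop runs exactly $s-1$ times. At each iteration $i$, Theorem~\ref{Theorem:bound-general} produces a finite accuracy $\tau_i$, and the subroutine \mbox{{\sf NewtonPuiseux}} terminates at this accuracy and returns a finite list of pairs (Section~\ref{sec:newton}). Each branch adds one new coordinate to $\Phi$ and may split the partial parametrization in finitely many ways, so $S$ remains finite throughout.

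For correctness of Algorithm~\ref{algorithm:limitatzero}, the plan is to show by induction on $i$ that after the $i$-th iteration, $S$ is a system of Puiseux parametrizations of $\{r_1,\ldots,r_i\}$, in the sense of Definition~\ref{Definition:PuiseuxParamRC}, truncated to accuracy $\tau_i$ in the variable $T_i$, and restricted to components with nonnegative order. The base case $i=1$ is the standard Newton--Puiseux statement applied to $r_1$. For the inductive step, substituting a partial parametrization $\Phi$ into $r_i$ produces a polynomial $f_i \in \C\langle T_{i-1}\rangle[X_{i+1}]$ (by Lemma~\ref{Lemma:general} and the convergence of $\Phi$); the division by $T_{i-1}^{\delta}$ at line~9 makes $f_i$ general in $X_{i+1}$, so that Lemma~\ref{Lemma:Walker} applies and the Puiseux parametrizations of $f_i$ with nonnegative order account precisely for the nontrivial limit points of the fiber over $\Phi$. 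Theorem~\ref{Theorem:bound-general} guarantees that the accuracy $\tau_{i-1}$ of $\Phi$ is sufficient for $f_i$ to be correctly made general, and that $\tau_i$ is sufficient to continue the induction. When $i = s-1$ and $\tau_{s-1} = 1$, evaluation at $T_{s-1} = 0$ yields exactly $\frak{G}_R(T=0)$, which by Theorem~\ref{thrm:PuiseuxParamRC} equals ${\rm lim}_0(W(R))$.

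The main obstacle will be the bookkeeping for accuracies across iterations: the output $\varphi(T_i)$ at step $i$ is a polynomial in $T_i$, but it must be composed with the ramification from the next step, and one must verify that $\tau_i$, as specified in Theorem~\ref{Theorem:bound-general}, is indeed large enough that substituting the truncated $\varphi(T_i)$ into $r_{i+1}$ produces an approximation of $f_{i+1}$ whose Newton polygon and leading segment data agree with those of the true $f_{i+1}$. This compatibility is exactly the content of Theorem~\ref{Theorem:finite-bound} combined with Lemma~\ref{Lemma:finite}, so the remaining task is to check that the recursive definition of $\tau_i$ in Theorem~\ref{Theorem:bound-general} implements the estimate $\theta(f_i,\tau_i)$ correctly along the chain; once this verification is in place, correctness and termination both follow.
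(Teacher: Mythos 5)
Your proof is correct and follows essentially the same route as the paper: the paper's own proof is a one-line appeal to Theorems~\ref{thrm:PuiseuxParamRC}, \ref{Theorem:finite-bound}, \ref{Theorem:bound-general} and Lemma~\ref{Lemma:Walker}, and your argument simply expands that citation into the underlying translation reduction, the iterated Newton--Puiseux induction, and the accuracy bookkeeping that those results encapsulate. (One small nitpick: the translation that carries $\limit{W(R)} \cap \{X_1 = \alpha\}$ to $\limit{W(R_\alpha)} \cap \{X_1 = 0\}$ is $X_1 \mapsto X_1 - \alpha$, and the fact that $f_i \in \C\langle T_{i-1}\rangle[X_{i+1}]$ is really Lemma~\ref{Lemma:GFsubs} / Theorem~\ref{Theorem:bound-general}$(ii)$ rather than Lemma~\ref{Lemma:general}; neither affects the substance.)
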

\begin{proof}
It follows from Theorem~\ref{thrm:PuiseuxParamRC}, Theorem~\ref{Theorem:finite-bound}, Theorem~\ref{Theorem:bound-general}
and Lemma~\ref{Lemma:Walker}.
\end{proof}

\section{Experimentation}
\label{section}

We have implemented Algorithm~\ref{algorithm:limit} of Section~\ref{sec:principle},
which computes the limit points of the quasi-component of 
a one-dimensional strongly normalized regular chain.
The implementation is based on the library {\RegularChains}
and the command {\sf algcurves[puiseux]} of {\Maple}.
The code is available at \url{http://www.orcca.on.ca/~cchen/ACM13/LimitPoints.mpl}.
This preliminary implementation relies on algebraic factorization,
whereas, as suggested in~\cite{Duv89}, 
applying the D5 principle, in the spirit of triangular
decomposition algorithms, for instance~\cite{CM12}, would be sufficient
when computations need to split into different cases.
This would certainly improve performance greatly and this
enhancement is work in progress.

As pointed out in the introduction, 
the computation of the limit points of 
the quasi-component of a regular chain 
can be applied to removing redundant
components in a Kalkbrener triangular decomposition.
In Table~\ref{table:redundant}, we report on experimental results 
of this application. 

The polynomial systems listed in this table are one-dimensional 
polynomial systems selected from the literature~\cite{CGLMP07,CM12}.
For each system, we first call the {\sf Triangularize} command 
of the library {\RegularChains}, 
with the option ``{\sf 'normalized='strongly', 'radical'='yes'}''.
For the input system,
this process computes a Kalkbrener triangular decomposition ${\cal R}$
where the regular chains are strongly normalized and
their saturated ideals are radical.
Next, for each one-dimensional regular chain $R$ in the output, 
we compute the limit points $\lim(W(R))$,
thus deducing a set of regular chains $R_1, \ldots, R_e$
such the union of their quasi-components equals 
the Zariski closure $\overline{W(R)}$.
The algorithm {\sf Difference}~\cite{CGLMP07} is then called 
to test whether or not there exists a pair $R, R'$ of regular chains
of ${\cal R}$ such that the inclusion 
$\overline{W(R)} \, \subseteq \, \overline{W(R')}$ holds.

In Table~\ref{table:redundant}, the column T and \#(T) denote respectively 
the timings spent by {\sf Triangularize} and 
the number of regular chains returned by this command;
the column d-1 and d-0 denote respectively the number of 
$1$-dimensional and $0$-dimensional regular chains, 
whose sum is exactly \#(T); 
the column R and \#(R) denote respectively the timings 
spent on removing redundant components in the output of 
{\sf Triangularize} and the number of regular chains
in the output irredundant decomposition.
As we can see in the table, most of the decompositions are 
checked to be irredundant, which we could not do before
this work by means of triangular decomposition algorithms.
In addition, the three redundant 
$0$-dimensional components in the Kalkbrener triangular decomposition 
of system f-744 are successfully removed.
Therefore, we have verified experimentally the benefits provided
by the algorithms presented in this paper.

\begin{table}
\centering
\caption{Removing redundant components.}
\label{table:redundant}
\begin{tabular}{|c|c|c|c|c|c|c|}\hline
Sys   & T & \#(T)  &d-1  & d-0 & R  & \#(R) \\\hline
f-744 &           14.360 & 4 & 1 & 3 & 432.567 & 1  \\
Liu-Lorenz &      0.412 & 3 & 3 & 0 & 216.125 & 3  \\
MontesS3 &        0.072 & 2 & 2 & 0 & 0.064 & 2  \\
Neural &          0.296 & 5 & 5 & 0 & 1.660 & 5  \\
Solotareff-4a &           0.632 & 7 & 7 & 0 & 32.362 & 7  \\
Vermeer &         1.172 & 2 & 2 & 0 & 75.332 & 2  \\
Wang-1991c &      3.084 & 13 & 13 & 0 & 6.280 & 13  \\
\hline
\end{tabular}
\end{table}

\section{Concluding remarks}
\label{sec:discussion}

We conclude with a few remarks
about special cases and a generalization of the
algorithms presented in this paper.

\smallskip\noindent{\small \bf Reduction to  strongly normalized chains.}
Using the hypotheses of Lemma~\ref{Lemma:limitWR3},
we observe that one can reduce the computation
of \limit{W(R)} to that of \limit{W(N)}.
Indeed, under the assumption that \sat{R} has dimension one,
both \limit{W(R)} and \limit{W(N)} are finite.
Once the set \limit{W(N)} is computed, one can easily check which points
in \limit{W(N)} do not belong to $W(R)$ and then deduce \limit{W(R)}.
This reduction to strongly normalized regular chains 
has the advantage that $h_N$ is a univariate polynomial
in ${\C}[X_1]$, which simplifies the presentation of the basic ideas 
of our algorithms, see Section~\ref{sec:limitpoints}.
However, it has two drawbacks. First the coefficients of $N$
are generally much larger than those of $R$. Secondly,
\limit{W(N)} may also be much larger than \limit{W(R)}.
A detailed presentation of a direct computation of \limit{W(R)},
without reducing to \limit{W(N)}, will be done in a future paper.

\smallskip\noindent{\small \bf Shape lemma case.}
Here, by reference to the paper~\cite{BeckerMoraMarinariTraverso1994}
\iflongversion
(which deals with polynomial ideals of dimension zero)
\fi
we assume that, for $2 \leq i \leq e$, the polynomial $r_i$
involves only the variables $X_1, X_2, X_i$ and that ${\deg}(r_i, X_i) = 1$ holds.
In this case, computing Puiseux series expansions is required
only for the polynomial of $R$ of lower rank, namely $r_1$.
In this case, the algorithms presented in this paper are 
much simplified.
However, for the specific purpose of 
solving polynomial systems via triangular decompositions,
reducing to this Shape lemma case, via a random change of
coordinates, has a negative impact on performance
and software design, for many problems of practical interest.
In contrast, the point of view of the work initiated in this paper 
is two-fold: first, deliver algorithms that do not require
any genericity assumptions; second develop criteria that 
take advantage of specific properties of the input systems
in order to speedup computations.
Yet, in our implementation, several tricks are used to avoid
unnecessary Puiseux series expansions, such as 
applying the theorem (see \cite{GF01} p.113) 
on the continuity of the roots of a parametric polynomial.

\smallskip\noindent{\small \bf  Handling the case 
where \sat{R} has dimension greater than 1.}
From now on, \sat{R} has dimension $s-e \geq 2$.
We use the notations of Lemma~\ref{Lemma:limitWR5}
and recall that each point of \limit{W(R' \cup r_e)}
is in particular a point of \limit{W(R')}.
Since we know how to compute \limit{W(R')}
when $R'$ consists of a single polynomial,
we assume, by induction that a triangular decomposition
of  \limit{W(R')} has been computed
in the form ${W(R_1)} \, \cup \, \cdots \, \cup \,  {W(R_f)}$
for regular chains $R_1, \ldots, R_f$.

We observe that a point $p \in \limit{W(R')}$ can be ``extended''
to a point of \limit{W(R' \cup r_e)} in two ways.
First, if $p$ does not cancel the initial of $r_e$
(which can be tested algorithmically),
then, by applying the theorem (see again \cite{GF01} p.113) 
on the continuity of the roots
to $r_e$, we extend $p$ with the $X_s$-roots of $r_e$,
after specializing $(X_1,  \ldots, X_{s-1})$ to $p$.
From now on, we assume that $p$ cancels the initial $h_e$ of $r_e$.
In this case, we compute a truncated Puiseux parametrization
about $p$ using the regular chain $R_i$ such that 
$p \in W(R_i)$ holds.
After substitution into the polynomial $r_e$,
we apply Puiseux theorem and compute the 
limit points of $W(R' \cup r_e)$ extending $p$, in the manner
of the algorithms of Section~\ref{sec:principle}.

There are new challenges, however, w.r.t. to the one-dimensional case.
First, parametrizations may involve now more than one parameter.
When this happens, one should use  
Jung-Abhyankar theorem~\cite{APGR00}
instead of the Puiseux theorem.
The second difficulty is that $\limit{W(R')} \, \cap \, V(h_e)$
may be infinite.
This will not happen, however, if \sat{R'} has dimension at most $2$
and $h_e$ is regular w.r.t. \sat{R'}.
This second assumption can be regarded as a genericity assumption.
Thus the algorithms presented can easily be extended to dimension two,
under that assumption, which can be tested algorithmically.
Overcoming in higher dimensions this cardinality issue 
with $\limit{W(R')} \, \cap \, V(h_e)$, 
requires to understand which ``configurations'' are essentially the same.
Since  \limit{W(R)}, as an algebraic set, 
can be described by finitely many regular chains,
this is, indeed, possible and work in progress.

\ifready
\input{acknowledgment}
\fi


\bibliographystyle{plain}

\end{document}